\newcommand{\extR}{\overline{\mathbb{R}}}
\renewcommand{\Re}{\operatorname{Re}}
\renewcommand{\Im}{\operatorname{Im}}
\newcommand{\Ind}{\operatorname{Ind}}
\newcommand{\Indp}{\operatorname{Indp}}
\newcommand{\jump}{\mathrm{jump}}
\newcommand{\SRemS}{\mathrm{SRemS}}
\newcommand{\SRemSE}{\mathrm{SRemSE}}
\newcommand{\Var}{\mathrm{Var}}
\newcommand{\Der}{\mathrm{Der}}
\newcommand{\Num}{\mathrm{Num}}
\newcommand{\NumD}{\mathrm{NumD}}
\renewcommand{\mod}{\mathbin{\mathrm{mod}}}
\newcommand{\lc}{\mathrm{lc}}
\begin{document}

\title[Counting Polynomial Roots]{Counting Polynomial Roots in Isabelle/HOL: A Formal Proof of the Budan-Fourier Theorem}


\author{Wenda Li}
\affiliation{
  \department{Department of Computer Science and Technology}              
  \institution{University of Cambridge}            
  \country{United Kingdom}                    
}
\email{wl302@cam.ac.uk}          

\author{Lawrence C. Paulson}
\affiliation{
	\department{Department of Computer Science and Technology}              
	\institution{University of Cambridge}            
	\country{United Kingdom}                    
}
\email{lp15@cam.ac.uk}         

\begin{abstract}
Many problems in computer algebra and numerical analysis can be reduced to counting or approximating the real roots of a polynomial within an interval. Existing verified root-counting procedures in major proof assistants are mainly based on the classical Sturm theorem, which only counts distinct roots. 

In this paper, we have strengthened the root-counting ability in Isabelle/HOL by first formally proving the Budan-Fourier theorem. Subsequently, based on Descartes' rule of signs and Taylor shift, we have provided a verified procedure to efficiently over-approximate the number of real roots within an interval, counting multiplicity. For counting multiple roots exactly, we have extended our previous formalisation of Sturm's theorem. Finally, we combine verified components in the developments above to improve our previous certified complex-root-counting procedures based on Cauchy indices. We believe those verified routines will be crucial for certifying programs and building tactics.
\end{abstract}

\begin{CCSXML}
<ccs2012>
<concept>
<concept_id>10011007.10011006.10011008</concept_id>
<concept_desc>Software and its engineering~General programming languages</concept_desc>
<concept_significance>500</concept_significance>
</concept>
<concept>
<concept_id>10003456.10003457.10003521.10003525</concept_id>
<concept_desc>Social and professional topics~History of programming languages</concept_desc>
<concept_significance>300</concept_significance>
</concept>
</ccs2012>
\end{CCSXML}

\ccsdesc[500]{Software and its engineering~General programming languages}
\ccsdesc[300]{Social and professional topics~History of programming languages}

\keywords{formal verification, theorem proving, Isabelle, the Budan-Fourier theorem, Descartes' rule of signs, counting polynomial roots}  

\maketitle

\section{Introduction}

 Counting the real and complex roots of a univariate polynomial has always been a fundamental task in computer algebra and numerical analysis. For example, given a routine that counts the number of real roots of a polynomial within an interval, we can compute each real root to arbitrary precision through bisection, in which sense we have solved the polynomial equation. Another example would be the Routh-Hurwitz stability criterion \cite[Section 23]{Arnold:1992tf}\cite[Chapter 9]{Marden:1949ui}, where the stability of a linear system can be tested by deciding the number of complex roots of the characteristic polynomial within the left half-plane (left of the imaginary axis).
 
 Numerous methods have been invented in the symbolic and numerical computing community to efficiently count (or test) real and complex roots of a polynomial \cite{Yap:2011in,Wilf:1978fy,Collins:1992kf}. However, in the theorem proving community, where procedures are usually formally verified in a foundational proof assistant (e.g., Coq, Isabelle, HOL and PVS), our choices are typically limited to relying on Sturm's theorem to count distinct roots within an interval through signed remainder sequences.
 
 In this paper, we aim to reinforce our root-counting ability in the Isabelle theorem prover \cite{paulson1994isabelle}. In particular, our main contributions are the following:
 \begin{itemize}
 	\item We have mechanised a proof of the Budan-Fourier theorem and a subsequent roots test based on Descartes' rule of signs and Taylor shift. This roots test efficiently over-approximates the number of real roots within an open interval, counting multiplicity.  
 	\item We have made a novel extension to our previous formalisation of Sturm's theorem to count real roots \emph{with} multiplicity.
 	\item Benefited from the developments above, we have extended our previous verified complex-root-counting procedures to more general cases: zeros on the border are allowed when counting roots in a half-plane; we can now additionally count roots within a ball.
 \end{itemize}
All results of this paper have been formalised in Isabelle/HOL without using extra axioms, and the source code is available from the following URL:
\begin{center}
	\url{https://bitbucket.org/liwenda1990/src-cpp-2019}
\end{center}
To reuse the results in this paper, consult our entries in the Archive of Formal Proofs \cite{Budan_Fourier-AFP, Count_Complex_Roots-AFP}, which we will keep updating.

This paper continues as follows: after introducing some frequently used notations (\S\ref{sec:notations}), we first present a formal proof of the Budan-Fourier theorem (\S\ref{sec:budan_fourier}), which eventually leads to the Descartes roots test. Next, we extend our previous formalisation of the classical Sturm theorem to count multiple real roots (\S\ref{sec:extending_sturm}). As an application, we apply those newly-formalised results to improve our previous complex-root-counting procedures (\S\ref{sec:counting_complex_roots}). After that, we discuss related work (\S\ref{sec:related_work}) and some experiments (\S\ref{sec:experiments}). Finally, we conclude the paper in \S\ref{sec:conclusion}.

\section{Notations} \label{sec:notations}
Below, we will present definitions and proofs in both natural language and the formal language of Isabelle. Some common notations are as follows:
\begin{itemize}
	\item we often use \isa{p} and \isa{q} in our proof scripts to denote polynomials. However, when presenting them in mathematical formulas, we will switch to capitalised letters---$P$ and $Q$.
	\item \isa{poly p a} in Isabelle means $P(a)$: the value of the univariate polynomial $P$ evaluated at the point $a$.
	\item $\mu(a,P)$ denotes the multiplicity/order of $a$ as a root of the polynomial $P$. In Isabelle, this becomes \isa{order a p}.
	\item $\extR$ denotes the extended real numbers: 
	\[
			\extR = \mathbb{R} \cup \{-\infty,+\infty \}.
	\]
	\item $\Num_{\mathbb{R}}(P;S)$ and $\Num_{\mathbb{C}}(P;S)$ denote the number of real and complex roots of~$P$ \emph{counting multiplicity} within the set $S$, while in Isabelle they both correspond to \isa{proots\_count p s}---they are distinguished by  the type of the polynomial \isa{p}.
	\item Similarly, $\NumD_{\mathbb{R}}(P;S)$ and $\NumD_{\mathbb{C}}(P;S)$ denote the number of \emph{distinct} real and complex roots of $P$ within $S$. In Isabelle, they correspond to
\begin{isabelle}
	card (proots\_within p s)
\end{isabelle}
differentiated by the type of the polynomial \isa{p}.
\end{itemize}

Many of the theorems presented in this paper will involve sign variations, so we give a definition here:
\begin{definition}[Sign variations] \label{def:sign_variations}
	Given a list of real numbers $[a_0,a_1,..., a_n]$, we use $\Var([a_0,a_1,...,a_n])$ to denote the number of \emph{sign variations} after dropping zeros. Additionally, we abuse the notation by letting
	\[
	\begin{aligned}
	\Var([P_0,P_1,...,P_n];a) &= \Var([P_0 (a),P_1 (a),...,P_n (a)]) \\
	\Var([P_0,P_1,...,P_n];a,b) &= \Var([P_0,P_1,...,P_n];a) \\ 
	& \qquad - \Var([P_0,P_1,...,P_n];b),
	\end{aligned}
	\]
	where $[P_0,P_1,...,P_n]$ is a sequence of univariate polynomials, and $\Var([P_0,P_1,...,P_n];a)$ is interpreted as the number of sign variations of $[P_0,P_1,...,P_n]$ evaluated at $a$. Finally, given $P(x) = a_0 + a_1 x + \cdots + a_n x^n$, we also use $\Var(P)$ to denote sign variations of the coefficient sequence of $P$:
	\[
	\Var(P) = \Var([a_0,a_1,..., a_n]).
	\]
\end{definition}

\begin{example}
	By Definition \ref{def:sign_variations}, we can have calculations like the following :
	\[
	\begin{split}
			\Var([1,-2,0,3]) &= \Var([1,-2,3]) = 2,\\
			\Var([x^2,x -2];0,1) &= \Var([x^2,x -2];0) - \Var([x^2, x -2];1) \\
				& = \Var([0,-2]) - \Var([1,-1]) \\
				&= 0 - 1 = -1,\\	
			\Var(1-x^2+2 x^3) & = \Var([1,0,-1,2]) = 2. \\
	\end{split}
	\]
\end{example}

\section{From the Budan-Fourier Theorem to the Descartes Roots Test} \label{sec:budan_fourier}
In this section, we first formalise the proof of the Budan-Fourier theorem. We then apply this to derive Descartes' rule of signs, which effectively over-approximates the number of positive real roots (counting multiplicity) of a real polynomial by calculating the sign variations of its coefficient sequence. We then use this to show the \emph{Descartes roots test}\footnote{ There does not seem to be a uniform name for this test \cite{akritas2008various,akritas2008new,Collins:2002ko}---here we follow the one used in Arno Eigenwillig's PhD thesis \cite{,Eigenwillig:2008tw} where he refers this test as "the Descartes test for roots".}: given a polynomial $P \in \mathbb{R}[x]$ of degree $n$ and a bounded interval $I = (a,b)$, we can apply Descartes' rule of signs to a base-transformed polynomial
\begin{equation} \label{eq:taylor_shift}
P_I (x) = (x+1) ^ n P \left (\frac{a x + b}{x+1} \right ),
\end{equation}
to over-approximate the number of real roots of $P$ over $(a,b)$. Note that the base transformation (\ref{eq:taylor_shift}) is commonly referred as \emph{Taylor shift} in the literature \cite{Kobel:2016im}.

Our formal proof of the Budan-Fourier theorem and Des\-cartes' rule of signs roughly follows the textbook by Basu et al. \cite{Basu:2006bo}, while that of the Descartes roots test is inspired by various sources \cite{Collins:1976gx,Eigenwillig:2008tw, Kobel:2016im}.

\subsection{The Budan-Fourier Theorem}

\begin{definition}[Fourier sequence]
	Let $P$ be a univariate polynomial of degree $n$. The \emph{Fourier sequence} of $P$ is generated through polynomial derivatives:
	\[
		\Der(P) = [P, P',..., P^{(n)}].
	\]
\end{definition}

\begin{theorem}[The Budan-Fourier theorem]
	\label{thm:budan_fourier}
	Let $P \in \mathbb{R}[x]$, $a, b$ be two extended real numbers (i.e., $a, b \in \extR$) such that $a<b$. Through Fourier sequences and sign variations, the Budan-Fourier theorem over-approximates $\Num_{\mathbb{R}}(P;(a,b])$ and the difference is an even number:
	\begin{itemize}
		\item $\Var(\Der(P);a, b) \geq \Num_{\mathbb{R}}(P;(a,b])$
		\item and $\Var(\Der(P);a, b) - \Num_{\mathbb{R}}(P;(a,b])$ is even.
	\end{itemize}
\end{theorem}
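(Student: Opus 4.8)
The plan is to prove the Budan-Fourier theorem by a sweep argument: fix the right endpoint conceptually and study how the quantity $\Var(\Der(P);a,b)$ changes as the left variable $x$ decreases continuously from $b$ toward $a$. The key observation is that $\Var(\Der(P);x)$, viewed as a function of $x$, is piecewise constant, and its value can only change when $x$ passes through a point where one of the polynomials in the Fourier sequence $[P,P',\ldots,P^{(n)}]$ vanishes. Since there are finitely many such points in the interval, it suffices to understand the local jump in $\Var(\Der(P);x)$ as $x$ crosses a single such point $c$, and then sum these local contributions. This reduces the global theorem to a finite-case local analysis.

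The heart of the argument is therefore a local lemma at an arbitrary point $c$: I would compare $\Var(\Der(P);c-\epsilon)$ and $\Var(\Der(P);c+\epsilon)$ for small $\epsilon$, and show two things. First, the net change $\Var(\Der(P);c-\epsilon)-\Var(\Der(P);c+\epsilon)$ is at least the multiplicity $\mu(c,P)$ of $c$ as a root of $P$ itself (this captures the contribution to $\Num_{\mathbb{R}}$). Second, this net change exceeds $\mu(c,P)$ only by an even number. Summing the first fact over all roots $c$ in $(a,b]$ yields the inequality $\Var(\Der(P);a,b)\geq \Num_{\mathbb{R}}(P;(a,b])$, and summing the parity statement, together with the fact that the change at a non-root of $P$ is itself even, yields the parity claim. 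The local analysis itself proceeds by looking at the signs of consecutive derivatives $P^{(i)},P^{(i+1)},\ldots$ immediately to the left and right of $c$: at a point where $P^{(i)}(c)=0$ but $P^{(i+1)}(c)\neq 0$, the sign pattern across $c$ can be tabulated, and one counts how a zero entry, sandwiched between its neighbours, either absorbs or creates a sign variation on each side.

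A clean way to organise the bookkeeping is to group the derivatives according to which of them vanish at $c$ and reason about each maximal block of consecutive vanishing derivatives separately; the nonvanishing "boundary" entries of each block, together with the monotonicity relationships $P^{(n)}$ down to $P$ implied by Rolle-type reasoning on the signs, pin down exactly how many variations are lost. One should treat carefully the degenerate edge cases: the top of the list (where $P^{(n)}$ is a nonzero constant and never vanishes), and the interaction between adjacent blocks. For the extended-real endpoints $a,b\in\extR$, I would handle $\pm\infty$ by noting that the signs of the $P^{(i)}$ stabilise for large $|x|$ (governed by leading coefficients and degrees), so $\Var(\Der(P);\pm\infty)$ is well defined as a limit and the finite-point sweep extends without difficulty.

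The main obstacle I anticipate is the local sign-variation bookkeeping at a point where several consecutive derivatives vanish simultaneously: keeping the inequality and the parity statement simultaneously tight across an arbitrary block of zeros requires a careful inductive or case-based count, and it is easy to be off by one or to mishandle whether the flanking signs agree or differ. Formalising this discrete combinatorial lemma about sign variations of a finite list under a single controlled perturbation — rather than the analytic sweep, which is comparatively routine — is where I expect the real work and the greatest risk of subtle error to lie.
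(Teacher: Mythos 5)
Your proposal is correct in substance and shares the paper's global skeleton: both reduce the theorem to a local analysis at the finitely many points where some member of $\Der(P)$ vanishes, and then piece the local statements together over the interval (the paper does this by induction on the number of such roots in $(a,b)$, taking the largest root $b'$ and recursing, which is your sweep in inductive clothing). Where you genuinely diverge is in how the local lemma is proved. You propose the classical block-combinatorial argument: group the derivatives vanishing at $c$ into maximal consecutive blocks, tabulate the sign patterns on each side using Taylor-expansion/Rolle reasoning, and count variations lost per block. The paper instead proves its local lemmas by induction on $\deg(P)$, exploiting the fact that $\Der(P)$ is just $P$ prepended to $\Der(P')$: each inductive step only compares the first two entries $P$ and $P'$, and the multi-block bookkeeping you (rightly) identify as the main risk is absorbed entirely by the induction hypothesis. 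This is precisely what makes the paper's route more robust to formalise; your monolithic block count is provable but is where off-by-one and flanking-sign errors live. One refinement you will need in either route: your local statement compares $\Var(\Der(P);c-\epsilon)$ with $\Var(\Der(P);c+\epsilon)$, but since $\Var(\Der(P);a,b)$ evaluates the sequence \emph{exactly} at the endpoints, and since the interval $(a,b]$ counts a root at $b$ but not at $a$, the two-sided jump must be split into the two one-sided facts the paper isolates as separate lemmas, namely $\Var(\Der(P);c-\epsilon,c)\geq\mu(c,P)$ with even difference, and $\Var(\Der(P);c,c+\epsilon)=0$; this left/right asymmetry is exactly what makes the half-open interval come out right, and it is implicit but not spelled out in your plan.
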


To prove Theorem \ref{thm:budan_fourier}, the key idea is to examine sign variations near a root of $P$: 
\[ \Var(\Der(P);c-\epsilon,c) \quad\text{and}\quad\Var(\Der(P);c,c+\epsilon), \] 
where $c$ is a possible root of $P$ and $\epsilon$ is a small real number.
	 
Regarding $\Var(\Der(P);c-\epsilon,c)$, the property we have derived in Isabelle/HOL is the following:
\begin{lemma} [\isa{budan{\isacharunderscore}fourier{\isacharunderscore}aux{\isacharunderscore}left\isacharprime}]
	\label{thm:budan_fourier_aux_left'}
	\begin{isabelle}
		\isanewline
		\ \ \isakeyword{fixes}\ c\ d\isactrlsub {\isadigit{1}}{\isacharcolon}{\isacharcolon}real\ \isakeyword{and}\ p{\isacharcolon}{\isacharcolon}{\isachardoublequoteopen}real\ poly{\isachardoublequoteclose}\isanewline
		\ \ \isakeyword{assumes}\ {\isachardoublequoteopen}d\isactrlsub {\isadigit{1}}\ {\isacharless}\ c{\isachardoublequoteclose}\ \isakeyword{and}\ {\isachardoublequoteopen}p\ {\isasymnoteq}\ {\isadigit{0}}{\isachardoublequoteclose}\isanewline
		\ \ \isakeyword{assumes}\ non{\isacharunderscore}zero{\isacharcolon}{\isachardoublequoteopen}{\isasymforall}x{\isachardot}\ d\isactrlsub {\isadigit{1}}\ {\isasymle}\ x\ {\isasymand}\ x\ {\isacharless}\ c\ {\isasymlongrightarrow}\ \isanewline
		\ \ \ \ \ \ \ \ \ \ \ \ \ \ {\isacharparenleft}{\isasymforall}q\ {\isasymin}\ set\ {\isacharparenleft}pders\ p{\isacharparenright}{\isachardot}\ poly\ q\ x\ {\isasymnoteq}\ {\isadigit{0}}{\isacharparenright}{\isachardoublequoteclose}\isanewline
		\ \ \isakeyword{shows}\ {\isachardoublequoteopen}changes{\isacharunderscore}itv{\isacharunderscore}der\ d\isactrlsub {\isadigit{1}}\ c\ p\ {\isasymge}\ order\ c\ p{\isachardoublequoteclose}\ \isanewline
		\ \ \ \ \ \ \ \ \ {\isachardoublequoteopen}even\ {\isacharparenleft}changes{\isacharunderscore}itv{\isacharunderscore}der\ d\isactrlsub {\isadigit{1}}\ c\ p\ {\isacharminus}\ order\ c\ p{\isacharparenright}{\isachardoublequoteclose}
	\end{isabelle}
\end{lemma}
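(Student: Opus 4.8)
The plan is to reduce the two-sided statement to a purely local computation at $c$ and then to a combinatorial count of sign variations. First I would exploit the hypothesis \isa{non\_zero}: since every member of $\Der(P)$ is continuous and has no root on $[d_1,c)$, the intermediate value theorem forces each $P^{(j)}$ to keep a constant sign throughout that interval, so $\Var(\Der(P);d_1) = \Var(\Der(P);x)$ for every $x \in [d_1,c)$. Consequently there is a small $\epsilon>0$ with $c-\epsilon \in [d_1,c)$ and, unfolding Definition~\ref{def:sign_variations}, $\Var(\Der(P);d_1,c) = \Var(\Der(P);c-\epsilon) - \Var(\Der(P);c)$. It therefore suffices to compare the sign variations just to the left of $c$ with those exactly at $c$.

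The second ingredient is a local sign lemma obtained from the factorisation $Q(x) = (x-c)^{\nu}R(x)$ with $R(c)\neq 0$, where $\nu = \mu(c,Q)$: for small $\epsilon>0$ one has $\sgn(Q(c-\epsilon)) = (-1)^{\nu}\,\sgn(Q^{(\nu)}(c))$, since $R(c)$ and $Q^{(\nu)}(c)$ share a sign. Applying this to each derivative $P^{(j)}$ that vanishes at $c$---whose first non-vanishing successor is $P^{(j+\nu)}$---pins down the entire sign pattern of $\Der(P)$ at $c-\epsilon$ in terms of the signs of the non-vanishing entries at $c$.

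With the signs in hand I would decompose the list $\Der(P)$, read at $c$, into maximal blocks of consecutive entries that vanish at $c$, each flanked by non-vanishing ``anchor'' entries; because the final entry $P^{(n)}$ is a nonzero constant, every zero-block has a non-vanishing successor, so there is no trailing block. The initial block consists of exactly $P,P',\dots,P^{(m-1)}$ where $m=\mu(c,P)$ (it is empty when $c$ is not a root); the local sign lemma makes these entries, together with $P^{(m)}$, strictly alternate at $c-\epsilon$, contributing $m$ variations there against $0$ at $c$, for a net change of exactly $m$. For an interior zero-block of length $k$ between anchors of signs $\rho$ and $\tau$, the same alternation yields a net change of $k + [\rho \neq (-1)^{k}\tau] - [\rho \neq \tau]$; a short case split on the parity of $k$ shows this quantity is always even and nonnegative. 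Summing over all blocks gives $\Var(\Der(P);c-\epsilon) - \Var(\Der(P);c)$ equal to $m$ plus an even nonnegative number, which is simultaneously $\geq m$ and congruent to $m$ modulo $2$, yielding both conclusions.

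I expect the main obstacle to be the combinatorial bookkeeping of $\Var$ over lists that still contain zeros: cleanly isolating each block, accounting for the anchor entries shared between adjacent blocks without double counting, and verifying the alternation pattern in each block. In Isabelle this is most naturally handled by an induction over \isa{pders p} carrying a strengthened invariant that tracks both the running sign-variation difference and its parity, rather than by an explicit global partition; reconciling such a recursion with the definition of \isa{changes\_itv\_der} is where I anticipate most of the formalisation effort to lie.
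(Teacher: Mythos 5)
Your argument is mathematically sound, but it follows a genuinely different route from the paper's. The paper proves this lemma by induction on the degree of $P$: since $\Der(P)$ is just $P$ followed by $\Der(P')$, the induction hypothesis (applied to $P'$) absorbs the entire tail of the Fourier sequence, and the inductive step only has to relate $\Var(\Der(P);x)$ to $\Var(\Der(P');x)$ at the two endpoints, splitting on whether $P(c)=0$ and, when $P(c)\neq 0$, on the sign of $P(c)$ versus the sign of $P'$ just left of $c$ and the parity of $\mu(c,P')$. You instead argue directly and globally: you first use the no-root hypothesis to replace $d_1$ by $c-\epsilon$, then pin down all signs just left of $c$ via the factorisation $Q(x)=(x-c)^{\nu}R(x)$ (equivalently, the first non-vanishing derivative at $c$), and finally partition $\Der(P)$, read at $c$, into maximal zero-blocks, showing the initial block contributes exactly $\mu(c,P)$ to the variation difference and each interior block an even nonnegative amount. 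Both proofs rest on the same local sign fact, but the bookkeeping is organised very differently, and each organisation buys something: your block decomposition exhibits explicitly \emph{where} the even surplus $\Var(\Der(P);d_1,c)-\mu(c,P)$ comes from (one even contribution per interior zero-block), information the induction leaves implicit; the paper's induction, on the other hand, matches the recursive structure of \isa{pders} and \isa{changes}, so the global partition into blocks with shared anchors---precisely the bookkeeping you flag as the main formalisation obstacle---never has to be constructed at all. It is worth noting that your closing remark, that in Isabelle one would replace the explicit partition by an induction over \isa{pders p} carrying a strengthened invariant tracking the variation difference and its parity, is in effect a description of the proof the paper actually gives.
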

\noindent where 
\begin{itemize}
	\item \isa{order c p} is $\mu(c,P)$: the order/multiplicity of $c$ as a root of the polynomial \isa{p}, as we described in \S\ref{sec:notations};
	\item \isa{changes\_itv\_der d\isactrlsub {\isadigit{1}} c p} stands for $\Var(\Der(P);d_1,c)$;
	\item the assumption \isa{non\_zero} asserts that $Q(x) \neq 0$ (i.e., \isa{poly q x \isasymnoteq\ 0}) for all $x \in [d_1, c)$ and $Q \in \Der(P)$. Here, \isa{pders p} is the Fourier sequence (i.e., $\Der(P)$) and \isa{set (pders p)} converts this sequence/list into a set of polynomials.
\end{itemize}
Essentially, $d_1$ in Lemma \ref{thm:budan_fourier_aux_left'} can be considered as $c-\epsilon$, since $d_1$ is asserted to be closer to $c$ from the left than any root of polynomials in $\Der(P)$. Therefore, Lemma \ref{thm:budan_fourier_aux_left'} claims that $\Var(\Der(P);c-\epsilon,c)$ always exceeds $\mu(c, P)$ by an even number. 

\begin{proof}[Proof of Lemma \ref{thm:budan_fourier_aux_left'}]
	By induction on the degree of $P$. 
	For the base case (i.e., the degree of $P$ is zero), the proof is trivial since both $\Var(\Der(P);d_1,c)$ and $\mu(c,P)$ are equal to 0.
	
	For the inductive case, through the induction hypothesis, we have
	\begin{multline} \label{eq:budan_fourier_aux_left_1}
			\Var(\Der(P');d_1,c) \geq \mu(c,P') \\ 
			\wedge \mathrm{even}(\Var(\Der(P');d_1,c) - \mu(c,P')).
	\end{multline}
	First, we consider the case when $P(c)=0$. In this case, we can derive 
	\begin{equation}  \label{eq:budan_fourier_aux_left_2}
		\mu(c,P) = \mu(c,P') + 1,
	\end{equation}
	\begin{equation}  \label{eq:budan_fourier_aux_left_3}
		\Var(\Der(P);d_1) = \Var(\Der(P');d_1) + 1,
	\end{equation}
	\begin{equation}  \label{eq:budan_fourier_aux_left_4}
		\Var(\Der(P);c) = \Var(\Der(P');c).
	\end{equation}
	Combining (\ref{eq:budan_fourier_aux_left_1}), (\ref{eq:budan_fourier_aux_left_2}), (\ref{eq:budan_fourier_aux_left_3}) and (\ref{eq:budan_fourier_aux_left_4}) yields
	\begin{multline} 
			\Var(\Der(P);d_1,c) \geq \mu(c,P)\\
			 \wedge \mathrm{even}(\Var(\Der(P);d_1,c) - \mu(c,P)),
			 \label{eq:budan_fourier_aux_left_5}
	\end{multline}	
	which concludes the proof.
	
	As for $P(c) \neq 0$, we can similarly have
	\begin{multline} \label{eq:budan_fourier_aux_left_6}
		\Var(\Der(P);c) 
		\\= 
		\begin{dcases}
		\Var(\Der(P');c), & \mbox{if } P'(c+\epsilon)>0 \\
		& \quad \leftrightarrow P(c)>0,\\
		\Var(\Der(P');c) + 1, &  \mbox{ otherwise,}\\
		\end{dcases}
	\end{multline}
	\begin{multline} \label{eq:budan_fourier_aux_left_7}
		\Var(\Der(P);d_1) 
		\\= 
		\begin{dcases}
		\Var(\Der(P');d_1), & \mbox{if } \mathrm{even}(\mu(c,P')) \\
		& \quad \leftrightarrow	 P'(x+\epsilon)>0 \\
		& \quad \leftrightarrow P(c)>0,\\
		\Var(\Der(P');d_1) + 1, &  \mbox{ otherwise.}\\
		\end{dcases}
	\end{multline}
	where $\leftrightarrow$ is the equivalence function in propositional logic. By putting together (\ref{eq:budan_fourier_aux_left_2}), (\ref{eq:budan_fourier_aux_left_6}) and (\ref{eq:budan_fourier_aux_left_7}), we can derive (\ref{eq:budan_fourier_aux_left_5}) through case analysis, and conclude the whole proof.
\end{proof}

Considering $\Var(\Der(P);c,c+\epsilon)$, we have an analogous proposition: 
\begin{lemma}[\isa{budan{\isacharunderscore}fourier{\isacharunderscore}aux{\isacharunderscore}right}]
	\label{thm:budan_fourier_aux_right}
	\begin{isabelle}
		\isanewline
		\ \ \isakeyword{fixes}\ c\ d\isactrlsub {\isadigit{2}}{\isacharcolon}{\isacharcolon}real\ \isakeyword{and}\ p{\isacharcolon}{\isacharcolon}{\isachardoublequoteopen}real\ poly{\isachardoublequoteclose}\isanewline
		\ \ \isakeyword{assumes}\ {\isachardoublequoteopen}c\ {\isacharless}\ d{\isadigit{2}}{\isachardoublequoteclose}\ \isakeyword{and}\ {\isachardoublequoteopen}p\ {\isasymnoteq}\ {\isadigit{0}}{\isachardoublequoteclose}\isanewline
		\ \ \isakeyword{assumes}\ {\isachardoublequoteopen}{\isasymforall}x{\isachardot}\ c\ {\isacharless}\ x\ {\isasymand}\ x\ {\isasymle}\ d\isactrlsub {\isadigit{2}}\ {\isasymlongrightarrow}\ \isanewline
		\ \ \ \ \ \ \ \ \ \ \ \ \ \ {\isacharparenleft}{\isasymforall}q\ {\isasymin}\ set\ {\isacharparenleft}pders\ p{\isacharparenright}{\isachardot}\ poly\ q\ x\ {\isasymnoteq}\ {\isadigit{0}}{\isacharparenright}{\isachardoublequoteclose}\ \ \isanewline
		\ \ \isakeyword{shows}\ {\isachardoublequoteopen}changes{\isacharunderscore}itv{\isacharunderscore}der\ c\ d\isactrlsub {\isadigit{2}}\ p\ {\isacharequal}\ {\isadigit{0}}{\isachardoublequoteclose}
	\end{isabelle}
\end{lemma}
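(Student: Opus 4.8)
The plan is to prove the stated equation $\Var(\Der(P);c,d_2)=0$ by showing that the two quantities it abbreviates coincide, i.e.\ $\Var(\Der(P);c)=\Var(\Der(P);d_2)$. I would split this into two essentially independent facts. First, on the half-open interval $(c,d_2]$ the hypothesis forbids any member of $\Der(P)$ from vanishing, so by the intermediate value theorem each $P^{(k)}$ keeps one constant nonzero sign throughout $(c,d_2]$; consequently the evaluated sign pattern, and hence $\Var(\Der(P);x)$, is literally constant for every $x\in(c,d_2]$, and in particular equals $\Var(\Der(P);d_2)$ and agrees with the value taken immediately to the right of $c$. Second---and this is the real content---I would establish a \emph{right-continuity} property at the left endpoint, $\Var(\Der(P);c)=\Var(\Der(P);c+\epsilon)$ for all sufficiently small $\epsilon>0$, even though some derivatives may vanish at $c$ while none does just to its right.

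For the right-continuity step the governing observation is a Taylor/Rolle sign-propagation fact: if $P^{(i)}(c)\neq 0$, $P^{(i+1)}(c)=\cdots=P^{(i+m)}(c)=0$, and $P^{(i+m+1)}(c)\neq 0$, then for small $\epsilon>0$ each entry $P^{(i+1)}(c+\epsilon),\ldots,P^{(i+m)}(c+\epsilon)$ carries the sign of $P^{(i+m+1)}(c)$, because the leading nonzero Taylor coefficient of each such derivative at $c$ is a positive multiple of $P^{(i+m+1)}(c)$. Since $P$ has degree $n$, its top derivative $P^{(n)}$ is a nonzero constant, so every maximal run of zeros at $c$ is bounded on the right by a derivative that is nonzero at $c$; the initial run $P,\ldots,P^{(k-1)}$ preceding the least index $k$ with $P^{(k)}(c)\neq 0$ is likewise monochromatic of sign $\sgn P^{(k)}(c)$ just to the right, and is dropped at $c$. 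I would then argue block by block: for two consecutive entries that are nonzero at $c$, the intervening zero run contributes no variation to the right (it is monochromatic and shares the sign of its right-hand neighbour), so the single variation possibly arising across the run is controlled by the same pair of signs at $c$ and at $c+\epsilon$. Summing over all runs yields $\Var(\Der(P);c)=\Var(\Der(P);c+\epsilon)$, and combining this with the constancy on $(c,d_2]$ closes the proof.

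I expect the main obstacle to be the formalisation of this block-by-block bookkeeping rather than the analytic sign-propagation itself. Reasoning about $\Var$ after dropping zeros forces one to track maximal runs of zeros inside the evaluated list and to show that each run is invisible to the variation count on the right. Phrasing this cleanly---for instance by an induction on the derivative sequence that peels off one polynomial at a time, mirroring the structure of the companion Lemma~\ref{thm:budan_fourier_aux_left'}, and carries as an invariant the relation between $\sgn P^{(k)}(c)$ and the sign of $P^{(k)}$ just to the right---is where most of the formal effort will go. The supporting analytic ingredients (existence of the least nonzero derivative at $c$, and the sign equalities above) should follow from standard polynomial-continuity and derivative facts already available in the library.
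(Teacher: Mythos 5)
Your proof is correct, but it takes a genuinely different route from the paper's. The paper dispatches Lemma~\ref{thm:budan_fourier_aux_right} in one line: like Lemma~\ref{thm:budan_fourier_aux_left'}, it proceeds by induction on the degree of $P$ together with a case analysis (on whether $P(c)=0$ and on the relevant signs), relating $\Var(\Der(P);x)$ to $\Var(\Der(P');x)$ at $x=c$ and $x=d_2$ and then invoking the induction hypothesis $\Var(\Der(P');c,d_2)=0$. You instead argue globally and without induction: (i) by the intermediate value theorem every $P^{(k)}$ keeps a constant sign on $(c,d_2]$, so $\Var(\Der(P);x)$ is constant there; and (ii) the sign sequence just to the right of $c$ is exactly the zero-dropped sign sequence at $c$ with each entry repeated---each zero run at $c$ inherits, via its leading Taylor coefficient, the sign of the nonzero derivative bounding it on the right, which always exists because $P^{(n)}$ is a nonzero constant---and repetition leaves the variation count unchanged, giving $\Var(\Der(P);c)=\Var(\Der(P);c+\epsilon)$. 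Both arguments rest on the same analytic kernel (the first nonvanishing derivative at $c$ dictates the signs throughout the zero run immediately to its right), but your decomposition exposes the structure more transparently and yields the slightly stronger intermediate fact that $\Var(\Der(P);\cdot)$ is constant on $(c,d_2]$ and right-continuous at $c$, whereas the paper's induction confines all bookkeeping to the head of the derivative list at each step, which is precisely why it formalises smoothly and reuses the infrastructure of Lemma~\ref{thm:budan_fourier_aux_left'}. You anticipate this trade-off yourself: your suggested formalisation strategy of peeling off one polynomial at a time collapses your block argument back into the paper's induction.
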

\noindent which indicates that $\Var(\Der(P);c,c+\epsilon) = \Var(\Der(P);c,d_2) = 0$, since $d_2$ can be treated as $c+\epsilon$.

\begin{proof}[Proof of Lemma \ref{thm:budan_fourier_aux_right}]
	Similar to that of Lemma \ref{thm:budan_fourier_aux_left'}: by induction on the degree of $P$ and case analysis.
\end{proof}

With Lemma \ref{thm:budan_fourier_aux_right}, we can generalise Lemma \ref{thm:budan_fourier_aux_left'} a bit by allowing $P(d_1) = 0$ in the assumption:
\begin{lemma} [\isa{budan{\isacharunderscore}fourier{\isacharunderscore}aux{\isacharunderscore}left}]
	\label{thm:budan_fourier_aux_left}
	\begin{isabelle}
		\isanewline
		\ \ \isakeyword{fixes}\ c\ d\isactrlsub {\isadigit{1}}{\isacharcolon}{\isacharcolon}real\ \isakeyword{and}\ p{\isacharcolon}{\isacharcolon}{\isachardoublequoteopen}real\ poly{\isachardoublequoteclose}\isanewline
		\ \ \isakeyword{assumes}\ {\isachardoublequoteopen}d\isactrlsub {\isadigit{1}}\ {\isacharless}\ c{\isachardoublequoteclose}\ \isakeyword{and}\ {\isachardoublequoteopen}p\ {\isasymnoteq}\ {\isadigit{0}}{\isachardoublequoteclose}\isanewline
		\ \ \isakeyword{assumes}\ non{\isacharunderscore}zero{\isacharcolon}{\isachardoublequoteopen}{\isasymforall}x{\isachardot}\ d\isactrlsub {\isadigit{1}}\ {\isacharless}\ x\ {\isasymand}\ x\ {\isacharless}\ c\ {\isasymlongrightarrow}\ \isanewline
		\ \ \ \ \ \ \ \ \ \ \ \ \ \ {\isacharparenleft}{\isasymforall}q\ {\isasymin}\ set\ {\isacharparenleft}pders\ p{\isacharparenright}{\isachardot}\ poly\ q\ x\ {\isasymnoteq}\ {\isadigit{0}}{\isacharparenright}{\isachardoublequoteclose}\isanewline
		\ \ \isakeyword{shows}\ {\isachardoublequoteopen}changes{\isacharunderscore}itv{\isacharunderscore}der\ d\isactrlsub {\isadigit{1}}\ c\ p\ {\isasymge}\ order\ c\ p{\isachardoublequoteclose}\ \isanewline
		\ \ \ \ \ \ \ \ \ {\isachardoublequoteopen}even\ {\isacharparenleft}changes{\isacharunderscore}itv{\isacharunderscore}der\ d\isactrlsub {\isadigit{1}}\ c\ p\ {\isacharminus}\ order\ c\ p{\isacharparenright}{\isachardoublequoteclose}
	\end{isabelle}
\end{lemma}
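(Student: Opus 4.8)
The plan is to reduce this generalised statement to Lemma~\ref{thm:budan_fourier_aux_left'} by splitting the interval at an auxiliary point, handling the resulting left sliver with Lemma~\ref{thm:budan_fourier_aux_right}. The only way this lemma differs from Lemma~\ref{thm:budan_fourier_aux_left'} is that the non-vanishing hypothesis now covers the open interval $(d_1,c)$ rather than the half-open $[d_1,c)$; in particular the polynomials of $\Der(P)$ are now permitted to vanish at the left endpoint $d_1$ itself, so the whole task is to show this relaxation is harmless.

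First I would choose any real $d_1'$ with $d_1 < d_1' < c$, which exists by density of the reals together with the assumption $d_1 < c$. Since the non-vanishing hypothesis holds throughout $(d_1,c)$ and $[d_1',c) \subseteq (d_1,c)$, no polynomial of $\Der(P)$ vanishes on $[d_1',c)$. Hence Lemma~\ref{thm:budan_fourier_aux_left'} applies with $d_1'$ in place of $d_1$, yielding both $\Var(\Der(P);d_1',c) \geq \mu(c,P)$ and the evenness of $\Var(\Der(P);d_1',c) - \mu(c,P)$.

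Next I would account for the left sliver $(d_1,d_1']$. Because $(d_1,d_1'] \subseteq (d_1,c)$, no polynomial of $\Der(P)$ vanishes on $(d_1,d_1']$, so Lemma~\ref{thm:budan_fourier_aux_right} applies with $d_1$ and $d_1'$ taking the roles of $c$ and $d_2$ respectively, giving $\Var(\Der(P);d_1,d_1') = 0$. Unfolding $\Var(\Der(P);\cdot,\cdot)$ as a difference of sign-variation counts and telescoping then yields
\[
\Var(\Der(P);d_1,c) = \Var(\Der(P);d_1,d_1') + \Var(\Der(P);d_1',c) = \Var(\Der(P);d_1',c).
\]
Substituting this equality into the two conclusions already obtained for $d_1'$ discharges both goals at once.

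The argument is essentially a gluing lemma, so no single step is technically deep; the one idea that requires care is recognising that the left piece $(d_1,d_1']$ is exactly an instance of Lemma~\ref{thm:budan_fourier_aux_right} after relabelling $d_1$ as its left endpoint $c$ and $d_1'$ as its right endpoint $d_2$. The main obstacle is therefore bookkeeping: keeping the half-open versus open interval conventions straight, in particular checking that $d_1'$ itself is root-free (it lies in the open interval $(d_1,c)$) so that the closed left endpoint demanded by Lemma~\ref{thm:budan_fourier_aux_left'} is legitimate, while the possible vanishing at $d_1$ is absorbed harmlessly into the zero contribution of the sliver.
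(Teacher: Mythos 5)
Your proposal is correct and is essentially the paper's own proof: the paper also splits the interval at an intermediate point (it uses the explicit midpoint $d=(d_1+c)/2$ where you invoke density), applies Lemma~\ref{thm:budan_fourier_aux_right} to the left sliver to get a zero contribution, applies Lemma~\ref{thm:budan_fourier_aux_left'} to the right piece, and combines the two via the additivity of $\Var(\Der(P);\cdot,\cdot)$.
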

\begin{proof}
	Let $d=(d_1+c)/2$. Lemma \ref{thm:budan_fourier_aux_right} and \ref{thm:budan_fourier_aux_left'} respectively yield
	\begin{equation} \label{eq:budan_fourier_aux_left_8}
		\Var(\Der(P);d_1,d) = 0,
	\end{equation}
	\begin{multline} \label{eq:budan_fourier_aux_left_9}
			\Var(\Der(P);d,c) \geq \mu(c,P) \\ \wedge \mathrm{even}(\Var(\Der(P);d,c) - \mu(c,P)).
	\end{multline}
	Moreover, by definition,
	\begin{multline} \label{eq:budan_fourier_aux_left_10}
		\Var(\Der(P);d_1,c) \\ = \Var(\Der(P);d_1,d) + \Var(\Der(P);d,c).
	\end{multline}
	From (\ref{eq:budan_fourier_aux_left_8}), (\ref{eq:budan_fourier_aux_left_9}) and (\ref{eq:budan_fourier_aux_left_10}), the conclusion follows.
\end{proof}

Finally, we come to our mechanised statement of the bounded interval case of Theorem \ref{thm:budan_fourier}:
\begin{theorem}[\isa{budan{\isacharunderscore}fourier{\isacharunderscore}interval}]
	\label{thm:budan_fourier_interval}
	\begin{isabelle}
		\isanewline
		\ \ \isakeyword{fixes}\ a\ b{\isacharcolon}{\isacharcolon}real\ \isakeyword{and}\ p{\isacharcolon}{\isacharcolon}{\isachardoublequoteopen}real\ poly{\isachardoublequoteclose}\isanewline
		\ \ \isakeyword{assumes}\ {\isachardoublequoteopen}a\ {\isacharless}\ b{\isachardoublequoteclose}\ \isakeyword{and}\ {\isachardoublequoteopen}p\ {\isasymnoteq}\ {\isadigit{0}}{\isachardoublequoteclose}\isanewline
		\ \ \isakeyword{shows}\ {\isachardoublequoteopen}changes{\isacharunderscore}itv{\isacharunderscore}der\ a\ b\ p\ \isanewline
		\ \ \ \ \ \ \ \ \ \ \ \ {\isasymge}\ proots{\isacharunderscore}count\ p\ {\isacharbraceleft}x{\isachardot}\ a\ {\isacharless}\ x\ {\isasymand}\ x\ {\isasymle}\ b{\isacharbraceright}\ {\isasymand}\isanewline
		\ \ \ \ \ \ \ \ \ even\ {\isacharparenleft}changes{\isacharunderscore}itv{\isacharunderscore}der\ a\ b\ p\ \isanewline
		\ \ \ \ \ \ \ \ \ \ \ \ \ \ \ \ {\isacharminus}\ proots{\isacharunderscore}count\ p\ {\isacharbraceleft}x{\isachardot}\ a\ {\isacharless}\ x\ {\isasymand}\ x\ {\isasymle}\ b{\isacharbraceright}{\isacharparenright}{\isachardoublequoteclose}
	\end{isabelle}
\end{theorem}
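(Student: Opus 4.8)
The plan is to reduce the interval statement to the two local lemmas already established, Lemma~\ref{thm:budan_fourier_aux_left} and Lemma~\ref{thm:budan_fourier_aux_right}, by exploiting the additivity of both quantities over a partition of $(a,b]$. Two observations drive the argument. First, directly from Definition~\ref{def:sign_variations}, the sign-variation difference telescopes: for any $a < c < b$,
\[
\Var(\Der(P);a,b) = \Var(\Der(P);a,c) + \Var(\Der(P);c,b),
\]
and the root count is additive in the same way, with $\Num_{\mathbb{R}}(P;(a,b]) = \Num_{\mathbb{R}}(P;(a,c]) + \Num_{\mathbb{R}}(P;(c,b])$. Second, since $P \neq 0$ every polynomial in $\Der(P)$ is nonzero (the last entry $P^{(n)}$ is a nonzero constant), so the set of points in $[a,b]$ at which \emph{some} member of $\Der(P)$ vanishes—call these the \emph{Fourier roots}—is finite. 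I would then induct on the number of Fourier roots lying in $(a,b]$.

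For the base case, when $(a,b]$ contains no Fourier root, Lemma~\ref{thm:budan_fourier_aux_right} (instantiated with $c := a$, $d_2 := b$) gives $\Var(\Der(P);a,b) = 0$; since $P$ in particular has no root there, $\Num_{\mathbb{R}}(P;(a,b]) = 0$, and both conjuncts hold trivially. For the inductive step, let $c$ be the largest Fourier root in $(a,b]$, so that $(c,b]$ is root-free and contributes nothing: Lemma~\ref{thm:budan_fourier_aux_right} yields $\Var(\Der(P);c,b)=0$ (or else $c=b$), while $\mu(\cdot,P)$ and $\Num_{\mathbb{R}}$ both vanish on $(c,b]$. Using finiteness I can pick a point $d$ with $d < c$ such that $(d,c)$ is free of Fourier roots; then Lemma~\ref{thm:budan_fourier_aux_left} applies on $(d,c)$ and gives $\Var(\Der(P);d,c) \geq \mu(c,P)$ with an even surplus.

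The interval $(a,d]$ now contains exactly one fewer Fourier root, so the induction hypothesis gives $\Var(\Der(P);a,d) \geq \Num_{\mathbb{R}}(P;(a,d])$, again with an even surplus. Adding the three pieces by telescoping, and using that the only possible $P$-root in $(d,b]$ is $c$ itself—whence $\Num_{\mathbb{R}}(P;(a,b]) = \Num_{\mathbb{R}}(P;(a,d]) + \mu(c,P)$—yields the desired inequality, while the parity claim follows since a sum of even numbers is even. I expect the main obstacle to be the finiteness bookkeeping rather than any new analytic content: locating the largest Fourier root, producing the separating point $d$ that guarantees a root-free gap $(d,c)$, and keeping the half-open endpoint conventions consistent so that the additivity of $\Num_{\mathbb{R}}$ matches the summed multiplicities exactly. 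All the substantive reasoning about how sign variations change across a root is already packaged inside the two auxiliary lemmas, so the theorem itself amounts to gluing them together correctly.
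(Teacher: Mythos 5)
Your proposal is correct and takes essentially the same route as the paper: induction on the number of roots of polynomials in $\Der(P)$ lying in the interval, splitting at the largest such root, applying Lemma~\ref{thm:budan_fourier_aux_left} to the rightmost piece, and gluing everything together via the additivity of $\Var(\Der(P);\cdot,\cdot)$ and of $\Num_{\mathbb{R}}$. The only difference is bookkeeping: the paper splits just once, at the largest Fourier root $b'$ in the \emph{open} interval $(a,b)$, and applies Lemma~\ref{thm:budan_fourier_aux_left} to $(b',b]$ with right endpoint $b$ (that lemma already tolerates a Fourier root at its left endpoint, and its $\mu(b,P)$ term absorbs the endpoint contribution), so it needs neither your auxiliary point $d$ nor any direct appeal to Lemma~\ref{thm:budan_fourier_aux_right} inside this proof, whereas your three-piece decomposition $(a,d] \cup (d,c] \cup (c,b]$ is equally valid but slightly heavier.
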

\noindent where \isa{proots{\isacharunderscore}count\ p\ {\isacharbraceleft}x{\isachardot}\ a\ {\isacharless}\ x\ {\isasymand}\ x\ {\isasymle}\ b{\isacharbraceright}{\isacharparenright}} denotes $\Num_{\mathbb{R}}(P;(a,b])$, which is the number of real roots of $P$ (counting multiplicity) within the interval $(a,b]$.

\begin{proof}[Proof of Theorem \ref{thm:budan_fourier_interval}]
	By induction on the number of roots of polynomials in $\Der(P)$ within the interval $(a,b)$.
	For the base case, we have 
	\begin{equation} \label{eq:budan_fourier_interval_1}
		Q(x) \neq 0, \qquad \mbox{for all } x \in (a,b) \mbox{ and } Q \in \Der(P),
	\end{equation}
	and then, by Lemma \ref{thm:budan_fourier_aux_left},
	\begin{multline} \label{eq:budan_fourier_interval_2}
		\Var(\Der(P);a,b) \geq \mu(b,P) \\ \wedge \mathrm{even}(\Var(\Der(P);a,b) - \mu(b,P)).
	\end{multline}
	In addition, (\ref{eq:budan_fourier_interval_1}) also leads to
	\begin{equation} \label{eq:budan_fourier_interval_3}
		\Num_{\mathbb{R}}(P;(a,b]) = \mu(b,P).
	\end{equation}
	We finish the base case by combining (\ref{eq:budan_fourier_interval_2}) and (\ref{eq:budan_fourier_interval_3}).
	
	Regrading the inductive case, let $b'$ be the largest root within the interval $(a,b)$ of the polynomials from $\Der(P)$:
	\begin{equation} \label{eq:budan_fourier_interval_4}
		b' = \max \{ x \in (a,b) \mid \exists Q \in \Der(P).\ Q(x) = 0 \}.
	\end{equation}
	With the induction hypothesis, we have
	\begin{multline} \label{eq:budan_fourier_interval_5}
	\Num_{\mathbb{R}}(P;(a,b']) \leq \Var(\Der(P);a,b') \\ \wedge \mathrm{even}(	\Num_{\mathbb{R}}(P;(a,b']) -  \Var(\Der(P);a,b')).
	\end{multline}
	Also, considering there is no root of $P$ within $(b',b)$ (otherwise it will be larger than $b'$, contradicting (\ref{eq:budan_fourier_interval_4})), we have
	\begin{equation} \label{eq:budan_fourier_interval_6}
		\Num_{\mathbb{R}}(P;(a,b]) = \Num_{\mathbb{R}}(P;(a,b']) + \mu(b,P). 
	\end{equation}
	Finally, Lemma \ref{thm:budan_fourier_aux_left} yields
	\begin{multline} \label{eq:budan_fourier_interval_7}
		\Var(\Der(P);b',b) \geq \mu(b,P) \\ \wedge \mathrm{even}(\Var(\Der(P);b',b) - \mu(b,P)).
	\end{multline}
Putting together (\ref{eq:budan_fourier_interval_5}), (\ref{eq:budan_fourier_interval_6}), and (\ref{eq:budan_fourier_interval_7}) finishes the proof.
\end{proof}

Note that Theorem \ref{thm:budan_fourier_interval} only corresponds to the bounded interval case of Theorem \ref{thm:budan_fourier}. In the formal development, we also have versions for $a=-\infty$, $b=+\infty$ or both.

An interesting corollary of the Budan-Fourier theorem is that when all roots are real, the over-approxiamation (i.e., $\Var(\Der(P);a,b)$) becomes exact:
\begin{corollary} [\isa{budan{\isacharunderscore}fourier{\isacharunderscore}real}]
	\label{thm:budan_fourier_real}
	\begin{isabelle}
		\isanewline
		\ \ \isakeyword{fixes}\ a\ b{\isacharcolon}{\isacharcolon}real\ \isakeyword{and}\ p{\isacharcolon}{\isacharcolon}{\isachardoublequoteopen}real\ poly{\isachardoublequoteclose}\isanewline
		\ \ \isakeyword{assumes}\ {\isachardoublequoteopen}all{\isacharunderscore}roots{\isacharunderscore}real\ p{\isachardoublequoteclose}\isanewline
		\ \ \isakeyword{shows}\ \isanewline
		\ \ \ \ {\isachardoublequoteopen}proots{\isacharunderscore}count\ p\ {\isacharbraceleft}x{\isachardot}\ x\ {\isasymle}\ a{\isacharbraceright}\ {\isacharequal}\ changes{\isacharunderscore}le{\isacharunderscore}der\ a\ p{\isachardoublequoteclose}\isanewline
		\ \ \ \ {\isachardoublequoteopen}a\ {\isacharless}\ b\ {\isasymlongrightarrow}\ proots{\isacharunderscore}count\ p\ {\isacharbraceleft}x{\isachardot}\ a\ {\isacharless}\ x\ {\isasymand}\ x\ {\isasymle}\ b{\isacharbraceright}\ \isanewline
		\ \ \ \ \ \ \ \ \ \ \ \ \ \ \ \ \ \ \ \ \ \ \ \ \ \ \ \ {\isacharequal}\ changes{\isacharunderscore}itv{\isacharunderscore}der\ a\ b\ p{\isachardoublequoteclose}\isanewline
		\ \ \ \ {\isachardoublequoteopen}proots{\isacharunderscore}count\ p\ {\isacharbraceleft}x{\isachardot}\ b\ {\isacharless}\ x{\isacharbraceright}\ {\isacharequal}\ changes{\isacharunderscore}gt{\isacharunderscore}der\ b\ p{\isachardoublequoteclose}
	\end{isabelle}
\end{corollary}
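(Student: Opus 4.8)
The plan is to derive all three identities from the \emph{inequality} form of the Budan--Fourier theorem (Theorem~\ref{thm:budan_fourier_interval} together with its $a=-\infty$ and $b=+\infty$ variants mentioned in the text) by a global squeezing argument. The key observation is that over the whole real line the over-approximation is automatically tight when all roots are real, and a single cut point then splits this global equality into the desired local ones. I would first isolate this as a \emph{single-cut exactness} fact: for every $c$ one has $\Num_{\mathbb{R}}(P;(-\infty,c]) = \Var(\Der(P);-\infty,c)$, and complementarily $\Num_{\mathbb{R}}(P;(c,+\infty)) = \Var(\Der(P);c,+\infty)$. All three conjuncts follow from this fact.

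Two global ingredients make the squeeze work. On the analytic side, since each derivative $P^{(k)}$ has degree $n-k$ with leading coefficient a nonzero positive multiple of $\lc(P)$, the sign of $P^{(k)}$ at $+\infty$ is constant (namely $\sgn(\lc(P))$) whereas at $-\infty$ it is $(-1)^{n-k}\sgn(\lc(P))$ and hence strictly alternates as $k$ runs from $0$ to $n$. Therefore, for any $P\neq 0$ of degree $n$ and independently of where its roots lie,
\[
  \Var(\Der(P);+\infty)=0, \qquad \Var(\Der(P);-\infty)=n .
\]
On the combinatorial side, the hypothesis \isa{all\_roots\_real p} means $P$ splits into linear factors over $\mathbb{R}$, so that $\Num_{\mathbb{R}}(P;\mathbb{R}) = \sum_{r}\mu(r,P) = n = \deg P$.

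Now the squeeze at a cut point $c$: the two pieces $(-\infty,c]$ and $(c,+\infty)$ partition $\mathbb{R}$. Telescoping the endpoint variations gives $\Var(\Der(P);-\infty,c)+\Var(\Der(P);c,+\infty)=\Var(\Der(P);-\infty,+\infty)=n$, while additivity of root counting gives $\Num_{\mathbb{R}}(P;(-\infty,c])+\Num_{\mathbb{R}}(P;(c,+\infty))=n$. The Budan--Fourier inequalities supply $\Var(\Der(P);-\infty,c)\geq\Num_{\mathbb{R}}(P;(-\infty,c])$ and $\Var(\Der(P);c,+\infty)\geq\Num_{\mathbb{R}}(P;(c,+\infty))$; since two nonnegative gaps sum to zero, both must vanish, which is exactly single-cut exactness. (Notably the evenness halves of the Budan--Fourier statements are not needed.) Instantiating $c:=a$ yields the first identity \isa{proots\_count p \{x. x {\isasymle} a\} = changes\_le\_der a p}, the complementary form with $c:=b$ yields the third, and subtracting the instances at $b$ and $a$ (here using $a<b$) gives $\Num_{\mathbb{R}}(P;(a,b]) = \Var(\Der(P);-\infty,b)-\Var(\Der(P);-\infty,a) = \Var(\Der(P);a,b)$, the middle identity.

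The main obstacle I expect is not the squeeze, which is purely arithmetic, but establishing the endpoint values $\Var(\Der(P);-\infty)=n$ and $\Var(\Der(P);+\infty)=0$ cleanly in Isabelle: this requires reasoning about the eventual sign of each $P^{(k)}$ at $\pm\infty$, its dependence on the parity of $n-k$, and then counting the resulting alternations while checking that no entry is zero so that dropping zeros does not shrink the count. A secondary nuisance is degenerate bookkeeping---confirming that \isa{all\_roots\_real p} delivers (or can be reduced to) $p\neq 0$ so that $\deg P = \Num_{\mathbb{R}}(P;\mathbb{R})$ holds, and handling the constant-polynomial case---but these are routine once the infinite-endpoint Budan--Fourier lemmas are available.
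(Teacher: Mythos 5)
Your proposal is correct and is essentially the paper's own proof: both arguments use only the inequality half of Budan--Fourier (at infinite endpoints), the global identity $\Var(\Der(P);-\infty,+\infty)=\deg P$, and the fact that all-real roots force $\Num_{\mathbb{R}}(P;\mathbb{R})=\deg P$, then squeeze nonnegative slack terms that sum to zero. The only cosmetic difference is bookkeeping---the paper splits $\mathbb{R}$ into the three intervals $(-\infty,a]$, $(a,b]$, $(b,+\infty)$ and makes three slacks vanish simultaneously, whereas you prove a two-piece single-cut exactness lemma and instantiate it at $a$ and $b$, subtracting to get the middle identity---and your explicit treatment of the endpoint values $\Var(\Der(P);-\infty)=n$, $\Var(\Der(P);+\infty)=0$ is exactly what the paper's step ``by the definition of $\Var$'' glosses over.
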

\noindent where 
\begin{itemize}
	\item \isa{all{\isacharunderscore}roots{\isacharunderscore}real\ p} is formally defined as every \emph{complex} root of $P$ having a zero imaginary part, 
	\item \isa{changes{\isacharunderscore}le{\isacharunderscore}der\ a\ p} encodes $\Var(\Der(P);-\infty,a)$,
	\item \isa{changes{\isacharunderscore}itv{\isacharunderscore}der\ a\ b\ p} encodes $\Var(\Der(P);a,b)$,
	\item \isa{changes{\isacharunderscore}gt{\isacharunderscore}der\ b\ p} encodes $\Var(\Der(P);b,+\infty)$.
\end{itemize}
\begin{proof}[Proof of Corollary \ref{thm:budan_fourier_real}]
	Let 
	\begin{equation*}  \label{eq:budan_fourier_real_1}
	\begin{split}
		t_1 &= \Var(\Der(P);-\infty,a) - \Num_{\mathbb{R}}(P;(-\infty,a]) \\
		t_2 &= \Var(\Der(P);a,b) - \Num_{\mathbb{R}}(P;(a,b]) \\
		t_3 &= \Var(\Der(P);a,b) - \Num_{\mathbb{R}}(P;(b,+\infty)) \\
	\end{split}
	\end{equation*}
	As a result of Theorem \ref{thm:budan_fourier}, we have 
	\begin{equation} \label{eq:budan_fourier_real_2}
		t_1 \geq 0 \wedge t_2 \geq 0 \wedge t_3 \geq 0.
	\end{equation}
	Additionally, by the definition of $\Var$ we derive 
	\begin{multline}  \label{eq:budan_fourier_real_3}
		\Var(\Der(P);-\infty,a) + \Var(\Der(P);a,b) \\ + \Var(\Der(P);a,+\infty) = \deg(P), 
	\end{multline}
	and the assumption (i.e., all roots are real) brings us
	\begin{multline}  \label{eq:budan_fourier_real_4}
		 \Num_{\mathbb{R}}(P;(-\infty,a]) +  \Num_{\mathbb{R}}(P;(a,b])  \\
		 + \Num_{\mathbb{R}}(P;(b,+\infty)) = \deg(P).
	\end{multline}
	Joining (\ref{eq:budan_fourier_real_3}) with (\ref{eq:budan_fourier_real_4}) yields 
	\begin{equation}  \label{eq:budan_fourier_real_5}
		t_1 + t_2 + t_3 = 0.
	\end{equation}
	Finally, putting  (\ref{eq:budan_fourier_real_2}) and (\ref{eq:budan_fourier_real_5}) together concludes the proof.
\end{proof}

\subsection{Descartes' Rule of Signs}
Given $a, b \in \extR$, $a<b$ and a polynomial $P \in \mathbb{R}[x]$, the Budan-Fourier theorem (Theorem \ref{thm:budan_fourier}) in the previous section grants us an effective way to over-approximate $\Num_{\mathbb{R}}(P;(a,b])$ (by an even number) through calculating $\Var(\Der(P);a,b)$.

Nevertheless, the approximation $\Var(\Der(P);a,b)$ still requires calculating a Fourier sequence ($\Der(P)$) and a series of polynomial evaluations. When $a=0$ and $b=+\infty$, the approximation can be refined to counting the number of sign variations of the coefficient sequence of $P$, which requires almost no calculation! Approximating $\Num_{\mathbb{R}}(P;(0,+\infty))$ using $\Var(P)$ (rather than $\Var(\Der(P);0,+\infty)$) is the celebrated Descartes' rule of signs:
\begin{theorem}[\isa{descartes{\isacharunderscore}sign}]
	\label{thm:descartes_sign}
	\begin{isabelle}
		\isanewline
		\ \ \isakeyword{fixes}\ p{\isacharcolon}{\isacharcolon}{\isachardoublequoteopen}real\ poly{\isachardoublequoteclose}\isanewline
		\ \ \isakeyword{assumes}\ {\isachardoublequoteopen}p\ {\isasymnoteq}\ {\isadigit{0}}{\isachardoublequoteclose}\isanewline
		\ \ \isakeyword{shows}\ {\isachardoublequoteopen}changes\ {\isacharparenleft}coeffs\ p{\isacharparenright}\ \isanewline
		\ \ \ \ \ \ \ \ \ \ \ \ {\isasymge}\ proots{\isacharunderscore}count\ p\ {\isacharbraceleft}x{\isachardot}\ {\isadigit{0}}\ {\isacharless}\ x{\isacharbraceright}\ {\isasymand}\ \isanewline
		\ \ \ \ \ \ \ \ \ \ even\ {\isacharparenleft}changes\ {\isacharparenleft}coeffs\ p{\isacharparenright}\isanewline
		\ \ \ \ \ \ \ \ \ \ \ \ \ \ \ \ \ {\isacharminus}\ proots{\isacharunderscore}count\ p\ {\isacharbraceleft}x{\isachardot}\ {\isadigit{0}}\ {\isacharless}\ x{\isacharbraceright}{\isacharparenright}{\isachardoublequoteclose}
	\end{isabelle}
\end{theorem}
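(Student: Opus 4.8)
The plan is to reduce Descartes' rule of signs to the Budan--Fourier theorem (Theorem~\ref{thm:budan_fourier}) instantiated at $a = 0$ and $b = +\infty$. That theorem already delivers
\[
\Var(\Der(P);0,+\infty) \geq \Num_{\mathbb{R}}(P;(0,+\infty))
\]
together with the evenness of the difference, so it suffices to prove the purely syntactic identity $\Var(P) = \Var(\Der(P);0,+\infty)$ and then transport the inequality and the parity statement across it. This is where the whole content of the theorem lies: the $+\infty$ version of Budan--Fourier does the analytic work, and Descartes' rule is the observation that at the two distinguished endpoints $0$ and $+\infty$ the Fourier sequence degenerates into something that can be read off directly from the coefficients of $P$.

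To establish that identity I would split $\Var(\Der(P);0,+\infty)$ into its two endpoint contributions, $\Var(\Der(P);0) - \Var(\Der(P);+\infty)$, following Definition~\ref{def:sign_variations}, and evaluate each separately. At the left endpoint, the $k$-th member of the Fourier sequence satisfies $P^{(k)}(0) = k!\,a_k$, where $a_k$ is the coefficient of $x^k$ in $P$; since $k! > 0$, the sign of $P^{(k)}(0)$ coincides exactly with the sign of $a_k$ (zero to zero, positive to positive, negative to negative). Because $\Var$ drops zeros and is insensitive to multiplication by positive constants, the factorial rescaling is harmless and we obtain $\Var(\Der(P);0) = \Var([a_0,a_1,\dots,a_n]) = \Var(P)$. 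At the right endpoint, the leading coefficient of $P^{(k)}$ is $\tfrac{n!}{(n-k)!}\lc(P)$ for every $k \le n = \deg(P)$, and is in particular nonzero, so every entry of $\Der(P)$ has the sign of $\lc(P)$ as $x \to +\infty$; a list of constant sign has no variations, hence $\Var(\Der(P);+\infty) = 0$. Subtracting gives $\Var(\Der(P);0,+\infty) = \Var(P)$, as required.

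I expect the main obstacle to lie in the two endpoint computations rather than in the appeal to Budan--Fourier. On the left, the delicate point is matching Isabelle's \isa{coeffs} representation of $P$ (which carries no trailing zeros and need not line up in length with the Fourier sequence) against the evaluated list $[P(0),P'(0),\dots,P^{(n)}(0)]$, and verifying that, after the factorial rescaling and the zero-dropping convention of $\Var$, the two \isa{changes} computations agree exactly. On the right, I would need a lemma asserting that a polynomial's sign ``at $+\infty$'' equals the sign of its leading coefficient, together with the fact that the leading coefficients of successive derivatives do not vanish up to degree $\deg(P)$, so that the evaluated sequence is genuinely of constant sign; formalising the $+\infty$ evaluation through the sign-at-infinity device already used in the earlier lemmas, and discharging the degenerate low-degree cases, is where most of the effort will go. Once both endpoint identities are in place, combining them with the $\infty$-instantiation of Budan--Fourier closes the proof immediately.
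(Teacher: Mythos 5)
Your proposal is correct and follows essentially the same route as the paper: both instantiate the Budan--Fourier theorem at $(0,+\infty)$ and reduce to the two endpoint identities $\Var(\Der(P);0)=\Var(P)$ (via $P^{(k)}(0)=k!\,a_k$ and positivity of the factorials) and $\Var(\Der(P);+\infty)=0$ (via the leading coefficients of the derivatives all sharing the sign of $\lc(P)$). Your discussion of the formalisation obstacles (aligning \isa{coeffs} with the evaluated Fourier sequence, and the sign-at-infinity device) is a reasonable anticipation of the mechanisation work, but the mathematical content matches the paper's proof exactly.
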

\noindent where \isa{changes\ {\isacharparenleft}coeffs\ p{\isacharparenright}} encodes $\Var(P)$---sign variations of the coefficient sequence of $P$.
\begin{proof}
	Let $P=a_0 + a_1 x + a_2 x^2 + \cdots + a_{n-1} x^{n-1} + a_n x^n$. $\Der(P)$ is as follows:
	\begin{equation} \label{eq:descartes_sign_1}
		\begin{split}
		[ & a_0 + a_1 x + a_2 x^2 + \cdots + x_{n-1} x^{n-1}+ a_n x^n,\\
			& a_1 + 2 a_2 x + \cdots + (n-1) a_{n-1} x^{n-1}  + n a_n x^{n-1},\\
			& \vdots \\
			& (n-1)! a_{n-1} + n! a_n x\\
			& 	n! a_n \\
		]
		\end{split}
	\end{equation}
	where $n!$ is the factorial of $n$.
	From (\ref{eq:descartes_sign_1}), it can be derived that $\Der(P)$ has no sign variation when evaluated at $+\infty$:
	\begin{equation}  \label{eq:descartes_sign_2}
		\Var(\Der(P);+\infty) = 0.
	\end{equation}
	Also, evaluating $\Der(P)$ at $0$ gives $[a_0,a_1,...,(n-1)! a_{n-1}, n! a_n]$, hence its sign variations should equal $[a_0,a_1,...,a_{n-1}, a_n]$:
	\begin{equation}  \label{eq:descartes_sign_3}
		\Var(\Der(P);0) = \Var(P).
	\end{equation}
	Joining (\ref{eq:descartes_sign_2}) and (\ref{eq:descartes_sign_3}) gives $\Var(\Der(P);0,+\infty) = \Var(P)$, with which we apply Theorem \ref{thm:budan_fourier} to finish the proof.
\end{proof}

\subsection{Base Transformation and Taylor Shift} \label{secsub:base_transformation}
Given $P \in \mathbb{R}[x]$, with Descartes' rule of signs (Theorem \ref{thm:descartes_sign}) in the previous section, we can efficiently approximate $\Num_{\mathbb{R}}(P;(0,+\infty))$. However, in many cases, we are interested in roots within a bounded interval: $\Num_{\mathbb{R}}(P;I)$, where $I=(a,b)$ and $a,b \in \mathbb{R}$. Can we still exploit the efficiency from Descartes' rule of signs? The answer is yes, via an operation to transform $P$ into $P_I \in \mathbb{R}[x]$ such that
\begin{equation} \label{eq:taylor_shift_2}
	\Num_{\mathbb{R}}(P;I) = \Num_{\mathbb{R}}(P_I;(0,+\infty)).
\end{equation}
In order to develop this transformation operation, we first define the composition of a univariate polynomial and a rational function (i.e., a function of the form $f(x)=P(x)/Q(x)$, where $P$ and $Q$ are polynomials) in Isabelle/HOL:
\begin{isabelle}
	\isacommand{definition}\isamarkupfalse%
	\ fcompose{\isacharcolon}{\isacharcolon}\isanewline
	\ \ \ {\isachardoublequoteopen}{\isacharprime}a\ {\isacharcolon}{\isacharcolon}field\ poly\ {\isasymRightarrow}\ {\isacharprime}a\ poly\ {\isasymRightarrow}\ {\isacharprime}a\ poly\ {\isasymRightarrow}\ {\isacharprime}a\ poly{\isachardoublequoteclose}\ \isanewline
	\ \ \isakeyword{where}\isanewline
	\ \ \ \ {\isachardoublequoteopen}fcompose\ p\ q\isactrlsub {\isadigit{1}}\ q\isactrlsub {\isadigit{2}}\ {\isacharequal}\ \isanewline
	\ \ \ \ \ \ \ \ fst\ {\isacharparenleft}fold{\isacharunderscore}coeffs\ {\isacharparenleft}{\isasymlambda}a\ {\isacharparenleft}r\isactrlsub {\isadigit{1}}{\isacharcomma}r\isactrlsub {\isadigit{2}}{\isacharparenright}{\isachardot}\ \isanewline
	\ \ \ \ \ \ \ \ \ \ \ \ \ {\isacharparenleft}r\isactrlsub {\isadigit{2}}\ {\isacharasterisk}\ {\isacharbrackleft}{\isacharcolon}a{\isacharcolon}{\isacharbrackright}\ {\isacharplus}\ q\isactrlsub {\isadigit{1}}\ {\isacharasterisk}\ r\isactrlsub {\isadigit{1}}{\isacharcomma}q\isactrlsub {\isadigit{2}}\ {\isacharasterisk}\ r\isactrlsub {\isadigit{2}}{\isacharparenright}{\isacharparenright}\ p\ {\isacharparenleft}{\isadigit{0}}{\isacharcomma}{\isadigit{1}}{\isacharparenright}{\isacharparenright}{\isachardoublequoteclose}
\end{isabelle}
\noindent where
\begin{itemize}
	\item \isa{q\isactrlsub {\isadigit{1}}} and \isa{q\isactrlsub {\isadigit{2}}} are respectively the numerator and denominator of a rational function,
	\item \isa{fst} gives the first part of a pair,
	\item \isa{{\isacharbrackleft}{\isacharcolon}a{\isacharcolon}{\isacharbrackright}} is a constant polynomial lifted from the value $a$.
\end{itemize}
Also, \isa{fold{\isacharunderscore}coeffs} is the classical foldr operation on a coefficient sequence:
\begin{isabelle}
	\isacommand{definition}\isamarkupfalse%
	\ fold{\isacharunderscore}coeffs\ {\isacharcolon}{\isacharcolon}\ \isanewline
	\ \ \ \ {\isachardoublequoteopen}{\isacharparenleft}{\isacharprime}a{\isacharcolon}{\isacharcolon}zero\ {\isasymRightarrow}\ {\isacharprime}b\ {\isasymRightarrow}\ {\isacharprime}b{\isacharparenright}\ {\isasymRightarrow}\ {\isacharprime}a\ poly\ {\isasymRightarrow}\ {\isacharprime}b\ {\isasymRightarrow}\ {\isacharprime}b{\isachardoublequoteclose}\isanewline
	\ \ \isakeyword{where}\ {\isachardoublequoteopen}fold{\isacharunderscore}coeffs\ f\ p\ {\isacharequal}\ foldr\ f\ {\isacharparenleft}coeffs\ p{\isacharparenright}{\isachardoublequoteclose}
\end{isabelle}
Essentially, let $P$, $Q_1$, and $Q_2$ be three univariate polynomials over some field such that $P$ is of degree $n$. Our composition operation over these three polynomials (i.e., \isa{fcompose\ p\ q\isactrlsub {\isadigit{1}}\ q\isactrlsub {\isadigit{2}}}) gives the following polynomial:
\begin{equation} \label{eq:fcompose_1}
	(Q_2(x))^n P \left( \frac{Q_1(x)}{Q_2(x)} \right).
\end{equation}
The idea of (\ref{eq:fcompose_1}) can be illustrated by the following mechanised lemma:
\begin{lemma} [\isa{poly{\isacharunderscore}fcompose}]
	\label{thm:poly_fcompose}
	\begin{isabelle}
		\isanewline
		\ \ \isakeyword{fixes}\ p\ q\isactrlsub {\isadigit{1}}\ q\isactrlsub {\isadigit{2}}{\isacharcolon}{\isacharcolon}{\isachardoublequoteopen}{\isacharprime}a{\isacharcolon}{\isacharcolon}field\ poly{\isachardoublequoteclose}\isanewline
		\ \ \isakeyword{assumes}\ {\isachardoublequoteopen}poly\ q\isactrlsub {\isadigit{2}}\ x\ {\isasymnoteq}\ {\isadigit{0}}{\isachardoublequoteclose}\isanewline
		\ \ \isakeyword{shows}\ {\isachardoublequoteopen}poly\ {\isacharparenleft}fcompose\ p\ q\isactrlsub {\isadigit{1}}\ q\isactrlsub {\isadigit{2}}{\isacharparenright}\ x\ {\isacharequal}\ \isanewline
		\ \ \ \ \ \ \ \ \ \ \ \ \ \ \ \ \ \ \ \ \ \ poly\ p\ {\isacharparenleft}poly\ q\isactrlsub {\isadigit{1}}\ x\ {\isacharslash}\ poly\ q\isactrlsub {\isadigit{2}}\ x{\isacharparenright}\ \isanewline
		\ \ \ \ \ \ \ \ \ \ \ \ \ \ \ \ \ \ \ \ \ \ \ \ \ \ \ {\isacharasterisk}\ {\isacharparenleft}poly\ q\isactrlsub {\isadigit{2}}\ x{\isacharparenright}\ {\isacharcircum}\ {\isacharparenleft}degree\ p{\isacharparenright}{\isachardoublequoteclose}
	\end{isabelle}
\end{lemma}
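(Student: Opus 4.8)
The plan is to prove the identity by structural induction on $P$ using Isabelle's \texttt{pCons\_induct} rule, after first pinning down what the Horner-style fold inside \isa{fcompose} accumulates. Write $u = Q_1(x)$ and $v = Q_2(x)$, and let $h$ be the step function $(a,(r_1,r_2)) \mapsto (r_2\cdot a + Q_1 r_1,\ Q_2 r_2)$ from the definition (where the scalar $a$ is read as a constant polynomial). Folding $h$ over the coefficient list of $P$ from the seed $(0,1)$ produces a pair; its first component is the polynomial we must evaluate, which we abbreviate $\mathrm{fc}(P)$ (this is \isa{fcompose} applied to $P$, $Q_1$, $Q_2$). Unfolding the fold on a non-degenerate $\mathrm{pCons}\ a\ P$ rewrites it as $h\,a\,(\text{fold on }P)$, from which $\mathrm{fc}(\mathrm{pCons}\ a\ P) = s_2\cdot a + Q_1\cdot \mathrm{fc}(P)$, where $s_2$ denotes the second component of the fold on $P$.

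Before the induction I would isolate two auxiliary facts. First, $P=0$ is a trivial special case: $\mathrm{fc}(0)=0$ and both sides of the goal collapse to $0$ (recall $\deg 0 = 0$). Second---and this is what makes the recurrence usable---I would compute the second component in closed form. A one-line list induction shows that $\mathrm{snd}(\mathrm{foldr}\ h\ xs\ (r_1,r_2)) = Q_2^{\,|xs|}\,r_2$, since every step multiplies the second component by $Q_2$; specialising $xs$ to the coefficient list of $P$ and the seed to $(0,1)$ gives $s_2 = Q_2^{\,\ell}$ with $\ell$ the length of that list, so $s_2(x) = v^{\ell}$. Using \texttt{length\_coeffs\_degree} I then rewrite $\ell = \deg P + 1$ when $P\neq 0$ (and $\ell = 0$ when $P = 0$).

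The main induction via \texttt{pCons\_induct} then runs as follows. The base case $P=0$ is the special case above. In the step case for $\mathrm{pCons}\ a\ P$, the rule supplies the side condition $a\neq 0 \vee P\neq 0$ (so the cons is non-degenerate) together with the induction hypothesis $\mathrm{fc}(P)(x) = P(u/v)\,v^{\deg P}$. Evaluating the recurrence at $x$ and substituting $s_2(x) = v^{\ell}$ (with $\ell = \deg P + 1$ when $P\neq 0$) and the hypothesis reduces the goal to the algebraic identity
\[
v^{\deg P + 1}\,a \;+\; u\,P(u/v)\,v^{\deg P}
\;=\;
v^{\deg P + 1}\bigl(a + (u/v)\,P(u/v)\bigr),
\]
whose right-hand side is precisely $(\mathrm{pCons}\ a\ P)(u/v)\cdot v^{\deg(\mathrm{pCons}\ a\ P)}$ once we use $\deg(\mathrm{pCons}\ a\ P) = \deg P + 1$ and $(\mathrm{pCons}\ a\ P)(y) = a + y\,P(y)$. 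This is exactly where the hypothesis $Q_2(x)\neq 0$ (i.e.\ $v\neq 0$) is needed: matching $u\,v^{\deg P}$ with $v^{\deg P+1}(u/v)$ requires $v\cdot(u/v) = u$.

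I expect the principal obstacle to be the bookkeeping forced by the degenerate treatment of the zero polynomial in the coefficient-list representation and the fold. In particular, the second component is $Q_2$ raised to the \emph{length of the coefficient list} rather than to a uniform function of $\deg P$, so the $P=0$ and $P\neq 0$ sub-branches of the step case must be reconciled separately (for $P=0$ but $a\neq 0$ one has $\mathrm{fc}(\mathrm{pCons}\ a\ 0)=a$, and both sides reduce to $a$) before the exponents of $v$ align. Once the closed form for $s_2$ is in place, the rest is routine field arithmetic together with the single division step licensed by $v\neq 0$. Should the fold-based induction become unwieldy, a viable alternative is to first establish the explicit coefficient-sum form $\mathrm{fc}(P) = \sum_{i\le \deg P} c_i\, Q_1^{\,i}\, Q_2^{\,\deg P - i}$ (with $c_i$ the constant polynomial of the $i$-th coefficient of $P$) and then evaluate both sides using \texttt{poly\_altdef}.
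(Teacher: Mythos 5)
The paper states this lemma without proof---it appears only to illustrate the \isa{fcompose} definition---so there is no printed argument to compare against; judged on its own merits, your proposal is correct. You isolate exactly what the formal proof needs: the closed form for the second fold component ($Q_2$ raised to the length of the coefficient list, hence $\deg P+1$ when $P\neq 0$), the resulting recurrence $\mathrm{fc}(\mathrm{pCons}\ a\ P) = s_2\cdot a + Q_1\cdot \mathrm{fc}(P)$ under the non-degeneracy condition supplied by \texttt{pCons\_induct}, the separate reconciliation of the $P=0$ sub-branches, and the single division step where $Q_2(x)\neq 0$ is actually used.
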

\noindent where \isa{poly p x} gives the value of the polynomial \isa{p} when evaluated at \isa{x}.

When $Q_1(x) = a+b x$ and $Q_2(x) = 1+x$, (\ref{eq:fcompose_1}) yields a transformation (i.e., Taylor shift):
\begin{equation} \label{eq:taylor_shift_3}
	P_I (x) = (x+1) ^ n P \left (\frac{a x + b}{x+1} \right ),
\end{equation}
with which we have achieved (\ref{eq:taylor_shift_2}):
\begin{lemma}[\isa{proots{\isacharunderscore}count{\isacharunderscore}pos{\isacharunderscore}interval}]
	\label{thm:poly_fcompose_2}
	\begin{isabelle}
		\isanewline
		\ \ \isakeyword{fixes}\ a\ b{\isacharcolon}{\isacharcolon}real\ \isakeyword{and}\ p{\isacharcolon}{\isacharcolon}{\isachardoublequoteopen}real\ poly{\isachardoublequoteclose}\isanewline
		\ \ \isakeyword{assumes}\ {\isachardoublequoteopen}p\ {\isasymnoteq}\ {\isadigit{0}}{\isachardoublequoteclose}\ \isakeyword{and}\ {\isachardoublequoteopen}a\ {\isacharless}\ b{\isachardoublequoteclose}\isanewline
		\ \ \isakeyword{shows}\ {\isachardoublequoteopen}proots{\isacharunderscore}count\ p\ {\isacharbraceleft}x{\isachardot}\ a\ {\isacharless}\ x\ {\isasymand}\ x\ {\isacharless}\ b{\isacharbraceright}\ {\isacharequal}\ \isanewline
		\ \ \ \ \ \ \ \ \ \ \ \ proots{\isacharunderscore}count\ {\isacharparenleft}fcompose\ p\ {\isacharbrackleft}{\isacharcolon}b{\isacharcomma}a{\isacharcolon}{\isacharbrackright}\ {\isacharbrackleft}{\isacharcolon}{\isadigit{1}}{\isacharcomma}{\isadigit{1}}{\isacharcolon}{\isacharbrackright}{\isacharparenright}\isanewline
		\ \ \ \ \ \ \ \ \ \ \ \ \ \ \ \ \ \ \ \ \ \ \ \ \ \ \ \ \ \ \ \ \ \ \ \ \ \ \ \ \ \ \ {\isacharbraceleft}x{\isachardot}\ {\isadigit{0}}\ {\isacharless}\ x{\isacharbraceright}{\isachardoublequoteclose}
	\end{isabelle}
\end{lemma}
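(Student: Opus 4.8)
The plan is to realise the roots counted on the right as the roots of $P$ pulled back along the M\"obius transformation $\phi(x) = \frac{ax+b}{x+1}$, and then to check that this pullback preserves root multiplicities. Writing $n = \deg P$, note first that for every $x > 0$ the denominator satisfies $x+1 > 0$, so Lemma~\ref{thm:poly_fcompose} applies and gives $\mathrm{poly}(P_I, x) = P(\phi(x)) \cdot (x+1)^n$. Since $(x+1)^n \neq 0$ on $(0,\infty)$, a point $x > 0$ is a root of $P_I$ exactly when $\phi(x)$ is a root of $P$, so at the level of root \emph{sets} the correspondence is immediate.

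Next I would establish that $\phi$ restricts to a bijection from $(0,\infty)$ onto $(a,b)$. Differentiating gives $\phi'(x) = (a-b)/(x+1)^2$, which is strictly negative because $a < b$, so $\phi$ is strictly decreasing on $(0,\infty)$; together with $\phi(0) = b$ and $\lim_{x \to \infty}\phi(x) = a$ this shows $\phi$ maps $(0,\infty)$ bijectively onto $(a,b)$, with explicit inverse $\psi(y) = \frac{b-y}{y-a}$. Combined with the previous paragraph, $\psi$ therefore carries the set of roots of $P$ in $(a,b)$ bijectively onto the set of roots of $P_I$ in $(0,\infty)$.

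The substantive step, and the main obstacle, is to show that this bijection preserves multiplicity, i.e.\ $\mu(\psi(r), P_I) = \mu(r, P)$ for each root $r \in (a,b)$; only then do the two multiplicity-weighted counts agree. I would handle this algebraically by tracking how the factors of $P$ transform. Embedding into $\mathbb{C}$ and factoring $P = \lc(P)\prod_{i=1}^{n}(x - r_i)$ (with the real multiplicity of a real root coinciding with its complex multiplicity), a direct computation using $\phi(x) - r_i = \frac{(a-r_i)x + (b-r_i)}{x+1}$ together with $\sum_i \mu(r_i,P) = n$ cancels the factor $(x+1)^n$ completely and yields the polynomial identity $P_I = \lc(P)\prod_{i=1}^{n}\bigl((a-r_i)x + (b-r_i)\bigr)$. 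Each factor with $r_i \neq a$ is genuinely linear with unique root $\psi(r_i)$; the factors with $r_i = a$ degenerate to the nonzero constant $b-a$ and those with $r_i = b$ contribute a root at $x=0$, so in both degenerate cases the root lies outside $(0,\infty)$, consistently with $a,b \notin (a,b)$. Hence the multiplicity of $\psi(r)$ as a root of $P_I$ equals the number of indices $i$ with $r_i = r$, which is exactly $\mu(r,P)$. (Alternatively one can stay over $\mathbb{R}$, factoring out the real linear factors and observing that the product of the remaining irreducible quadratics composes under $\phi$ to a real polynomial with no real root on $(0,\infty)$.)

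Finally, summing the identity $\mu(r,P) = \mu(\psi(r), P_I)$ over the finitely many roots $r$ of $P$ in $(a,b)$ and re-indexing along the bijection $\psi$ rewrites $\Num_{\mathbb{R}}(P;(a,b))$ as $\Num_{\mathbb{R}}(P_I;(0,\infty))$, which is the claim. In the formalisation the delicate points are precisely the degenerate factors $r_i \in \{a,b\}$ (mapping to the excluded endpoints $+\infty$ and $0$) and verifying that the leading constant $\lc(P)\prod_i (a-r_i)^{\mu(r_i,P)}$ is nonzero so that it introduces no spurious roots.
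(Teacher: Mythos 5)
Your proof is correct, but there is essentially nothing in the paper to diverge from: Lemma~\ref{thm:poly_fcompose_2} is stated \emph{without proof}, as the immediate payoff of the \isa{fcompose} development, with Lemma~\ref{thm:poly_fcompose} the only ingredient the authors put on record. Your argument fleshes out precisely that implicit outline --- the evaluation identity for $x>0$, the strictly decreasing M\"obius map $\phi(x)=(ax+b)/(x+1)$ carrying $(0,\infty)$ bijectively onto $(a,b)$ --- and then supplies the one step with genuine content that the paper silently assumes: preservation of multiplicities. Your complex-factorisation argument for that step is sound: the cancellation of $(x+1)^n$ against the $n$ denominators holds at infinitely many points and hence as a polynomial identity; the degenerate factors for $r_i=a$ (nonzero constants $b-a$) and $r_i=b$ (roots at $0$) land outside $(0,\infty)$; and injectivity of $\psi(y)=(b-y)/(y-a)$ away from $a$ yields $\mu(\psi(r),P_I)=\mu(r,P)$. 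Two points to tidy: first, you also need that \isa{fcompose} commutes with the embedding $\mathbb{R}[x]\hookrightarrow\mathbb{C}[x]$ and that the real and complex orders of a real root of a real polynomial agree (both routine, but necessary for your factorisation to speak about the same $P_I$); second, your closing expression for the leading constant, $\lc(P)\prod_i(a-r_i)^{\mu(r_i,P)}$, vanishes whenever $a$ is a root of $P$, so it is not the right constant in the degenerate case --- the factors with $r_i=a$ contribute $b-a$, not $a-r_i$. That is a slip in a side remark, not in the argument. Finally, note an alternative that stays over $\mathbb{R}$ and is closer to what one would mechanise: induct on $\deg P$, factoring out one real root at a time, using that \isa{fcompose} and the multiplicity-weighted root count are both multiplicative and that \isa{fcompose} applied to $x-r$ gives the (at most linear) polynomial $(a-r)x+(b-r)$; the remaining factor with no real roots is root-free on $(0,\infty)$ directly by Lemma~\ref{thm:poly_fcompose}. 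This makes the multiplicity bookkeeping automatic, at the cost of an induction your global factorisation avoids.
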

\noindent where 
\begin{itemize}
	\item \isa{{\isacharbrackleft}{\isacharcolon}b{\isacharcomma}a{\isacharcolon}{\isacharbrackright}} encodes the polynomial $b+a x$,
	\item \isa{{\isacharbrackleft}{\isacharcolon}{\isadigit{1}}{\isacharcomma}{\isadigit{1}}{\isacharcolon}{\isacharbrackright}} stands for the polynomial $1+x$.
\end{itemize}

\subsection{The Descartes Roots Test} \label{secsub:descartes_roots_test}
Finally, we come to the Descartes roots test. Given $P \in \mathbb{R}[x]$, $a, b \in \mathbb{R}$ and $I=(a,b)$, the Descartes roots test is $\Var(P_I)$: the number of sign variations on the coefficient sequence of the Taylor-shifted polynomial $P_I$:
\begin{isabelle}
	\isacommand{definition}\isamarkupfalse%
	\ descartes{\isacharunderscore}roots{\isacharunderscore}test{\isacharcolon}{\isacharcolon}\isanewline
	\ \ \ \ {\isachardoublequoteopen}real\ {\isasymRightarrow}\ real\ {\isasymRightarrow}\ real\ poly\ {\isasymRightarrow}\ nat{\isachardoublequoteclose}\ \isanewline
	\ \ \isakeyword{where}\isanewline
	\ \ \ \ {\isachardoublequoteopen}descartes{\isacharunderscore}roots{\isacharunderscore}test\ a\ b\ p\ {\isacharequal}\ nat\ {\isacharparenleft}changes\ \isanewline
	\ \ \ \ \ \ \ \ \ \ \ \ \ \ \ {\isacharparenleft}coeffs\ {\isacharparenleft}fcompose\ p\ {\isacharbrackleft}{\isacharcolon}b{\isacharcomma}a{\isacharcolon}{\isacharbrackright}\ {\isacharbrackleft}{\isacharcolon}{\isadigit{1}}{\isacharcomma}{\isadigit{1}}{\isacharcolon}{\isacharbrackright}{\isacharparenright}{\isacharparenright}{\isacharparenright}{\isachardoublequoteclose}
\end{isabelle}
where 
\begin{itemize}
	\item \isa{fcompose\ p\ {\isacharbrackleft}{\isacharcolon}b{\isacharcomma}a{\isacharcolon}{\isacharbrackright}\ {\isacharbrackleft}{\isacharcolon}{\isadigit{1}}{\isacharcomma}{\isadigit{1}}{\isacharcolon}{\isacharbrackright}} encodes Taylor shift as in (\ref{eq:taylor_shift_3}),
	\item \isa{coeffs} converts a polynomial into its coefficient sequence,
	\item \isa{changes} calculates the number of sign variations (i.e., $\Var$),
	\item \isa{nat} converts an integer into a natural number.
\end{itemize}

Just like $\Var(\Der(P);a,b)$, whose root approximation property has been reflected in Theorem \ref{thm:budan_fourier_interval}, $\Var(P_I)$ has a similar theorem related to $\Num_{\mathbb{R}}(P;(a,b))$:
\begin{theorem} [\isa{descartes{\isacharunderscore}roots{\isacharunderscore}test}]
	\label{thm:descartes_roots_test}
	\begin{isabelle}
		\isanewline
		\ \ \isakeyword{fixes}\ a\ b{\isacharcolon}{\isacharcolon}real\ \isakeyword{and}\ p{\isacharcolon}{\isacharcolon}{\isachardoublequoteopen}real\ poly{\isachardoublequoteclose}\isanewline
		\ \ \isakeyword{assumes}\ {\isachardoublequoteopen}p\ {\isasymnoteq}\ {\isadigit{0}}{\isachardoublequoteclose}\ \isakeyword{and}\ {\isachardoublequoteopen}a\ {\isacharless}\ b{\isachardoublequoteclose}\isanewline
		\ \ \isakeyword{shows}\ {\isachardoublequoteopen}proots{\isacharunderscore}count\ p\ {\isacharbraceleft}x{\isachardot}\ a\ {\isacharless}\ x\ {\isasymand}\ x\ {\isacharless}\ b{\isacharbraceright}\ \isanewline
		\ \ \ \ \ \ \ \ \ \ \ \ \ \ \ \ \ \ \ \ {\isasymle}\ descartes{\isacharunderscore}roots{\isacharunderscore}test\ a\ b\ p\ {\isasymand}\isanewline
		\ \ \ \ \ \ \ \ \ even\ {\isacharparenleft}descartes{\isacharunderscore}roots{\isacharunderscore}test\ a\ b\ p\ \isanewline
		\ \ \ \ \ \ \ \ \ \ \ \ \ \ \ \ \ \ {\isacharminus}\ proots{\isacharunderscore}count\ p\ {\isacharbraceleft}x{\isachardot}\ a\ {\isacharless}\ x\ {\isasymand}\ x\ {\isacharless}\ b{\isacharbraceright}{\isacharparenright}{\isachardoublequoteclose}
	\end{isabelle}
\end{theorem}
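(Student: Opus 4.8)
The plan is to obtain the Descartes roots test as an essentially immediate corollary of Descartes' rule of signs (Theorem~\ref{thm:descartes_sign}) applied to the Taylor-shifted polynomial, using the root-count transfer already established in Lemma~\ref{thm:poly_fcompose_2}. Write $P_I$ for the polynomial \texttt{fcompose p [:b,a:] [:1,1:]} appearing in the definition of the test, which by (\ref{eq:taylor_shift_3}) is the base transform $(x+1)^{n} P((ax+b)/(x+1))$ with $n=\deg(P)$. Unfolding the definition gives $\texttt{descartes\_roots\_test}\ a\ b\ p = \texttt{nat}(\Var(P_I))$, where $\Var(P_I)$ abbreviates \texttt{changes (coeffs $P_I$)}. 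Since a count of sign variations is never negative, the outer \texttt{nat} coercion is harmless, so it suffices to compare $\Num_{\mathbb{R}}(P;(a,b))$ with the integer $\Var(P_I)$.

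The argument then chains two facts. Lemma~\ref{thm:poly_fcompose_2} supplies the exact identity $\Num_{\mathbb{R}}(P;(a,b)) = \Num_{\mathbb{R}}(P_I;(0,+\infty))$, transferring the count over the open interval $(a,b)$ to a count over the positive half-line; note that it is the open-interval transfer that is needed here, matching the open right endpoint in the conclusion. Descartes' rule of signs (Theorem~\ref{thm:descartes_sign}), applied to $P_I$, gives both $\Var(P_I) \geq \Num_{\mathbb{R}}(P_I;(0,+\infty))$ and the evenness of $\Var(P_I) - \Num_{\mathbb{R}}(P_I;(0,+\infty))$. Substituting the identity from Lemma~\ref{thm:poly_fcompose_2} into these two consequences yields at once $\Num_{\mathbb{R}}(P;(a,b)) \leq \Var(P_I) = \texttt{descartes\_roots\_test}\ a\ b\ p$ together with the evenness of their difference, which is exactly the statement.

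The one side condition not granted for free is the hypothesis $P_I \neq 0$ required by Theorem~\ref{thm:descartes_sign}, and discharging it is the main obstacle. I would prove it via Lemma~\ref{thm:poly_fcompose} (\texttt{poly\_fcompose}): because $a<b$, the map $x \mapsto (ax+b)/(x+1)$ is non-constant and hence takes infinitely many values, so some $x_{0} \neq -1$ sends it outside the finite set of roots of the nonzero polynomial $P$. At such an $x_{0}$ the denominator $Q_2 = 1+x$ satisfies $Q_2(x_{0}) = x_{0}+1 \neq 0$, so \texttt{poly\_fcompose} evaluates $P_I(x_{0}) = P((ax_{0}+b)/(x_{0}+1)) \cdot (x_{0}+1)^{n} \neq 0$, forcing $P_I \neq 0$. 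In the formalisation this nonvanishing is best isolated as a short auxiliary lemma --- it is in any case already implicit in the reasoning behind Lemma~\ref{thm:poly_fcompose_2} --- after which the remaining steps are pure substitution and require no genuine computation.
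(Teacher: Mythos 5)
Your proposal is correct and follows exactly the paper's own route: transfer the root count from $(a,b)$ to $(0,+\infty)$ via Lemma~\ref{thm:poly_fcompose_2}, then apply Descartes' rule of signs (Theorem~\ref{thm:descartes_sign}) to the Taylor-shifted polynomial $P_I$. Your explicit discharge of the side condition $P_I \neq 0$ (via Lemma~\ref{thm:poly_fcompose} and the non-constancy of the M\"obius map when $a<b$) is a detail the paper's two-line proof leaves implicit, but it does not constitute a different approach.
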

\noindent
which claims that $\Var(P_I)$ always exceeds $\Num_{\mathbb{R}}(P;(a,b))$ by an even number.

\begin{proof}[Proof of Theorem \ref{thm:descartes_roots_test}]
	Lemma \ref{thm:poly_fcompose_2} yields 
	\[
		\Num_{\mathbb{R}}(P;(a,b)) = \Num_{\mathbb{R}}(P_I;(0,+\infty)),
	\]
	with which we apply Theorem \ref{thm:descartes_sign} to conclude the proof.
\end{proof}

As an approximation, it is natural to ask when the Descartes roots test ($\Var(P_I)$)  is exact. From Theorem \ref{thm:descartes_roots_test}, it is easy to see that it would be exact as least when $\Var(P_I) = 0$ and $\Var(P_I) = 1$. Also, it is analogous to Corollary \ref{thm:budan_fourier_real} that $\Var(P_I)$ is exact when all roots are real:
\begin{corollary}[\isa{descartes{\isacharunderscore}roots{\isacharunderscore}test{\isacharunderscore}real}]
\label{thm:descartes_roots_test_real}
\begin{isabelle}
	\isanewline
	\ \ \isakeyword{fixes}\ a\ b{\isacharcolon}{\isacharcolon}real\ \isakeyword{and}\ p{\isacharcolon}{\isacharcolon}{\isachardoublequoteopen}real\ poly{\isachardoublequoteclose}\isanewline
	\ \ \isakeyword{assumes}\ {\isachardoublequoteopen}all{\isacharunderscore}roots{\isacharunderscore}real\ p{\isachardoublequoteclose}\ \isakeyword{and}\ {\isachardoublequoteopen}a\ {\isacharless}\ b{\isachardoublequoteclose}\ \isanewline
	\ \ \isakeyword{shows}\ {\isachardoublequoteopen}proots{\isacharunderscore}count\ p\ {\isacharbraceleft}x{\isachardot}\ a\ {\isacharless}\ x\ {\isasymand}\ x\ {\isacharless}\ b{\isacharbraceright}\ \isanewline
	\ \ \ \ \ \ \ \ \ \ \ \ \ \ \ \ \ \ \ \ \ \ \ \ \ \ {\isacharequal}\ descartes{\isacharunderscore}roots{\isacharunderscore}test\ a\ b\ p{\isachardoublequoteclose}
\end{isabelle}
\end{corollary}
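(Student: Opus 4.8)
The plan is to collapse the claimed equality onto the exactness of Descartes' rule of signs for a polynomial all of whose roots are real, and then to transport that exactness back through the Taylor shift. Write $P_I$ for the Taylor-shifted polynomial of~(\ref{eq:taylor_shift_3}). Lemma~\ref{thm:poly_fcompose_2} already supplies
\[
\Num_{\mathbb{R}}(P;(a,b)) = \Num_{\mathbb{R}}(P_I;(0,+\infty)),
\]
and by definition the Descartes roots test equals $\Var(P_I)$ (up to the harmless \isa{nat} coercion). Hence the whole corollary reduces to the single identity $\Num_{\mathbb{R}}(P_I;(0,+\infty)) = \Var(P_I)$; that is, on $P_I$ the rule of signs is exact rather than merely an over-approximation.

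The first real step is to show that $P_I$ inherits the all-roots-real property from $P$. Viewing everything over $\mathbb{C}$, Lemma~\ref{thm:poly_fcompose} gives $P_I(z) = (z+1)^{\deg P}\, P\!\left(\frac{az+b}{z+1}\right)$ for $z \neq -1$, so any root $z \neq -1$ of $P_I$ produces a root $w = \frac{az+b}{z+1}$ of $P$. By hypothesis $w$ is real, and inverting the real-coefficient Möbius map gives $z = \frac{w-b}{a-w}$, which is real because $w \neq a$ (the equation $w=a$ would force $a=b$, against $a<b$); the lone remaining value $z=-1$ is real anyway. Thus every complex root of $P_I$ has zero imaginary part, i.e. \isa{all{\isacharunderscore}roots{\isacharunderscore}real}~$P_I$. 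Equivalently, factoring $P = c\prod_i(x-\beta_i)$ with real $\beta_i$ turns $P_I$ into $c\prod_i\bigl[(a-\beta_i)x+(b-\beta_i)\bigr]$, a product of real linear and nonzero constant factors; this factored form simultaneously shows $P_I \neq 0$, which we need before applying the results below.

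Granting that $P_I$ has only real roots, I would then invoke Corollary~\ref{thm:budan_fourier_real} on $P_I$ at the point $0$: its third component, \isa{changes{\isacharunderscore}gt{\isacharunderscore}der}, yields $\Num_{\mathbb{R}}(P_I;(0,+\infty)) = \Var(\Der(P_I);0,+\infty)$. The two computations already carried out inside the proof of Theorem~\ref{thm:descartes_sign}---that $\Var(\Der(Q);+\infty)=0$ and $\Var(\Der(Q);0)=\Var(Q)$---rewrite the right-hand side as $\Var(P_I)$, giving precisely $\Num_{\mathbb{R}}(P_I;(0,+\infty)) = \Var(P_I)$. Chaining this with Lemma~\ref{thm:poly_fcompose_2} and unfolding the definition of the test then closes the goal.

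I expect the transfer \isa{all{\isacharunderscore}roots{\isacharunderscore}real}~$P \Rightarrow$ \isa{all{\isacharunderscore}roots{\isacharunderscore}real}~$P_I$ to be the main obstacle. Its mathematical content (a real Möbius transformation cannot create genuinely complex roots out of real ones) is elementary, but a clean formalisation must commute \isa{fcompose} with the complexification \isa{map{\isacharunderscore}poly\ of{\isacharunderscore}real}, account for the degree drop produced by any root $\beta_i = a$ (such a root corresponds to the point at infinity and simply vanishes from $P_I$), and confirm that $z=-1$ is never a problematic root. With that lemma established, everything else is routine bookkeeping on top of the already-proved Budan-Fourier and Descartes developments.
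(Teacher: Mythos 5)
Correct, and this is essentially the paper's (implicit) argument: the paper offers no proof beyond calling the result ``analogous to Corollary~\ref{thm:budan_fourier_real}'', and your chain --- Lemma~\ref{thm:poly_fcompose_2}, then exactness of the rule of signs on $P_I$ obtained from Corollary~\ref{thm:budan_fourier_real} at the point $0$ together with the two computations $\Var(\Der(P_I);+\infty)=0$ and $\Var(\Der(P_I);0)=\Var(P_I)$ from the proof of Theorem~\ref{thm:descartes_sign}, followed by the harmless \isa{nat} coercion --- is precisely that analogy carried out. Your handling of the one step the paper glosses over entirely, the transfer of \isa{all{\isacharunderscore}roots{\isacharunderscore}real} from $P$ to $P_I$ (commuting \isa{fcompose} with complexification, the $w=a$ degree-drop case excluded by $a<b$, and the innocuous $z=-1$ case), is also sound.
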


\subsection{Remarks} \label{secsub:budan_fourier_remark}

Ever since the seminal paper by Collins and Akritas \cite{Collins:1976gx}, the Descartes roots test has been closely linked to modern real root isolation \cite{Kobel:2016im,Eigenwillig:2008tw,ROUILLIER200433}, where an effective method is needed for testing if an interval has zero or exactly one root. Although Sturm's theorem (which has already been formalised in Isabelle \cite{Sturm_Tarski-AFP,Li_CPP2016,Eberl:2015kb}) is also up to the task of root testing, it is considered too slow in modern computer algebra. Our mechanised version of the Descartes roots test is, by no means, state of the art; it is probably the most straightforward and naive implementation. Improvements over our current implementation are mainly about avoiding exact arithmetic, and the approaches include partial Taylor shift \cite{Kobel:2016im} and bitstream arithmetic \cite[Chapter 3]{Eigenwillig:2008tw}.

\section{Extending Sturm's Theorem to Exactly Count Multiple Roots} \label{sec:extending_sturm}

With the Descartes roots test we obtained from the previous section, we have an effective method to over-approximate the number of roots (with multiplicity) within an interval. However, we may sometimes want to know the exact number, as  we will describe below (\S\ref{sec:counting_complex_roots}). For now, we only have the classical Sturm theorem available (in Isabelle/HOL), which only counts \emph{distinct} real roots. In this section, we extend our previous formalisation of Sturm's theorem so that we will be able to count roots \emph{with multiplicity} and exactly. 

Our mechanised proof follows Rahman and Schmeisser \cite[Theorem 10.5.6]{Rahman:2016us}.

\begin{theorem}[Sturm's theorem] \label{thm:sturm}
Let $P \in \mathbb{R}[x]$, $a, b \in \extR$ such that $a<b$, $P(a) \neq 0$, and $P(b) \neq 0$. Sturm's theorem claims 
\[
	\NumD_{\mathbb{R}}(P;(a,b)) = \Var(\SRemS(P,P');a,b)
\]
where 
\begin{itemize}
	\item $\NumD_{\mathbb{R}}(P;(a,b))$ is the number of distinct roots of the polynomial $P$ within the interval $(a,b)$,
	\item $P'$ is the first derivative of $P$,
	\item $\Var$ is as in Definition \ref{def:sign_variations},
	\item $\SRemS(P,P')$ is the signed remainder sequence:
				\begin{equation} \label{eq:extended_sturm_1}
						[P_1,P_2,...,P_n],
				\end{equation}
				such that $P_1=P$, $P_2=P'$, $P_i = - (P_{i-1} \mod P_{i-2})$ ($3 \leq i \leq n$), and $P_n \mod P_{n-1} = 0$.
\end{itemize}
\end{theorem}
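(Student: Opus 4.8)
The plan is to analyse the piecewise-constant function $x \mapsto \Var(\SRemS(P,P');x)$ and to show that, as $x$ increases across $(a,b)$, its total decrease equals the number of distinct real roots of $P$. Since $\Var(\SRemS(P,P');a,b)$ is by Definition~\ref{def:sign_variations} exactly $\Var(\SRemS(P,P');a) - \Var(\SRemS(P,P');b)$, and since the hypotheses $P(a)\neq 0$ and $P(b)\neq 0$ keep the endpoints away from any sign change of $P$, the theorem reduces to a local study of how $\Var$ jumps as $x$ passes through a root of one of the chain polynomials.

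First I would record the structural facts about the signed remainder sequence $[P_1,\dots,P_n]$. The defining recurrence $P_{i-1} = Q_i P_i - P_{i+1}$ shows that the last term $P_n$ is, up to a nonzero constant, $\gcd(P,P')$. Because taking remainders is homogeneous, $\mathrm{rem}(GA,GB)=G\,\mathrm{rem}(A,B)$, dividing the whole sequence by $P_n$ yields the signed remainder sequence of the reduced pair $(P/P_n,\,P'/P_n)$, whose two members are coprime; consequently no two consecutive terms of the reduced chain vanish at a common point, and its last member is a nonzero constant. Sign variations are unchanged at any point where $P_n\neq 0$, in particular at the endpoints since $P(a),P(b)\neq 0$; so I may assume the chain is \emph{proper} in this sense, with $P$ now having only simple roots and the same distinct real roots as before.

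The heart of the argument is two local lemmas, each proved by a sign-pattern case analysis on a small neighbourhood of a crossing point $c$. If $c$ is a root of an interior $P_i$ (with $1<i<n$), then coprimality of consecutive terms forces $P_{i-1}(c)=-P_{i+1}(c)\neq 0$, so the triple $(P_{i-1},P_i,P_{i+1})$ contributes the same number of variations just left and just right of $c$: crossing a middle root leaves $\Var$ invariant. If instead $c$ is a root of $P_1=P$, then $P_2=P'$ is nonzero at $c$ and its sign records the direction in which $P$ crosses zero, so the pair $(P,P')$ loses exactly one variation as $x$ passes $c$: crossing a root of $P$ drops $\Var$ by one. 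Summing these contributions over $(a,b)$ gives $\Var(\SRemS(P,P');a,b)=\NumD_{\mathbb{R}}(P;(a,b))$.

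The main obstacle is the bookkeeping in the two local lemmas, which must cover every sign configuration of consecutive triples (including the zeros that Definition~\ref{def:sign_variations} drops) and must treat the extended endpoints $a,b\in\extR$ uniformly with the finite case by evaluating signs of leading coefficients at $\pm\infty$. In the formalisation this local analysis is exactly the content of our earlier Cauchy-index (Sturm--Tarski) development: one shows $\Var(\SRemS(P,P');a,b)=\cindex_a^b(P'/P)$ and that the jump of $P'/P$ at each real root of $P$ equals $1$, so that $\cindex_a^b(P'/P)$ counts the distinct real roots directly and the theorem follows without redoing the sign-pattern case split by hand.
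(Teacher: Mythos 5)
Your proposal addresses a statement that the paper itself never proves: Theorem~\ref{thm:sturm} is recalled as the classical result, already available from the authors' earlier Sturm--Tarski formalisation in Isabelle, and it enters this paper only as an ingredient, invoked at equation~(\ref{eq:extended_sturm_12}) inside the proof of the extended Theorem~\ref{thm:sturm_ext_interval}. So there is no in-paper argument to compare against. Judged on its own, your proof is the standard sign-crossing argument, and your closing observation is accurate about how the actual formalisation proceeds: the prior development proves $\Var(\SRemS(P,P');a,b)=\Ind_a^b(P'/P)$ and then notes that $P'/P$ jumps from $-\infty$ to $+\infty$ (contributing $+1$) at every real root of $P$, which counts distinct roots and handles all multiplicities uniformly.

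There is, however, one glossed step in your direct sign analysis, and it is exactly where the multiple-root case bites. After dividing by $P_n\sim\gcd(P,P')$ you correctly note that the reduced chain is the signed remainder sequence of the pair $(P/P_n,\,P'/P_n)$; but in the local lemma you then treat its second entry as the derivative of its first (``$P_2=P'$ is nonzero at $c$ and its sign records the direction in which $P$ crosses zero''). It is not the derivative: $(P/P_n)'\neq P'/P_n$ in general, and at a root $c$ of $P/P_n$ one has the identity $(P'/P_n)(c)=\mu(c,P)\cdot(P/P_n)'(c)$, so the two values genuinely differ whenever $\mu(c,P)\geq 2$. The local lemma survives precisely because this identity shows the two have the \emph{same sign}, but that computation (or an equivalent argument transferring the variation drop of the pair $(P,P')$ across $c$ to the reduced chain via equality of variations away from roots of $P_n$) is the crux of why Sturm's theorem holds for non-square-free $P$, and as written it is asserted rather than proved. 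Your Cauchy-index fallback does close this gap, since the jump of $P'/P$ at a root is $+1$ regardless of multiplicity; if you intend the elementary route to stand on its own, supply the sign identity above.
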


The core idea of our extended Sturm's theorem is to extend the remainder sequence ($\SRemS$):
\begin{definition}[Extended signed remainder sequence]
	Let $P, Q \in \mathbb{R}[x]$. The extended signed remainder sequence
	\begin{equation*}  \label{eq:extended_sturm_2}
			\SRemSE(P,Q) = [P_1,P_2,...,P_m]
	\end{equation*}
	is defined as $P_1=P$, $P_2=Q$, and for $i \geq 3$:
	\begin{equation} \label{eq:extended_sturm_3}
	P_i = 
	\begin{dcases}
	- (P_{i-1} \mod P_{i-2}), & \mbox{if } P_{i-1} \mod P_{i-2} \neq 0 \\
	P'_{i-1},&  \mbox{ otherwise,}\\
	\end{dcases}
	\end{equation}
	until $P_m$ such that $P_{m+1} = 0$ by (\ref{eq:extended_sturm_3}). 
\end{definition}

In Isabelle/HOL, $\SRemS$ and $\SRemSE$ are respectively mechanised as \isa{smods} and \isa{smods\_ext}:
\begin{isabelle}
	\isacommand{function}\isamarkupfalse%
	\ smods{\isacharcolon}{\isacharcolon}\ \isanewline
	\ \ \ \ {\isachardoublequoteopen}real\ poly\ {\isasymRightarrow}\ real\ poly\ {\isasymRightarrow}\ real\ poly\ list{\isachardoublequoteclose}\ \isanewline
	\ \ \isakeyword{where}\isanewline
	\ \ \ \ {\isachardoublequoteopen}smods\ p\ q\ {\isacharequal}\ {\isacharparenleft}if\ p\ {\isacharequal}\ {\isadigit{0}}\ then\ {\isacharbrackleft}{\isacharbrackright}\ \isanewline
	\ \ \ \ \ \ \ \ \ \ \ \ \ \ \ \ \ \ else\ p\ {\isacharhash}\ {\isacharparenleft}smods\ q\ {\isacharparenleft}{\isacharminus}\ {\isacharparenleft}p\ mod\ q{\isacharparenright}{\isacharparenright}{\isacharparenright}\isanewline
	\ \ \ \ \ \ \ \ \ \ \ \ \ \ \ \ \ \ {\isacharparenright}{\isachardoublequoteclose}
\end{isabelle}
\begin{isabelle}
	\isacommand{function}\isamarkupfalse%
	\ smods{\isacharunderscore}ext{\isacharcolon}{\isacharcolon}\isanewline
	\ \ \ \ {\isachardoublequoteopen}real\ poly\ {\isasymRightarrow}\ real\ poly\ {\isasymRightarrow}\ real\ poly\ list{\isachardoublequoteclose}\ \isanewline
	\ \ \isakeyword{where}\ \isanewline
	\ \ \ \ {\isachardoublequoteopen}smods{\isacharunderscore}ext\ p\ q\ {\isacharequal}\ \isanewline
	\ \ \ \ \ \ \ \ \ \ {\isacharparenleft}if\ p\ {\isacharequal}\ {\isadigit{0}}\ then\ \isanewline
	\ \ \ \ \ \ \ \ \ \ \ \ \ \ {\isacharbrackleft}{\isacharbrackright}\ \isanewline
	\ \ \ \ \ \ \ \ \ \ \ else\ if\ p\ mod\ q\ {\isasymnoteq}\ {\isadigit{0}}\ then\ \isanewline
	\ \ \ \ \ \ \ \ \ \ \ \ \ \ p\ {\isacharhash}\ {\isacharparenleft}smods{\isacharunderscore}ext\ q\ {\isacharparenleft}{\isacharminus}\ {\isacharparenleft}p\ mod\ q{\isacharparenright}{\isacharparenright}{\isacharparenright}\isanewline
	\ \ \ \ \ \ \ \ \ \ \ else\ \isanewline
	\ \ \ \ \ \ \ \ \ \ \ \ \ \ p\ {\isacharhash}\ {\isacharparenleft}smods{\isacharunderscore}ext\ q\ {\isacharparenleft}pderiv\ q{\isacharparenright}{\isacharparenright}\isanewline
	\ \ \ \ \ \ \ \ \ \ {\isacharparenright}{\isachardoublequoteclose}
\end{isabelle}
where \isa{[]} is an empty list and \isa{\isacharhash} is the Cons operation on lists---adding one element at the start of a list.

As $\SRemSE$ extends $\SRemS$ (from the back), it is natural to consider $\SRemS$ as a prefix of $\SRemSE$: 
\begin{lemma}[\isa{smods{\isacharunderscore}ext{\isacharunderscore}prefix}]
	\label{thm:smods_ext_prefix}
	\begin{isabelle}
		\isanewline
		\ \ \isakeyword{fixes}\ p\ q{\isacharcolon}{\isacharcolon}{\isachardoublequoteopen}real\ poly{\isachardoublequoteclose}\isanewline
		\ \ \isakeyword{defines}\ {\isachardoublequoteopen}r\ {\isasymequiv}\ last\ {\isacharparenleft}smods\ p\ q{\isacharparenright}{\isachardoublequoteclose}\ \isanewline
		\ \ \isakeyword{assumes}\ {\isachardoublequoteopen}p\ {\isasymnoteq}\ {\isadigit{0}}{\isachardoublequoteclose}\ \isakeyword{and}\ {\isachardoublequoteopen}q\ {\isasymnoteq}\ {\isadigit{0}}{\isachardoublequoteclose}\isanewline
		\ \ \isakeyword{shows}\ {\isachardoublequoteopen}smods{\isacharunderscore}ext\ p\ q\ {\isacharequal}\ smods\ p\ q\ \isanewline
		\ \ \ \ \ \ \ \ \ \ \ \ \ \ \ \ {\isacharat}\ tl\ {\isacharparenleft}smods{\isacharunderscore}ext\ r\ {\isacharparenleft}pderiv\ r{\isacharparenright}{\isacharparenright}{\isachardoublequoteclose}
	\end{isabelle}
\end{lemma}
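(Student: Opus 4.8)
The plan is to prove the equality by the recursion-induction principle associated with \isa{smods\_ext}, since its defining equation already case-splits on whether $p \mod q = 0$. The guiding observation is that, as long as $p \mod q \neq 0$, the two functions behave identically: each prepends $p$ and recurses on the pair $(q, -(p \mod q))$. They diverge only at the final step of the Euclidean computation, exactly where $p \mod q = 0$; there \isa{smods} terminates, whereas \isa{smods\_ext} instead continues down its \isa{pderiv} branch. After generalising the goal to the implication $p \neq 0 \longrightarrow q \neq 0 \longrightarrow (\text{stated equality})$, the argument splits into two nontrivial cases (the $p = 0$ case being vacuous under the hypotheses).

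In the case $p \mod q = 0$, the left-hand side \isa{smods p q} unfolds to the two-element list $[p, q]$: here I use the convention that \isa{smods} applied to $q$ and the zero polynomial reduces to $[q]$. Consequently $r$, being the last element of \isa{smods p q}, is simply $q$. Meanwhile \isa{smods\_ext p q} takes its final (\isa{pderiv}) branch, becoming $p$ prepended to \isa{smods\_ext q (pderiv q)}. Because $q \neq 0$, the list \isa{smods\_ext q (pderiv q)} has $q$ as its head, so dropping that head (the tail operation appearing on the right) leaves exactly the part to which $[p,q]$ must be appended. Substituting $r = q$ and comparing heads and tails makes the two sides coincide; this is the step where the mod-by-zero reductions and the head/tail bookkeeping have to be discharged explicitly.

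In the inductive case $p \mod q \neq 0$, put $r' = -(p \mod q)$, which is nonzero. Both \isa{smods} and \isa{smods\_ext}, applied to $(p,q)$, prepend $p$ and recurse on $(q, r')$, and since $q \neq 0$ and $r' \neq 0$ the induction hypothesis applies to this pair. The one supplementary fact required is that the sequence \isa{smods} on $(q, r')$ is nonempty (immediate from $q \neq 0$), so that prepending $p$ does not change its last element; hence the value $r$ computed from $(p,q)$ equals the corresponding value computed from $(q, r')$. Feeding in the induction hypothesis, factoring the leading $p$ out of the append, and refolding \isa{smods} then closes the case.

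I expect the principal difficulty to be organisational rather than mathematical. The substance of the lemma is the single structural observation above; the work lies in selecting an induction principle whose hypotheses match the recursive call into $(q, -(p \mod q))$ while threading the two nonvanishing side conditions through it, and then keeping the list surgery straight---in particular the degenerate reductions (\isa{smods} on $q$ and the zero polynomial yields $[q]$, and \isa{smods\_ext q (pderiv q)} has head $q$), together with the \isa{last}, \isa{tl} and append manipulations---so that the append in the goal realigns after each unfolding.
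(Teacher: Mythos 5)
Your proposal is correct: the computation induction following \isa{smods\_ext}'s own recursion, with the split on whether $p \bmod q = 0$, the identification $r = q$ (via \isa{smods} $p$ $q$ $= [p,q]$) in the terminating case, and the last-element bookkeeping plus induction hypothesis on $(q, -(p \bmod q))$ in the recursive case, all check out. The paper states this lemma without giving any proof, but your argument is exactly the natural one suggested by the two recursive definitions, so there is nothing to flag beyond noting that the details you worry about (nonemptiness of \isa{smods} $q$ $r'$, the head of \isa{smods\_ext} $q$ (\isa{pderiv} $q$) being $q$) are indeed the only places where the side conditions $p \neq 0$, $q \neq 0$ are needed.
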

\noindent where 
\begin{itemize}
	\item \isa{last} gives the last element of a list,
	\item \isa{@} concatenates two lists,
	\item \isa{tl} removes the head of a list,
	\item \isa{pderiv} returns the first derivative of a polynomial.
\end{itemize}

Moreover, we may need to realise that the last element of $\SRemS(P,Q)$ is actually the greatest common divisor (gcd) of $P$ and $Q$ up to some scalar: 
\begin{lemma}[\isa{last{\isacharunderscore}smods{\isacharunderscore}gcd}]
	\label{thm:last_smods_gcd}
	\begin{isabelle}
		\isanewline
		\ \ \isakeyword{fixes}\ p\ q{\isacharcolon}{\isacharcolon}{\isachardoublequoteopen}real\ poly{\isachardoublequoteclose}\isanewline
		\ \ \isakeyword{defines}\ {\isachardoublequoteopen}r\ {\isasymequiv}\ last\ {\isacharparenleft}smods\ p\ q{\isacharparenright}{\isachardoublequoteclose}\ \isanewline
		\ \ \isakeyword{assumes}\ {\isachardoublequoteopen}p\ {\isasymnoteq}\ {\isadigit{0}}{\isachardoublequoteclose}\isanewline
		\ \ \isakeyword{shows}\ {\isachardoublequoteopen}r\ {\isacharequal}\ smult\ {\isacharparenleft}lead{\isacharunderscore}coeff\ r{\isacharparenright}\ {\isacharparenleft}gcd\ p\ q{\isacharparenright}{\isachardoublequoteclose}
	\end{isabelle}
\end{lemma}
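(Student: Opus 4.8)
The plan is to prove the lemma by induction following the recursive structure of \isa{smods} (i.e. via the induction principle \isa{smods.induct} that Isabelle generates from the function definition). Since we assume $P \neq 0$, only the non-empty branch of the definition is relevant, and within the inductive step I would perform a case analysis on whether $Q = 0$. The underlying intuition is that $\SRemS(P,Q)$ runs the Euclidean algorithm with extra sign flips, so its last nonzero entry must be $\gcd(P,Q)$ up to a scalar; the whole task is to track that scalar precisely.

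In the base case $Q = 0$, a direct unfolding computes $\SRemS(P, 0) = [P]$: we have $P \mod 0 = P$, hence $-(P \mod 0) = -P$, and then $\SRemS(0, -P) = []$ because the first argument is zero. Thus $R = P$. It remains to check the scalar identity. In Isabelle/HOL the polynomial $\gcd$ is \emph{normalised} to be monic, so $\gcd(P, 0) = \mathrm{normalize}(P) = \mathrm{smult}(1/\lc(P),\ P)$, the monic associate of $P$. Multiplying by $\lc(R) = \lc(P)$ and using $\lc(P) \neq 0$ (which holds since $P \neq 0$) recovers $P = R$, as required.

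For the inductive step $Q \neq 0$, unfolding shows that $\SRemS(P, Q)$ is $P$ prepended onto $\SRemS(Q, -(P \mod Q))$. Since $Q \neq 0$ the tail is non-empty, so $R = \mathrm{last}(\SRemS(P,Q))$ coincides with $\mathrm{last}(\SRemS(Q, -(P \mod Q)))$. Applying the induction hypothesis to the pair $(Q, -(P \mod Q))$, whose first component $Q$ is non-zero, yields
\[
R = \mathrm{smult}(\lc(R),\ \gcd(Q, -(P \mod Q))).
\]
The conclusion then follows by rewriting the $\gcd$: invariance of $\gcd$ under multiplication by the unit $-1$ gives $\gcd(Q, -(P \mod Q)) = \gcd(Q, P \mod Q)$, and the Euclidean recurrence $\gcd(P, Q) = \gcd(Q, P \mod Q)$ (valid because $\mathbb{R}[x]$ is a Euclidean domain) turns the right-hand side into $\gcd(P, Q)$.

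The individual calculations are all routine; the main obstacle is the bookkeeping around Isabelle's normalisation convention for the polynomial $\gcd$. Because the library $\gcd$ is forced to be monic while the last entry of $\SRemS$ is whatever scalar multiple the Euclidean recursion happens to emit, the two agree only up to the factor $\lc(R)$, and the statement is phrased precisely to absorb this. Closing the base case therefore hinges on the exact definition of $\mathrm{normalize}$ for field polynomials, and the inductive step relies on having the right library lemmas to hand, namely $P \mod 0 = P$, the sign-invariance $\gcd(Q, -S) = \gcd(Q, S)$, and the Euclidean step $\gcd(P,Q) = \gcd(Q, P \mod Q)$; lining these three facts up correctly is the only delicate part of the argument.
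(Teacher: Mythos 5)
Your proof is correct. The paper states \isa{last\_smods\_gcd} without giving any proof, but your argument---induction along the recursive structure of \isa{smods}, splitting on $Q=0$, using $P \mod 0 = P$, the monic-normalisation convention for the library $\gcd$, sign-invariance $\gcd(Q,-S)=\gcd(Q,S)$, and the Euclidean recurrence $\gcd(P,Q)=\gcd(Q,P \mod Q)$---is precisely the natural formal proof of this fact, so there is nothing to contrast.
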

\noindent where 
\begin{itemize}
	\item \isa{smult} multiplies a polynomial with a scalar,
	\item \isa{lead\_coeff} gives the lead coefficient of a polynomial.
\end{itemize}

Finally, we can state (the bounded version of) our extended Sturm's theorem:
\begin{theorem} [\isa{sturm{\isacharunderscore}ext{\isacharunderscore}interval}]
	\label{thm:sturm_ext_interval}
	\begin{isabelle}
		\isanewline
		\ \ \isakeyword{fixes}\ a\ b{\isacharcolon}{\isacharcolon}real\ \isakeyword{and}\ p{\isacharcolon}{\isacharcolon}{\isachardoublequoteopen}real\ poly{\isachardoublequoteclose}\isanewline
		\ \ \isakeyword{assumes}\ {\isachardoublequoteopen}a\ {\isacharless}\ b{\isachardoublequoteclose}\ \isakeyword{and}\ {\isachardoublequoteopen}poly\ p\ a\ {\isasymnoteq}\ {\isadigit{0}}{\isachardoublequoteclose}\ \isanewline
		\ \ \ \ \ \ \isakeyword{and}\ {\isachardoublequoteopen}poly\ p\ b\ {\isasymnoteq}\ {\isadigit{0}}{\isachardoublequoteclose}\isanewline
		\ \ \isakeyword{shows}\ {\isachardoublequoteopen}proots{\isacharunderscore}count\ p\ {\isacharbraceleft}x{\isachardot}\ a\ {\isacharless}\ x\ {\isasymand}\ x\ {\isacharless}\ b{\isacharbraceright}\isanewline
		\ \ \ \ \ \ \ \ \ \ \ \ {\isacharequal}\ changes{\isacharunderscore}itv{\isacharunderscore}smods{\isacharunderscore}ext\ a\ b\ p\ {\isacharparenleft}pderiv\ p{\isacharparenright}{\isachardoublequoteclose}
	\end{isabelle}
\end{theorem}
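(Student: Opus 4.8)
The plan is to prove this by strong induction on $\deg(P)$, peeling off one ordinary signed remainder block at a time via the prefix decomposition of Lemma~\ref{thm:smods_ext_prefix}. Write $r = \operatorname{last}(\SRemS(P,P'))$; by Lemma~\ref{thm:last_smods_gcd} this is a nonzero scalar multiple of $G := \gcd(P,P')$, so $r$ has exactly the same roots as $G$ with the same multiplicities. If $P$ is squarefree then $G$ (hence $r$) is a nonzero constant, the extended sequence collapses to $\SRemS(P,P')$, and the result is immediate from classical Sturm (Theorem~\ref{thm:sturm}) together with the fact that $\Num_{\mathbb{R}}(P;(a,b)) = \NumD_{\mathbb{R}}(P;(a,b))$ when $P$ is squarefree. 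Otherwise $\deg(r) \le \deg(P') < \deg(P)$, and since $r \mid P$ we have $r(a)\neq 0$ and $r(b)\neq 0$, so the induction hypothesis applies to $r$:
\[
\Num_{\mathbb{R}}(r;(a,b)) = \Var(\SRemSE(r,r');a,b).
\]

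The first key step is a variation-splitting identity. Lemma~\ref{thm:smods_ext_prefix} gives
\[
\SRemSE(P,P') = \SRemS(P,P') \mathbin{@} \operatorname{tl}(\SRemSE(r,r')),
\]
and since $\SRemS(P,P')$ ends in $r$ while $\SRemSE(r,r')$ begins with $r$, the concatenated list is exactly $[P,P',\dots,r,r',\dots]$ in which $r$ occurs once as a shared pivot. I would prove (or invoke an existing \texttt{changes}-splitting lemma) that whenever the pivot is nonzero at the evaluation point, sign variations split additively across it; concretely, for $r(x)\neq 0$,
\[
\Var(\SRemSE(P,P');x) = \Var(\SRemS(P,P');x) + \Var(\SRemSE(r,r');x).
\]
Because $r\mid P$ and $P(a),P(b)\neq 0$, the pivot is nonzero at both endpoints, so subtracting the two instances yields the same additive split for the interval quantity $\Var(\cdot;a,b)$. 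The nonvanishing of $r$ at the endpoints is precisely what prevents the zero-dropping in $\Var$ from merging variations across the block boundary.

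The second key step is the multiplicity bookkeeping relating the three root counts. Over a field of characteristic zero, $\mu(x,G) = \mu(x,P)-1$ at every root $x$ of $P$ and $\mu(x,G)=0$ elsewhere; summing over the roots in $(a,b)$ gives
\[
\Num_{\mathbb{R}}(G;(a,b)) = \Num_{\mathbb{R}}(P;(a,b)) - \NumD_{\mathbb{R}}(P;(a,b)),
\]
and the same holds with $r$ in place of $G$ since they share roots and multiplicities. Applying classical Sturm (Theorem~\ref{thm:sturm}) to the first block gives $\Var(\SRemS(P,P');a,b) = \NumD_{\mathbb{R}}(P;(a,b))$. Chaining the variation split, the induction hypothesis for $r$, and this multiplicity identity then collapses the right-hand side to $\NumD_{\mathbb{R}}(P;(a,b)) + \bigl(\Num_{\mathbb{R}}(P;(a,b)) - \NumD_{\mathbb{R}}(P;(a,b))\bigr) = \Num_{\mathbb{R}}(P;(a,b))$.

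I expect the main obstacle to be the interaction between the \texttt{tl}-gluing of the blocks and the zero-dropping semantics of $\Var$: establishing the additive splitting cleanly requires the sharp observation that every block boundary $r$ divides $P$ and is therefore nonzero at the endpoints, so that no spurious cancellation or merging of sign changes occurs. The secondary piece of work is the characteristic-zero multiplicity law $\mu(x,\gcd(P,P')) = \mu(x,P)-1$, lifted from individual roots to the aggregate counts $\Num_{\mathbb{R}}$ and $\NumD_{\mathbb{R}}$; everything else is routine once these two ingredients are in place.
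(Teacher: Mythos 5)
Your proposal is correct and follows essentially the same route as the paper's own proof: the paper likewise peels off the first signed-remainder block via Lemma~\ref{thm:smods_ext_prefix}, identifies $R=\operatorname{last}(\SRemS(P,P'))$ with $\gcd(P,P')$ up to a scalar via Lemma~\ref{thm:last_smods_gcd}, splits $\Var(\SRemSE(P,P');a,b)$ additively into $\Var(\SRemS(P,P');a,b)+\Var(\SRemSE(R,R');a,b)$, applies classical Sturm (Theorem~\ref{thm:sturm}) to the prefix and the induction hypothesis to $R$, and uses the multiplicity identity $\mu(x,P)=1+\mu(x,R)$ at roots of $P$ to obtain $\Num_{\mathbb{R}}(P;(a,b))=\NumD_{\mathbb{R}}(P;(a,b))+\Num_{\mathbb{R}}(R;(a,b))$. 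The only differences are cosmetic---the paper inducts on the length of $\SRemSE(P,P')$ with a case split on $P'=0$ rather than on $\deg P$ with a squarefree base case---and your explicit observations that the pivot $R$ divides $P$ (hence is nonzero at $a$ and $b$, justifying both the variation split and the applicability of the induction hypothesis) are details the paper leaves implicit.
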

\noindent where \isa{changes{\isacharunderscore}itv{\isacharunderscore}smods{\isacharunderscore}ext\ a\ b\ p\ {\isacharparenleft}pderiv\ p{\isacharparenright}} encodes $\Var(\SRemSE(P,P');a,b)$. Essentially, Theorem \ref{thm:sturm_ext_interval} claims that under some conditions
\begin{equation*} \label{eq:extended_sturm_4}
		\Num_{\mathbb{R}}(P;(a,b)) = \Var(\SRemSE(P,P');a,b).
\end{equation*}
\begin{proof}[Proof of Theorem \ref{thm:sturm_ext_interval}]
\sloppy
	By induction on the length of $\SRemSE(P,P')$, and case analysis on whether $P'=0$. 
	When $P'=0$, the proof is trivial since both $	\Num_{\mathbb{R}}(P;(a,b)) =0$ and $\Var(\SRemSE(P,P');a,b)=0$ provided $P \neq 0$.
	
	When $P' \neq 0$, we let $R$ be the last element of $\SRemS(P,P')$, and Lemma \ref{thm:last_smods_gcd} gives us
	\begin{equation} \label{eq:extended_sturm_41}
		R = \lc(R) \gcd(P,P'),
	\end{equation}
	where $\lc(R)$ is the leading coefficient of $R$.
	
	The essential part of the proof is to relate $\Num(P;(a,b))$ and $\Num(R;(a,b))$:
	\begin{align}
	 & \Num_{\mathbb{R}}(P;(a,b))  \notag\\
	 & = \sum_{x : P(x) = 0 \wedge x \in (a,b)} \mu(x,P) \label{eq:extended_sturm_5}\\
	&=  \sum_{x : P(x) = 0 \wedge x \in (a,b)} (1+\mu(x,R)) \label{eq:extended_sturm_6}\\ 
	&=  \NumD_{\mathbb{R}}(P;(a,b)) + \sum_{x : P(x) = 0 \wedge x \in (a,b)} \mu(x,R) \label{eq:extended_sturm_7}\\
	&=  \NumD_{\mathbb{R}}(P;(a,b)) + \sum_{x : R(x) = 0 \wedge x \in (a,b)} \mu(x,R) \label{eq:extended_sturm_8}\\
	&=  \NumD_{\mathbb{R}}(P;(a,b)) + \Num_{\mathbb{R}}(R;(a,b)). \label{eq:extended_sturm_9}
	\end{align}
	In particular, (\ref{eq:extended_sturm_6}) has been derived by
	\begin{multline*}
		\mu(x,P) = 1+\mu(x,P') = 1 + \min(\mu(x,P'),\mu(x,P)) \\ = 1 + \mu(x,\gcd(P,P')) = 1 + \mu(x,R),
	\end{multline*}
	 provided $P(x)=0$ and (\ref{eq:extended_sturm_41}). Also, (\ref{eq:extended_sturm_8}) is because $\{x \mid R(x)=0 \}  \subseteq  \{x \mid P(x)=0 \}$ and $\mu(y,R) = 0$ for all $y \in (a,b)$ such that $P(y) = 0$ and $R(y) \neq 0$. With (\ref{eq:extended_sturm_5}) - (\ref{eq:extended_sturm_9}), we have 
	\begin{equation} \label{eq:extended_sturm_10}
		\Num_{\mathbb{R}}(P;(a,b))  =  \NumD_{\mathbb{R}}(P;(a,b)) + \Num_{\mathbb{R}}(R;(a,b)).
	\end{equation}
	
	Moreover, the induction hypothesis yields 
	\begin{equation} \label{eq:extended_sturm_11}
		\Num_{\mathbb{R}}(R;(a,b))  = \Var(\SRemSE(R,R');a,b),
	\end{equation}
		and the classical Sturm theorem (Theorem \ref{thm:sturm}) yields
	\begin{equation} \label{eq:extended_sturm_12}
		\NumD_{\mathbb{R}}(P;(a,b))  = \Var(\SRemS(P,P');a,b).
	\end{equation}
	Also, by joining Lemma \ref{thm:smods_ext_prefix} and definition of $\Var$, we may have
	\begin{multline} \label{eq:extended_sturm_13}
		\Var(\SRemSE(P,P');a,b) = \Var(\SRemS(P,P');a,b) \\ + \Var(\SRemSE(R,R');a,b).
	\end{multline}
	
	Finally, putting together (\ref{eq:extended_sturm_10}), (\ref{eq:extended_sturm_11}), (\ref{eq:extended_sturm_12}), and (\ref{eq:extended_sturm_13}) yields
	\[
		\Num_{\mathbb{R}}(P;(a,b)) = \Var(\SRemSE(P,P');a,b),
	\]
	concluding the proof.
\end{proof}

Be aware that Lemma \ref{thm:sturm_ext_interval} only corresponds to the bounded version of the extended Sturm's theorem. Our formal development also contains unbounded versions (i.e., when $a=-\infty$ or $b=+\infty$).

\section{Applications to Counting Complex Roots} \label{sec:counting_complex_roots}

In the previous sections (\S\ref{sec:budan_fourier} and \S\ref{sec:extending_sturm}), we have demonstrated our enhancements for counting real roots in Isabelle/HOL. In this section, we will further apply those enhancements to improve existing complex-root-counting procedures  \cite{li_cauchy_index}. 

In particular, we will first review the idea of counting complex roots through Cauchy indices in \S\ref{secsub:cauchy_index}. After that, we will apply the extended Sturm's theorem (\S\ref{sec:extending_sturm}) to remove the constraint of forbidding roots on the border when counting complex roots in the upper half-plane (\S\ref{secsub:counting_upper}). In \S\ref{secsub:counting_ball}, we will combine the improved counting procedure (for roots in the upper half-plane) and  the base transformation in \S\ref{secsub:base_transformation} to build a verified procedure to count complex roots within a ball. Finally, we give some remarks about counting complex roots (\S\ref{secsub:complex_remark}).

\subsection{Number of Complex Roots and the Cauchy Index} \label{secsub:cauchy_index}

In this section we will briefly review the idea of counting complex roots through Cauchy indices, For a more detailed explanation, the reader can refer to our previous work \cite{li_cauchy_index}.

Thanks to the argument principle, the number of complex roots can be counted by evaluating a contour integral:
\begin{equation} \label{eq:cauchy_index_1}
\frac{1}{2 \pi i} \oint_\gamma \frac{P'(x)}{P(x) } d x = N,
\end{equation}
where $P \in \mathbb{C}[x]$, $P'(x)$ is the first derivative of $P$ and $N$ is the number of complex roots of $p$ (counting multiplicity) inside the loop $\gamma$. Also, by the definition of winding numbers, we have 
\begin{equation}  \label{eq:cauchy_index_2}
n(P \circ \gamma,0) = 	\frac{1}{2 \pi i} \oint_\gamma \frac{P'(x)}{P(x) } d x,
\end{equation}
where $\circ$ is function composition and $n(P \circ \gamma,0)$ is the winding number of the path $P \circ \gamma$ around $0$. Combining (\ref{eq:cauchy_index_1}) and  (\ref{eq:cauchy_index_2}) enables us to count (complex) roots by evaluating a winding number:
\begin{equation} \label{eq:cauchy_index_3}
N = n(P \circ \gamma,0).
\end{equation}
Now the question becomes how to evaluate the winding number $n(P \circ \gamma,0)$. One of the solutions is to utilise the Cauchy index.

To define the Cauchy index, we need to first introduce the concept of \emph{jumps}:
\begin{definition} [Jump] \label{def:jumpF}
	For $f : \mathbb{R} \rightarrow \mathbb{R}$ and $x \in \mathbb{R}$, we define
	\[
	\jump_+(f,x) =
	\begin{dcases}
	\frac{1}{2} & \mbox{if } \lim_{u \rightarrow x^+} f(u)=+\infty,\\
	-\frac{1}{2} & \mbox{if } \lim_{u \rightarrow x^+} f(u)=-\infty,\\
	0 &  \mbox{ otherwise.}\\
	\end{dcases}
	\]
	\[
	\jump_-(f,x) =
	\begin{dcases}
	\frac{1}{2} & \mbox{if } \lim_{u \rightarrow x^-} f(u)=+\infty,\\
	-\frac{1}{2} & \mbox{if } \lim_{u \rightarrow x^-} f(u)=-\infty,\\
	0 &  \mbox{ otherwise.}\\
	\end{dcases}
	\]
\end{definition}
We can now proceed to define Cauchy indices by summing up these jumps over an interval and along a path.
\begin{definition}[Cauchy index] \label{def:ind}
	For $f : \mathbb{R} \rightarrow \mathbb{R}$ and $a, b \in \extR$, the Cauchy index of $f$ over the interval $[a,b]$ is defined as
	\[
	\Ind_a^b(f) = \sum_{x \in [a,b)} \jump_+(f,x) \;-\! \sum_{x \in (a,b]} \jump_-(f,x).
	\]
\end{definition}
\begin{definition}[Cauchy index along a path] \label{def:indp}
	Given a path $\gamma : [0,1] \rightarrow \mathbb{C}$ and a point $z_0 \in \mathbb{C}$, the Cauchy index along $\gamma$ about $z_0$ is defined as
	\[
	\Indp(\gamma,z_0) = \Ind_0^1(f),
	\]
	where 
	\[
	f(t) = \frac{\Im(\gamma(t) - z_0)}{\Re(\gamma(t) - z_0).}
	\]
\end{definition}

As the Cauchy index $\Indp(\gamma,z_0)$ captures the way that $\gamma$ crosses the line $\{z \mid \Re(z)=\Re(z_0)\}$, we can evaluate the winding number through the Cauchy index:
\begin{theorem} \label{thm:winding_eq}
	Given a valid path $\gamma : [0,1] \rightarrow \mathbb{C}$ and a point $z_0 \in \mathbb{C}$, such that $\gamma$ is a loop and $z_0$ is not on the image of $\gamma$, we have 
	\[
	n(\gamma,z_0) = - \frac{\Indp(\gamma,z_0)}{2}.
	\]
\end{theorem}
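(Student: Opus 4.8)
The plan is to reduce to the case $z_0 = 0$ and then relate both sides to the total change of a continuous argument of $\gamma$. First I would translate the path, setting $\delta(t) = \gamma(t) - z_0$; translation preserves both the winding number and the Cauchy index, so $n(\gamma, z_0) = n(\delta, 0)$ and, directly from Definitions \ref{def:indp} and \ref{def:ind}, $\Indp(\gamma, z_0) = \Ind_0^1(f)$ with $f(t) = \Im(\delta(t))/\Re(\delta(t))$. It therefore suffices to prove $n(\delta, 0) = -\tfrac{1}{2}\Ind_0^1(f)$ for a loop $\delta$ avoiding the origin.

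Next I would invoke the standard description of the winding number as the increment of a continuous argument: because $\delta$ is a valid path missing $0$, there is a continuous $\theta : [0,1] \to \mathbb{R}$ with $\delta(t) = |\delta(t)|\,e^{i\theta(t)}$ and $n(\delta, 0) = (\theta(1) - \theta(0))/(2\pi)$, and since $\delta$ is a loop this increment is an integer multiple of $2\pi$. Since $f = \Im(\delta)/\Re(\delta) = \tan\theta$ wherever $\Re(\delta) \neq 0$, the poles of $f$ are exactly the times at which $\theta$ passes through an odd multiple of $\pi/2$, i.e.\ the crossings of the vertical line $\{\Re(z) = \Re(z_0)\}$.

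The heart of the argument is a local analysis at each such crossing together with an additivity principle. I would split $[0,1]$ at the crossing times; on each subinterval $\delta$ stays in one open half-plane, so $\theta$ varies within an interval of length at most $\pi$ and contributes no jump, while at each crossing the sign of $\Re(\delta)$ flips and a short computation of the one-sided limits of $f$ shows that the pair $(\jump_+(f,x),\, \jump_-(f,x))$ equals $(-\tfrac12, \tfrac12)$ when $\theta$ increases through the crossing and $(\tfrac12, -\tfrac12)$ when it decreases. Either way the contribution $\jump_+(f,x) - \jump_-(f,x)$ to $\Ind_0^1(f)$ equals $-1$ times the sign of the argument increment, so each crossing records exactly one half-turn carrying a minus sign. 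Telescoping the per-crossing increments recovers $\theta(1) - \theta(0)$: the net signed number of crossings is $(\theta(1)-\theta(0))/\pi = 2\,n(\delta,0)$, whence $\Ind_0^1(f) = -2\,n(\delta, 0)$, which is the claim. In Isabelle this additivity would be realised by an induction on the number of crossings, reusing the index-additivity lemmas of the Cauchy-index development \cite{li_cauchy_index}.

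The step I expect to be the main obstacle is the rigorous local-to-global passage: making precise that a valid (piecewise $C^1$) loop meets the vertical line in finitely many points with well-defined one-sided infinite limits of $f$, handling tangential contacts where $\Re(\delta)$ touches $0$ without changing sign (there $f$ tends to the same infinite value from both sides, so $\jump_+(f,x)$ and $\jump_-(f,x)$ coincide and cancel in $\Ind_0^1(f)$), and keeping the half-integer sign bookkeeping consistent through the induction. Once the crossing classification and the additivity of $\Ind_0^1$ under splitting are established, combining them with $n(\delta,0) = (\theta(1)-\theta(0))/(2\pi)$ is routine.
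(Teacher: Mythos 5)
First, a point of reference: this paper does not actually prove Theorem~\ref{thm:winding_eq} anywhere in its text---\S\ref{secsub:cauchy_index} is explicitly a review, and the theorem is imported from the authors' earlier formalisation \cite{li_cauchy_index}. Measured against that prior proof, your outline is essentially the same argument: the formal development also splits the loop at its meetings with the vertical line $\{z \mid \Re(z)=\Re(z_0)\}$, analyses each resulting subpath confined to a closed half-plane, and telescopes the per-segment contributions into the Cauchy index. The main presentational difference is that the mechanised proof avoids constructing a global continuous argument lift $\theta$: it instead uses additivity of the winding number over subpaths together with half-plane lemmas expressing $\Re(n(\gamma,z_0))$ of a subpath in terms of jump values of $f$ at the subpath's two endpoints, and these endpoint jumps sum exactly to $\Indp(\gamma,z_0)$. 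Your local sign computation (a counterclockwise crossing gives $(\jump_+,\jump_-)=(-\tfrac12,\tfrac12)$, hence contribution $-1$) is correct, as is the treatment of tangential contacts.

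There is, however, one genuine gap, and it sits precisely at the step you call the ``main obstacle'': the claim that a valid (piecewise $C^1$) loop meets the vertical line in finitely many points is simply false. A $C^1$ path whose real part behaves like $t^3\sin(1/t)$ near a parameter value crosses $\Re(z)=\Re(z_0)$ infinitely often, with $\Im(\gamma(t)-z_0)$ bounded away from $0$ there, so $f$ has infinitely many points of nonzero jump; your induction on the number of crossings then cannot be set up, and the sum in Definition~\ref{def:ind} is not even well defined (under Isabelle's convention for sums over infinite sets the theorem would in fact become false). This is exactly why the mechanised statement in \cite{li_cauchy_index} carries an extra hypothesis---that $[0,1]$ decomposes into finitely many maximal segments on which $\Re(\gamma(t)-z_0)$ has constant sign (\isa{finite\_ReZ\_segments})---which the informal statement of Theorem~\ref{thm:winding_eq} suppresses, and which is discharged separately for the concrete paths $P\circ\gamma$ used by the root-counting procedures. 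Relatedly, your crossing classification omits the case in which the path runs \emph{along} the line for a whole subinterval (there $f\equiv 0$ under the convention $x/0=0$, and the relevant jumps occur at the two ends of that subinterval); the segment-wise induction in the formal proof must treat this case explicitly. So: right approach and correct local analysis, but as proposed the proof cannot close without adding the finiteness hypothesis (or restricting the class of paths) that the formal theorem actually requires.
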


Combining Theorem \ref{thm:winding_eq} and (\ref{eq:cauchy_index_3}) gives us a way to count complex polynomial roots:
\begin{equation} \label{eq:cauchy_index_4}
	N = - \frac{\Indp(P \circ \gamma,z_0)}{2}.
\end{equation}
What is more interesting is that $\Indp(P \circ \gamma,z_0)$ (or $\Ind^b_a(f, z_0)$) can be calculated through remainder sequences and sign variations when $P \circ \gamma$ (or $f$) is a rational function. That is, the right-hand side of (\ref{eq:cauchy_index_4}) becomes executable, and we have a procedure to count $N$.

\subsection{Resolving the Root-on-the-Border Issue when Counting Roots within a Half-Plane} \label{secsub:counting_upper}
Fundamentally, the complex-root-counting procedure in the previous section relies on the winding number and the argument principle, both of which disallow roots of $P$ on the border $\gamma$. As a result, both mechanised procedures --- counting roots within a rectangle and within a half-plane --- in our previous work will fail whenever there is a root on the border. In this section, we will utilise our newly mechanised extended Sturm's theorem to resolve the root-on-the-border issue when counting roots within a half-plane. Note that the root-on-the-border issue for the rectangular case, unfortunately, remains: we leave this issue for future work.

Considering that any half-plane can be transformed into the upper half-plane through a linear-transformation, we only need to focus on the upper-half-plane case:
\begin{isabelle}
	\isacommand{definition}\isamarkupfalse%
	\ proots{\isacharunderscore}upper\ {\isacharcolon}{\isacharcolon}\ {\isachardoublequoteopen}complex\ poly\ {\isasymRightarrow}\ nat{\isachardoublequoteclose}\ \isanewline
	\ \ \ \ \isakeyword{where}\isanewline
	\ \ {\isachardoublequoteopen}proots{\isacharunderscore}upper\ p\ {\isacharequal}\ proots{\isacharunderscore}count\ p\ {\isacharbraceleft}z{\isachardot}\ Im\ z\ {\isachargreater}\ {\isadigit{0}}{\isacharbraceright}{\isachardoublequoteclose}
\end{isabelle}
where \isa{proots{\isacharunderscore}upper\ p} encodes $\Num_{\mathbb{C}}(P;\{ z \mid \Im(z) > 0  \})$ --- the number of complex roots of $P$ within the upper half-plane $\{ z \mid \Im(z) > 0  \}$.

Previously, we relied on the following lemma to count $\Num_{\mathbb{C}}(P;\{ z \mid \Im(z) > 0  \})$:
\begin{lemma}[\isa{proots{\isacharunderscore}upper{\isacharunderscore}cindex{\isacharunderscore}eq}]
	\label{thm:proots_upper_cindex}
	\begin{isabelle}
		\isanewline
		\ \ \isakeyword{fixes}\ p{\isacharcolon}{\isacharcolon}{\isachardoublequoteopen}complex\ poly{\isachardoublequoteclose}\isanewline
		\ \ \isakeyword{assumes}\ {\isachardoublequoteopen}lead{\isacharunderscore}coeff\ p\ {\isacharequal}\ {\isadigit{1}}{\isachardoublequoteclose}\ \isanewline
		\ \ \ \ \ \ \isakeyword{and}\ no{\isacharunderscore}real{\isacharunderscore}roots{\isacharcolon}\ {\isachardoublequoteopen}{\isasymforall}x{\isasymin}proots\ p{\isachardot}\ Im\ x\ {\isasymnoteq}\ {\isadigit{0}}{\isachardoublequoteclose}\ \isanewline
		\ \ \ \ \isakeyword{shows}\ {\isachardoublequoteopen}proots{\isacharunderscore}upper\ p\ {\isacharequal}\ {\isacharparenleft}degree\ p\ {\isacharminus}\ \isanewline
		\ \ \ \ \ \ \ \ \ \ \ \ \ \ cindex{\isacharunderscore}poly{\isacharunderscore}ubd\ {\isacharparenleft}map{\isacharunderscore}poly\ Im\ p{\isacharparenright}\ \isanewline
		\ \ \ \ \ \ \ \ \ \ \ \ \ \ \ \ \ \ \ \ \ \ \ \ \ \ \ \ \ \ {\isacharparenleft}map{\isacharunderscore}poly\ Re\ p{\isacharparenright}{\isacharparenright}\ {\isacharslash}\ {\isadigit{2}}{\isachardoublequoteclose}
		\end{isabelle}
\end{lemma}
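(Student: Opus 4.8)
The plan is to realise the upper half-plane as a limit of large semicircular regions and apply the winding-number root count of~(\ref{eq:cauchy_index_4}) together with Theorem~\ref{thm:winding_eq}. For $R>0$ I would take the counterclockwise loop $\gamma_R$ obtained by concatenating the real segment $L_R$ from $-R$ to $R$ with the upper semicircular arc $A_R(s)=R\,e^{i\pi s}$ (for $s\in[0,1]$) running from $R$ back to $-R$. The no-real-roots hypothesis ensures that $P$ never vanishes on $L_R$, and since $P$ has only finitely many zeros, for every sufficiently large $R$ it vanishes nowhere on $A_R$ while each root with positive imaginary part lies strictly inside $\gamma_R$. Thus, for large $R$, the number of zeros enclosed by $\gamma_R$ equals $\Num_{\mathbb{C}}(P;\{z\mid \Im z>0\})$, and~(\ref{eq:cauchy_index_4}) with $z_0=0$ yields $\Num_{\mathbb{C}}(P;\{z\mid \Im z>0\})=-\tfrac12\,\Indp(P\circ\gamma_R,0)$.

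Next I would split this index using additivity of $\Ind$ over concatenated subpaths, which holds unconditionally for the half-jump convention of Definition~\ref{def:ind}, obtaining $\Indp(P\circ\gamma_R,0)=\Indp(P\circ L_R,0)+\Indp(P\circ A_R,0)$. On $L_R$ we have $P(t)=(\Re P)(t)+i\,(\Im P)(t)$ for real $t$, where $\Re P$ and $\Im P$ denote the real- and imaginary-part polynomials of $P$ (the coefficientwise real/imaginary part operations in the statement); hence $\Indp(P\circ L_R,0)=\Ind_{-R}^{R}(\Im P/\Re P)$. Because $P$ is monic of degree $n$, its leading coefficient is real, so $\Re P$ has degree $n$ and only finitely many, bounded, real roots; consequently, for $R$ large enough this line index already equals the unbounded Cauchy index $\Ind_{-\infty}^{+\infty}(\Im P/\Re P)$ appearing on the right-hand side of the goal, with no jumps escaping to infinity.

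The remaining---and principal---difficulty is the arc term. Since $P$ is monic of degree $n$, on $A_R$ we have $P(Re^{i\pi s})=R^{n}e^{i n\pi s}\bigl(1+o(1)\bigr)$ uniformly in $s$ as $R\to\infty$, so the argument of $P\circ A_R$ is asymptotically $n\pi s$ and the ratio $\Im/\Re$ behaves like $\tan(n\pi s)$, whose Cauchy index over $[0,1]$ is $-n$: each of its $n$ interior poles contributes $\jump_+-\jump_-=-\tfrac12-\tfrac12=-1$. I would therefore prove $\Indp(P\circ A_R,0)\to-\deg P$. Making this rigorous is the crux: one must show the half-jumps of $\Im(P\circ A_R)/\Re(P\circ A_R)$ stabilise to those of the leading monomial, for instance by a perturbation argument that keeps $P\circ A_R$ away from $0$ and controls the sign changes of its real part, so that the lower-order coefficients can neither create nor destroy crossings in the limit. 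If the previous Cauchy-index development already packages this arc estimate as a lemma, the step reduces to citing it.

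Combining the three relations, the quantity $-\tfrac12\bigl(\Indp(P\circ L_R,0)+\Indp(P\circ A_R,0)\bigr)$ is eventually constant in $R$ and, in the limit, equals $-\tfrac12\bigl(\Ind_{-\infty}^{+\infty}(\Im P/\Re P)-\deg P\bigr)=\tfrac12\bigl(\deg P-\Ind_{-\infty}^{+\infty}(\Im P/\Re P)\bigr)$, which is precisely the claimed formula. The monic hypothesis is exactly what fixes the arc contribution at $-\deg P$---a nonzero imaginary part in the leading coefficient would shift the asymptotic phase---while the no-real-roots hypothesis both legitimises the enclosure count and makes the line-segment index well defined.
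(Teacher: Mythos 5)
Your proposal is correct and follows essentially the same route as the paper's: this lemma is inherited from the earlier Cauchy-index development reviewed in \S\ref{secsub:cauchy_index}, whose proof likewise applies the winding-number root count (\ref{eq:cauchy_index_4}) together with Theorem \ref{thm:winding_eq} to a large semicircular loop, with the real segment contributing $\Ind_{-\infty}^{+\infty}(\Im(P)/\Re(P))$ and the arc contributing $-\deg P$ by monicity. The arc estimate you flag as the crux is indeed packaged there as a standalone limit lemma (phrased via the change of argument, i.e.\ the winding number of $P$ along large arcs, rather than via the arc's own Cauchy index, which sidesteps the instability of half-jumps under perturbation), exactly as you anticipated.
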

\noindent where 
\begin{itemize}
	\item \isa{lead{\isacharunderscore}coeff\ p\ {\isacharequal}\ {\isadigit{1}}} asserts the polynomial $P$ to be monic,
	\item the assumption \isa{no\_real\_roots} asserts that $P$ does not have any root on the real axis (i.e., the border). This assumption is, as mentioned earlier, because the argument principle disallows roots on the border.
	\item \isa{cindex{\isacharunderscore}poly{\isacharunderscore}ubd\ {\isacharparenleft}map{\isacharunderscore}poly\ Im\ p{\isacharparenright}\ {\isacharparenleft}map{\isacharunderscore}poly\ Re\ p{\isacharparenright}} encodes the Cauchy index \[\Ind^{+\infty}_{-\infty}\left( \lambda x. \frac{\Im(P(x))}{\Re(P(x))} \right ),
	\]
	which can be computed (through remainder sequences and sign variations) due to $\lambda x. \Im(P(x))/\Re(P(x))$ being a rational function.
\end{itemize}

To solve the root-on-the-border issue in Lemma \ref{thm:proots_upper_cindex}, we observe the effect of removing roots from a horizontal border:
\begin{lemma}[\isa{cindexE{\isacharunderscore}roots{\isacharunderscore}on{\isacharunderscore}horizontal{\isacharunderscore}border}]
	\label{thm:cindexE_roots_on_horizontal_border}
	\begin{isabelle}
		\isanewline
		\ \ \isakeyword{fixes}\ p\ q\ r\ {\isacharcolon}{\isacharcolon}{\isachardoublequoteopen}complex\ poly{\isachardoublequoteclose}\ \isakeyword{and}\ s\isactrlsub t{\isacharcolon}{\isacharcolon}complex\ \isanewline
		\ \ \ \ \isakeyword{and}\ a\ b\ s{\isacharcolon}{\isacharcolon}real\isanewline
		\ \ \isakeyword{defines}\ {\isachardoublequoteopen}{\isasymgamma}{\isasymequiv}linepath\ s\isactrlsub t\ {\isacharparenleft}s\isactrlsub t\ {\isacharplus}\ of{\isacharunderscore}real\ s{\isacharparenright}{\isachardoublequoteclose}\isanewline
		\ \ \isakeyword{assumes}\ {\isachardoublequoteopen}p\ {\isacharequal}\ q\ {\isacharasterisk}\ r{\isachardoublequoteclose}\ \isakeyword{and}\ {\isachardoublequoteopen}lead{\isacharunderscore}coeff\ r\ {\isacharequal}\ {\isadigit{1}}{\isachardoublequoteclose}\ \isanewline
		\ \ \ \ \ \ \isakeyword{and}\ {\isachardoublequoteopen}{\isasymforall}x{\isasymin}proots\ r{\isachardot}\ Im\ x\ {\isacharequal}\ Im\ s\isactrlsub t{\isachardoublequoteclose}\isanewline
		\ \ \ \ \isakeyword{shows}\ {\isachardoublequoteopen}cindexE\ a\ b\ {\isacharparenleft}{\isasymlambda}t{\isachardot}\ Im\ {\isacharparenleft}{\isacharparenleft}poly\ p\ {\isasymcirc}\ {\isasymgamma}{\isacharparenright}\ t{\isacharparenright}\ \isanewline
		\ \ \ \ \ \ \ \ \ \ \ \ \ \ \ \ \ \ \ \ \ \ \ \ \ \ {\isacharslash}\ Re\ {\isacharparenleft}{\isacharparenleft}poly\ p\ {\isasymcirc}\ {\isasymgamma}{\isacharparenright}\ t{\isacharparenright}{\isacharparenright}\ {\isacharequal}\isanewline
		\ \ \ \ \ \ \ \ \ \ cindexE\ a\ b\ {\isacharparenleft}{\isasymlambda}t{\isachardot}\ Im\ {\isacharparenleft}{\isacharparenleft}poly\ q\ {\isasymcirc}\ {\isasymgamma}{\isacharparenright}\ t{\isacharparenright}\ \isanewline
		\ \ \ \ \ \ \ \ \ \ \ \ \ \ \ \ \ \ \ \ \ \ \ \ \ \ {\isacharslash}\ Re\ {\isacharparenleft}{\isacharparenleft}poly\ q\ {\isasymcirc}\ {\isasymgamma}{\isacharparenright}\ t{\isacharparenright}{\isacharparenright}{\isachardoublequoteclose}
	\end{isabelle}
\end{lemma}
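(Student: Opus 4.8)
The plan is to exploit the hypothesis that every root of $R$ lies on the horizontal line $\{z \mid \Im(z) = \Im(s_t)\}$ that the segment $\gamma$ traces. First I would show that the composite $R \circ \gamma$ is a \emph{real-valued} function of the parameter $t$. Since $\lc(R) = 1$, the splitting lemma for monic complex polynomials lets us write $R(z) = \prod_j (z - z_j)$ with each $z_j$ a root of $R$, so $\Im(z_j) = \Im(s_t)$. The linepath satisfies $\gamma(t) = s_t + t s$, with $s$ embedded as a real element of $\mathbb{C}$, whence $\Im(\gamma(t)) = \Im(s_t)$ for every $t$; therefore each factor $\gamma(t) - z_j$ has vanishing imaginary part and is real, and so is their product $R(\gamma(t))$. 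In Isabelle this amounts to combining the factorisation of a monic polynomial into linear factors with the fact that the imaginary part is constant along a horizontal linepath.

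Once $R \circ \gamma$ is known to be real, I would use $P = Q R$ to factor both numerator and denominator of the integrand. Writing $h(t) = R(\gamma(t)) \in \mathbb{R}$ and $w(t) = Q(\gamma(t))$, multiplicativity of polynomial evaluation gives $\Re(P(\gamma(t))) = h(t)\,\Re(w(t))$ and $\Im(P(\gamma(t))) = h(t)\,\Im(w(t))$. Consequently, at every $t$ with $h(t) \neq 0$ the factor $h(t)$ cancels and
\[
\frac{\Im(P(\gamma(t)))}{\Re(P(\gamma(t)))} = \frac{\Im(Q(\gamma(t)))}{\Re(Q(\gamma(t)))}.
\]
Because $R$ is monic (hence nonzero) and $\gamma$ is affine in $t$, the real function $h$ is a nonzero polynomial in $t$, so its zero set $Z = \{t \mid h(t) = 0\}$ is finite. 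Thus the two integrands agree on the cofinite set $[a,b] \setminus Z$.

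Finally I would conclude that the two Cauchy indices coincide. In the spirit of Definition \ref{def:ind}, $\mathrm{cindexE}\,a\,b$ is a finite sum of the one-sided jumps $\jump_+$ and $\jump_-$ over the points of the interval, and each such jump at a point $x$ is determined solely by the one-sided limits of the integrand on a \emph{punctured} neighbourhood of $x$. Since $Z$ is finite, for every $x$ and every small enough $\delta$ the one-sided stretches $(x, x+\delta)$ and $(x-\delta, x)$ avoid $Z$, so the two integrands are eventually equal as $t \to x^+$ and as $t \to x^-$; hence their one-sided limits --- and therefore the jumps $\jump_\pm$ --- agree at every $x$, even at the exceptional points of $Z$ where both integrands take the indeterminate form $0/0$. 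Summing the equal jumps yields the claimed equality. The main obstacle is exactly this last step: faithfully reducing $\mathrm{cindexE}$ to its pointwise jump contributions and transporting the principle ``eventual equality along a one-sided filter implies equal one-sided limit'' through Isabelle's filter and limit infrastructure, so that the finitely many points of $Z$ where the integrands genuinely differ contribute nothing to the index.
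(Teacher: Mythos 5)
Your strategy is the natural one and, away from one degenerate configuration, it is correct: along the horizontal segment $\gamma$ every linear factor of the monic polynomial $R$ evaluates to a real number, so with $h(t)=R(\gamma(t))\in\mathbb{R}$ and $w(t)=Q(\gamma(t))$ you get $\Re(P(\gamma(t)))=h(t)\,\Re(w(t))$ and $\Im(P(\gamma(t)))=h(t)\,\Im(w(t))$, the two quotients agree off the zero set $Z$ of $h$, and \isa{cindexE} is a sum of one-sided jumps, each of which depends only on the function restricted to a punctured one-sided neighbourhood, hence is insensitive to a finite exceptional set. (The paper states this lemma without a textual proof; the two ingredients you identify --- realness of $R\circ\gamma$ along the border, and a congruence rule for \isa{cindexE} modulo finite sets --- are exactly what a proof in the formal development needs.) Note also that your cancellation step is unproblematic even at points where $\Re(w(t))=0$: under Isabelle's convention $x/0=0$ both quotients are $0$ there, so the exceptional set really is just $Z$.

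The genuine gap is the degenerate case $s=0$, which the hypotheses do not exclude. Your finiteness claim reads: ``$R$ is monic and $\gamma$ is affine in $t$, hence $h$ is a nonzero polynomial in $t$, so $Z$ is finite.'' When $s=0$ the path is constant, $\gamma(t)=s_t$ for all $t$, so $h(t)=R(s_t)$ is a constant; if $s_t$ happens to be a root of $R$ (perfectly compatible with the hypothesis $\forall x\in \mathrm{proots}(R).\ \Im(x)=\Im(s_t)$), then $h\equiv 0$ and $Z=\mathbb{R}$, the integrands need not agree anywhere, and the argument collapses. The conclusion still holds in this case, but for a different, trivial reason: both $P\circ\gamma$ and $Q\circ\gamma$ are constant, so both quotients are constant functions of $t$ (again using the $x/0=0$ convention when a denominator vanishes), every jump $\jump_+$ and $\jump_-$ of a constant function is zero, and hence both Cauchy indices equal $0$. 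So your proof needs an initial case split on $s=0$ (equivalently, on whether $h$ vanishes identically); with that added, the argument goes through.
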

\noindent where the polynomial $Q$ is the result of $P$ after removing some roots on the horizontal border $\gamma$. Lemma \ref{thm:cindexE_roots_on_horizontal_border} claims that 
\begin{equation*}
	\Ind^a_b \left(\lambda x.  \frac{\Im(P(\gamma(x))}{\Re(P(\gamma(x))} \right) = \Ind^a_b \left(\lambda x.  \frac{\Im(Q(\gamma(x))}{\Re(Q(\gamma(x))} \right).
\end{equation*}
That is, the Cauchy index will remain the same if we only drop roots on a horizontal border.

We can now refine Lemma \ref{thm:proots_upper_cindex} by dropping the \isa{no\_real\_roots} assumption:
\begin{lemma}[\isa{proots{\isacharunderscore}upper{\isacharunderscore}cindex{\isacharunderscore}eq{\isacharprime}}]
	\label{thm:proots_upper_cindex'}
	\begin{isabelle}
		\isanewline
		\ \ \isakeyword{fixes}\ p{\isacharcolon}{\isacharcolon}{\isachardoublequoteopen}complex\ poly{\isachardoublequoteclose}\isanewline
		\ \ \isakeyword{assumes}\ {\isachardoublequoteopen}lead{\isacharunderscore}coeff\ p\ {\isacharequal}\ {\isadigit{1}}{\isachardoublequoteclose}\isanewline
		\ \ \isakeyword{shows}\ {\isachardoublequoteopen}proots{\isacharunderscore}upper\ p\ {\isacharequal}\ \isanewline
		\ \ \ \ \ \ \ \ \ \ \ \ \ \ {\isacharparenleft}degree\ p\ {\isacharminus}\ proots{\isacharunderscore}count\ p\ {\isacharbraceleft}x{\isachardot}\ Im\ x{\isacharequal}{\isadigit{0}}{\isacharbraceright}\isanewline
		\ \ \ \ \ \ \ \ \ \ \ \ \ \ {\isacharminus}\ cindex{\isacharunderscore}poly{\isacharunderscore}ubd\ {\isacharparenleft}map{\isacharunderscore}poly\ Im\ p{\isacharparenright}\ \isanewline
		\ \ \ \ \ \ \ \ \ \ \ \ \ \ \ \ \ \ \ \ \ \ \ \ \ \ \ \ \ \ \ \ {\isacharparenleft}map{\isacharunderscore}poly\ Re\ p{\isacharparenright}{\isacharparenright}\ {\isacharslash}{\isadigit{2}}{\isachardoublequoteclose}
	\end{isabelle}
\end{lemma}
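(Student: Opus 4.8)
The plan is to reduce to Lemma \ref{thm:proots_upper_cindex} by splitting off the real roots of $P$. Write $P = Q \cdot R$, where $R = \prod_{z:\,\Im z = 0}(x - z)^{\mu(z,P)}$ collects all real roots of $P$ with their multiplicities and $Q = P / R$ collects the genuinely non-real roots. Since $P$ is monic and $R$ is a product of monic linear factors, $R$ divides $P$ and both $R$ and $Q$ are monic; moreover $Q$ has no root on the real axis, so it satisfies the hypotheses of Lemma \ref{thm:proots_upper_cindex}.

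With this factorisation in hand, three bookkeeping identities need to be established. First, the degree count: $\deg R = \sum_{z:\,\Im z = 0}\mu(z,P) = \Num_{\mathbb{C}}(P;\{x : \Im x = 0\})$, whence $\deg Q = \deg P - \Num_{\mathbb{C}}(P;\{x : \Im x = 0\})$. Second, the upper-half-plane count is unchanged: since every root of $R$ lies on the real axis, $R$ contributes nothing to $\{z : \Im z > 0\}$, so $\Num_{\mathbb{C}}(P;\{z : \Im z > 0\}) = \Num_{\mathbb{C}}(Q;\{z : \Im z > 0\})$. Third, and crucially, the unbounded Cauchy index along the real axis is unchanged:
\[
\Ind^{+\infty}_{-\infty}\!\left(\lambda x.\ \frac{\Im(P(x))}{\Re(P(x))}\right) = \Ind^{+\infty}_{-\infty}\!\left(\lambda x.\ \frac{\Im(Q(x))}{\Re(Q(x))}\right).
\]

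This last identity is precisely the content of Lemma \ref{thm:cindexE_roots_on_horizontal_border}, instantiated with the real axis ($\Im s_t = 0$) as the horizontal border: dropping the factor $R$, all of whose roots lie on that border, leaves the Cauchy index invariant. Once these three facts are assembled, I would apply Lemma \ref{thm:proots_upper_cindex} to the monic, real-root-free polynomial $Q$ to obtain
\[
\Num_{\mathbb{C}}(Q;\{z : \Im z > 0\}) = \left(\deg Q - \Ind^{+\infty}_{-\infty}\!\left(\lambda x.\ \tfrac{\Im(Q(x))}{\Re(Q(x))}\right)\right)\!/2,
\]
and then substitute the degree identity and the index identity to recover the claimed formula for $\Num_{\mathbb{C}}(P;\{z : \Im z > 0\})$; note that the parity needed for the integer division by $2$ is already guaranteed by Lemma \ref{thm:proots_upper_cindex} applied to $Q$.

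The main obstacle is the third identity. Lemma \ref{thm:cindexE_roots_on_horizontal_border} is phrased with \isa{cindexE} over a \emph{bounded} line segment (the path \isa{linepath}), whereas the target \isa{cindex\_poly\_ubd} is the \emph{unbounded} index $\Ind^{+\infty}_{-\infty}$ over the whole real line. Bridging the two requires either an unbounded companion of the border lemma or a limiting argument that lets the segment endpoints tend to $\pm\infty$, together with care that the parametrisation of the horizontal line path agrees with evaluation of $\Im(P)$ and $\Re(P)$ on the real axis. Everything else is routine degree and root-counting algebra on the factorisation $P = Q\cdot R$.
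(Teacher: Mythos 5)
Your proposal is correct and follows essentially the same route as the paper's own proof: factor out the real roots to obtain a monic $Q$ with no roots on the real axis, apply Lemma \ref{thm:proots_upper_cindex} to $Q$, and recombine using the degree identity $\deg(Q) = \deg(P) - \Num_{\mathbb{C}}(P;\{z \mid \Im(z)=0\})$ together with the Cauchy-index invariance from Lemma \ref{thm:cindexE_roots_on_horizontal_border}. Your closing remark about bridging the bounded \isa{cindexE} statement to the unbounded \isa{cindex{\isacharunderscore}poly{\isacharunderscore}ubd} is a genuine formalisation subtlety that the paper's prose also glosses over, so it does not distinguish your approach from theirs.
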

To compare Lemma \ref{thm:proots_upper_cindex'} with Lemma \ref{thm:proots_upper_cindex}, we may note there is an extra term \isa{proots{\isacharunderscore}count\ p\ {\isacharbraceleft}x{\isachardot}\ Im\ x{\isacharequal}{\isadigit{0}}{\isacharbraceright}}  in the conclusion. This term encodes $\Num_{\mathbb{C}}(P;\{ z \mid \Im(z) = 0  \})$, and we can have 
\begin{multline} \label{eq:proots_upper_cindex_0}
	\Num_{\mathbb{C}}(P;\{ z \mid \Im(z) = 0  \}) \\= \Num_{\mathbb{R}}(\gcd(\Re(P),\Im(P));(-\infty,+\infty)),
\end{multline}
where $\Re(P), \Im(P) \in \mathbb{R}[x]$ are respectively the real and complex part of a complex polynomial $P$ such that $P(x) = \Re(P)(x) + i \Im(P)(x)$. The rationale behind (\ref{eq:proots_upper_cindex_0}) is that each root of $P$ on the real axis ($\{ z \mid \Im(z) = 0  \}$) is actually real and is also a root of both  $\Re(P)$ and  $\Im(P)$. More importantly, the right-hand side of (\ref{eq:proots_upper_cindex_0}) is where we will apply our extended Sturm's theorem in \S\ref{sec:extending_sturm}.

\begin{proof} [Proof of Lemma \ref{thm:proots_upper_cindex'}]
	Let $Q$ be the polynomial $P$ after removing all the roots on the border (i.e., the real axis) such that 
	\begin{equation} \label{eq:proots_upper_cindex_1}
		\Im (z) \neq 0 \quad \mbox{ whenever $z$ is a complex root of $Q$}. 
	\end{equation}
	By the definition of $Q$ and (\ref{eq:proots_upper_cindex_1}), we can apply Lemma \ref{thm:proots_upper_cindex} to derive
	\begin{multline} \label{eq:proots_upper_cindex_2}
		\Num_{\mathbb{C}}(P;\{ z \mid \Im(z) > 0  \}) \\ = \Num_{\mathbb{C}}(Q;\{ z \mid \Im(z) > 0  \})
			\\ =\deg(Q) - \Ind^{+\infty}_{-\infty } \left(\lambda x.  \frac{\Im(Q(x)}{\Re(Q(x)} \right).
	\end{multline}
	
	Moreover, $\deg(P)$ and $\deg(Q)$ are related by the fundamental theorem of algebra:
	\begin{equation} \label{eq:proots_upper_cindex_3}
		\deg(Q) = \deg(P) - \Num_{\mathbb{C}}(P;\{ z \mid \Im(z)  = 0  \}),
	\end{equation}
	and Lemma \ref{thm:cindexE_roots_on_horizontal_border} brings us the equivalence between two Cauchy indices:
	\begin{equation} \label{eq:proots_upper_cindex_4}
		\Ind^{+\infty}_{-\infty } \left(\lambda x.  \frac{\Im(Q(x)}{\Re(Q(x)} \right) = \Ind^{+\infty}_{-\infty } \left(\lambda x.  \frac{\Im(P(x)}{\Re(P(x)} \right).
	\end{equation}
	
	Putting (\ref{eq:proots_upper_cindex_2}), (\ref{eq:proots_upper_cindex_3}), and (\ref{eq:proots_upper_cindex_4}) together yields
	\begin{multline} \label{eq:proots_upper_cindex_5}
	\Num_{\mathbb{C}}(P;\{ z \mid \Im(z) > 0  \}) \\ = \deg(P) - \Num_{\mathbb{C}}(P;\{ z \mid \Im(z)  = 0  \}) \\ - \Ind^{+\infty}_{-\infty } \left(\lambda x.  \frac{\Im(P(x)}{\Re(P(x)} \right),
	\end{multline}
	which concludes the proof.
\end{proof}

Finally, we can have a \emph{code equation} (i.e., executable procedure) from the refined Lemma \ref{thm:proots_upper_cindex'} to compute $\Num_{\mathbb{C}}(P;\{ z \mid \Im(z) > 0  \})$:
\begin{lemma} [\isa{proots{\isacharunderscore}upper{\isacharunderscore}code{\isadigit{1}}{\isacharbrackleft}code{\isacharbrackright}}]
\label{thm:proots_upper_code}
\begin{isabelle}
	\isanewline
	\ \ {\isachardoublequoteopen}proots{\isacharunderscore}upper\ p\ {\isacharequal}\ \isanewline
	\ \ \ \ {\isacharparenleft}if\ p\ {\isasymnoteq}\ {\isadigit{0}}\ then\isanewline
	\ \ \ \ \ \ \ {\isacharparenleft}let\ p\isactrlsub m\ {\isacharequal}\ smult\ {\isacharparenleft}inverse\ {\isacharparenleft}lead{\isacharunderscore}coeff\ p{\isacharparenright}{\isacharparenright}\ p{\isacharsemicolon}\isanewline
	\ \ \ \ \ \ \ \ \ \ \ \ p\isactrlsub I\ {\isacharequal}\ map{\isacharunderscore}poly\ Im\ p\isactrlsub m{\isacharsemicolon}\isanewline
	\ \ \ \ \ \ \ \ \ \ \ \ p\isactrlsub R\ {\isacharequal}\ map{\isacharunderscore}poly\ Re\ p\isactrlsub m{\isacharsemicolon}\isanewline
	\ \ \ \ \ \ \ \ \ \ \ \ g\ {\isacharequal}\ gcd\ p\isactrlsub I\ p\isactrlsub R\isanewline
	\ \ \ \ \ \ \ \ in\isanewline
	\ \ \ \ \ \ \ \ \ \ \ \ nat\ {\isacharparenleft}{\isacharparenleft}degree\ p\ \isanewline
	\ \ \ \ \ \ \ \ \ \ \ \ \ \ \ \ \ \ \ \ {\isacharminus}\ changes{\isacharunderscore}R{\isacharunderscore}smods{\isacharunderscore}ext\ g\ {\isacharparenleft}pderiv\ g{\isacharparenright}\isanewline
	\ \ \ \ \ \ \ \ \ \ \ \ \ \ \ \ \ \ \ \ {\isacharminus}\ changes{\isacharunderscore}R{\isacharunderscore}smods\ p\isactrlsub R\ p\isactrlsub I{\isacharparenright}\ div\ {\isadigit{2}}\isanewline
	\ \ \ \ \ \ \ \ \ \ \ \ \ \ \ \ {\isacharparenright}\ \isanewline
	\ \ \ \ \ \ \ \ {\isacharparenright}\isanewline
	\ \ \ \ else\ \isanewline
	\ \ \ \ \ \ Code{\isachardot}abort\ {\isacharparenleft}STR\ {\isacharprime}{\isacharprime}proots{\isacharunderscore}upper\ fails\ when\ p{\isacharequal}{\isadigit{0}}{\isachardot}{\isacharprime}{\isacharprime}{\isacharparenright}\isanewline
	\ \ \ \ \ \ \ \ \ \ \ \ \ \ \ \ \ \ \ \ \ \ \ \ \ \ \ \ \ \ \ \ \ {\isacharparenleft}{\isasymlambda}{\isacharunderscore}{\isachardot}\ proots{\isacharunderscore}upper\ p{\isacharparenright}{\isacharparenright}{\isachardoublequoteclose}
\end{isabelle}
\end{lemma}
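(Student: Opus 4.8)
The plan is to simplify the executable right-hand side down to the abstract definition of $\Num_{\mathbb{C}}(P;\{z\mid\Im(z)>0\})$, using Lemma~\ref{thm:proots_upper_cindex'} as the central bridge. First I would dispatch the degenerate branch: when $P = 0$ the claimed equation reduces to the identity \isa{proots{\isacharunderscore}upper\ p\ {\isacharequal}\ proots{\isacharunderscore}upper\ p}, because the abort continuation is precisely \isa{proots{\isacharunderscore}upper\ p}, so nothing is computed. The substantial case is $P \neq 0$.

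For $P \neq 0$, write $P_m$ for the monic rescaling $\lc(P)^{-1}\cdot P$ introduced in the \isakeyword{let} binding. Rescaling by a nonzero constant preserves both the root multiset and the degree, so $\Num_{\mathbb{C}}(P;S) = \Num_{\mathbb{C}}(P_m;S)$ for every set $S$, $\deg(P) = \deg(P_m)$, and in particular $\Num_{\mathbb{C}}(P;\{z\mid\Im(z)>0\}) = \Num_{\mathbb{C}}(P_m;\{z\mid\Im(z)>0\})$. This legitimises an appeal to Lemma~\ref{thm:proots_upper_cindex'} on the \emph{monic} polynomial $P_m$, yielding
\[
\Num_{\mathbb{C}}(P;\{z \mid \Im(z) > 0\}) = \frac{\deg(P) - \Num_{\mathbb{C}}(P_m;\{z \mid \Im(z)=0\}) - \Ind^{+\infty}_{-\infty}\!\left(\frac{\Im(P_m)}{\Re(P_m)}\right)}{2}.
\]
It then remains to identify the two subtracted terms with the code's counters, where $P_R = \Re(P_m)$, $P_I = \Im(P_m)$ and $g = \gcd(P_R,P_I)$.

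For the border term, equation~(\ref{eq:proots_upper_cindex_0}) rewrites $\Num_{\mathbb{C}}(P_m;\{z \mid \Im(z)=0\})$ as the real count $\Num_{\mathbb{R}}(g;(-\infty,+\infty))$; applying the unbounded instance of Theorem~\ref{thm:sturm_ext_interval} (the $a=-\infty$, $b=+\infty$ version noted after its proof) turns this into $\Var(\SRemSE(g,g');-\infty,+\infty)$, which is exactly \isa{changes{\isacharunderscore}R{\isacharunderscore}smods{\isacharunderscore}ext\ g\ {\isacharparenleft}pderiv\ g{\isacharparenright}}. For the Cauchy-index term, I would reuse the standard computability result from the earlier Cauchy-index development~\cite{li_cauchy_index}, which expresses $\Ind^{+\infty}_{-\infty}(\Im(P_m)/\Re(P_m))$ as the signed-remainder-sequence count $\Var(\SRemS(P_R,P_I);-\infty,+\infty)$, i.e.\ \isa{changes{\isacharunderscore}R{\isacharunderscore}smods\ p\isactrlsub R\ p\isactrlsub I}.

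With both terms matched, the numerator of the displayed fraction coincides with the integer expression inside the code's \isa{nat} and \isa{div}. The remaining hurdle is purely arithmetic: Lemma~\ref{thm:proots_upper_cindex'} divides in a field (or over $\mathbb{Z}$), whereas the code uses integer \isa{div} wrapped in \isa{nat}. Since the left-hand side is a genuine cardinality, the numerator is non-negative and even---evenness is inherited from the Budan--Fourier/Sturm machinery underlying Lemma~\ref{thm:proots_upper_cindex'}---so integer division by $2$ agrees with exact division and \isa{nat} recovers the count. I expect the main obstacle to be bookkeeping rather than mathematics: tracking the coercions between \isa{nat}, \isa{int} and the two notions of division, and keeping the gcd, real part and imaginary part consistently those of the monic $P_m$, so that the uses of Lemma~\ref{thm:proots_upper_cindex'} and~(\ref{eq:proots_upper_cindex_0}) remain valid.
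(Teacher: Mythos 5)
Your proposal matches the paper's own route exactly: rescale to the monic $P_m$, apply Lemma~\ref{thm:proots_upper_cindex'}, compute the border term via~(\ref{eq:proots_upper_cindex_0}) and the unbounded extended Sturm theorem (\isa{changes{\isacharunderscore}R{\isacharunderscore}smods{\isacharunderscore}ext}), compute the Cauchy index via the classical signed-remainder count (\isa{changes{\isacharunderscore}R{\isacharunderscore}smods}), and discharge the \isa{nat}/\isa{div} coercions by noting the numerator is twice a cardinality. Your handling of the $P=0$ branch via the semantics of \isa{Code{\isachardot}abort} is also the intended one, so this is correct and essentially identical to the paper's argument.
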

\noindent where 
\begin{itemize}
	\item \isa{p\isactrlsub m} is a monic polynomial produced by $P$ divided by its leading coefficient, and this monic polynomial is required by the assumption \isa{lead{\isacharunderscore}coeff\ p\ {\isacharequal}\ {\isadigit{1}}} in Lemma \ref{thm:proots_upper_cindex'}.
	\item \isa{changes{\isacharunderscore}R{\isacharunderscore}smods{\isacharunderscore}ext\ g\ {\isacharparenleft}pderiv\ g{\isacharparenright}} encodes 
	\begin{multline*}
		\Var(\SRemSE(G,G');-\infty,+\infty) \\ \mbox{where $G=\gcd(\Re(P),\Im(P))$},
	\end{multline*}
	which computes $\Num_{\mathbb{C}}(P;\{ z \mid \Im(z)  = 0  \}) $ in Lemma \ref{thm:proots_upper_cindex'}. Note that this part is where our extended Sturm's theorem in \S\ref{sec:extending_sturm} has been utilised.
	\item \isa{changes{\isacharunderscore}R{\isacharunderscore}smods\ p\isactrlsub R\ p\isactrlsub I} stands for 
\[ \Var(\SRemS(P_R,P_I);-\infty,+\infty),\] 
which computes $\Ind^{+\infty}_{-\infty } \left(\lambda x.  \frac{\Im(P(x)}{\Re(P(x)} \right)$ in Lemma \ref{thm:proots_upper_cindex'}.
	\item The command \isa{Code{\isachardot}abort} raises an exception in the case of $P = 0$.
\end{itemize}
Overall, Lemma \ref{thm:proots_upper_code} asserts that 
\[ \Num_{\mathbb{C}}(P;\{ z \mid \Im(z)  > 0  \})\] is equivalent to an executable expression: the right-hand side of Lemma \ref{thm:proots_upper_code}. Because it is declared as a code equation, it implements a verified procedure to compute $\Num_{\mathbb{C}}(P;\{ z \mid \Im(z)  > 0  \})$. We can now type the following command in Isabelle/HOL:
\begin{isabelle}
	\isacommand{value}\isamarkupfalse%
	\ {\isachardoublequoteopen}proots{\isacharunderscore}upper\ {\isacharbrackleft}{\isacharcolon}{\isadigit{1}}{\isacharplus}{\isasymi}{\isacharcomma}\ {\isacharminus}{\isadigit{2}}{\isacharminus}{\isasymi}{\isacharcomma}\ {\isadigit{1}}{\isacharcolon}{\isacharbrackright}{\isachardoublequoteclose}
\end{isabelle}
to compute $\Num_{\mathbb{C}}((1+i) +( -2 - i) x + x^2;\{ z \mid \Im(z)  > 0 \})$, which was not possible in our previous work \cite{li_cauchy_index} since the polynomial $(1+i) +( -2 - i) x + x^2 = (x-1) (x-1-i)$ has a root on the border (i.e., the real axis).

\subsection{Counting Roots within a Ball} \label{secsub:counting_ball}

In this section, we will introduce a verified procedure to count $\Num_{\mathbb{C}}(P;\{ z \mid \left | z - z_0 \right |   < r \})$, the number of complex roots of a polynomial $P$ within the ball $\{ z \mid \left | z - z_0 \right |   < r \}$. The core idea is to use the base transformation in \S\ref{secsub:base_transformation} to convert the current case to the one of counting roots within the upper half-plane, and then make use of the procedure in \S\ref{secsub:counting_upper} to finish counting.

Let \isa{proots{\isacharunderscore}ball} denote $\Num_{\mathbb{C}}(P;\{ z \mid \left | z - z_0 \right |  < r \})$:
\begin{isabelle}
	\isacommand{definition}\isamarkupfalse%
	\ proots{\isacharunderscore}ball{\isacharcolon}{\isacharcolon}\isanewline
	\ \ \ \ {\isachardoublequoteopen}complex\ poly\ {\isasymRightarrow}\ complex\ {\isasymRightarrow}\ real\ {\isasymRightarrow}\ nat{\isachardoublequoteclose}\ \isanewline
	\ \ \isakeyword{where}\ \isanewline
	\ \ \ \ {\isachardoublequoteopen}proots{\isacharunderscore}ball\ p\ z\isactrlsub {\isadigit{0}}\ r\ {\isacharequal}\ proots{\isacharunderscore}count\ p\ {\isacharparenleft}ball\ z\isactrlsub {\isadigit{0}}\ r{\isacharparenright}{\isachardoublequoteclose}
\end{isabelle}

With the transformation operation (\isa{fcompose}) we developed in \S\ref{secsub:base_transformation}, we can derive the following equivalence relation in the number of roots:
\begin{lemma}[\isa{proots{\isacharunderscore}ball{\isacharunderscore}plane{\isacharunderscore}eq}]
	\label{thm:proots_ball_plane_eq}
	\begin{isabelle}
		\isanewline
		\ \ \isakeyword{fixes}\ p{\isacharcolon}{\isacharcolon}{\isachardoublequoteopen}complex\ poly{\isachardoublequoteclose}\isanewline
		\ \ \isakeyword{assumes}\ {\isachardoublequoteopen}p\ {\isasymnoteq}\ {\isadigit{0}}{\isachardoublequoteclose}\isanewline
		\ \ \isakeyword{shows}\ {\isachardoublequoteopen}proots{\isacharunderscore}count\ p\ {\isacharparenleft}ball\ {\isadigit{0}}\ {\isadigit{1}}{\isacharparenright}\ \isanewline
		\ \ \ \ \ \ \ \ \ \ \ \ {\isacharequal}\ proots{\isacharunderscore}count\ {\isacharparenleft}fcompose\ p\ {\isacharbrackleft}{\isacharcolon}{\isasymi}{\isacharcomma}{\isacharminus}{\isadigit{1}}{\isacharcolon}{\isacharbrackright}\ {\isacharbrackleft}{\isacharcolon}{\isasymi}{\isacharcomma}{\isadigit{1}}{\isacharcolon}{\isacharbrackright}{\isacharparenright}\isanewline
		\ \ \ \ \ \ \ \ \ \ \ \ \ \ \ \ \ \ \ \ \ \ \ \ \ \ \ \ \ \ \ \ \ \ \ \ \ \ \ \ \ \ {\isacharbraceleft}z{\isachardot}\ {\isadigit{0}}\ {\isacharless}\ Im\ z{\isacharbraceright}{\isachardoublequoteclose}
	\end{isabelle}	
\end{lemma}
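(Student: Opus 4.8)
The plan is to treat this as a conformal analogue of the real interval transfer in Lemma~\ref{thm:poly_fcompose_2}, with the affine map there replaced by the Cayley transform $\phi(z) = (i-z)/(i+z)$. Reading off the two polynomial arguments of the composition, the numerator is $i - z$ and the denominator is $i + z$, so writing $P_\phi$ for the right-hand polynomial, Lemma~\ref{thm:poly_fcompose} gives, for every $z \neq -i$,
\[
  P_\phi(z) = (i+z)^{\deg P}\, P\!\left(\frac{i-z}{i+z}\right).
\]
As $-i$ lies in the lower half-plane, the factor $(i+z)^{\deg P}$ is a non-vanishing unit throughout $\{z \mid \Im z > 0\}$; consequently the roots of $P_\phi$ there are precisely the preimages under $\phi$ of the roots of $P$ lying in the ball. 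Moreover $P \neq 0$ forces $P_\phi \neq 0$, since otherwise $P$ would vanish at the infinitely many values taken by $\phi$.

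First I would show that $\phi$ restricts to a bijection from $\{z \mid \Im z > 0\}$ onto the open ball $\{w \mid |w| < 1\}$. Writing $z = x + iy$, the elementary identity $|i+z|^2 - |i-z|^2 = 4\,\Im z$ gives $|\phi(z)| < 1 \iff \Im z > 0$, so $\phi$ sends the upper half-plane into the ball and the real axis onto the unit circle. The explicit inverse $\phi^{-1}(w) = i(1-w)/(1+w)$ is well defined on the ball because $w = -1$ is excluded by $|w| < 1$, which pins down the bijection; the excluded boundary value $w = -1$ is exactly the point that $\phi$ ``sends to infinity'', matching the fact that $\deg P_\phi$ drops below $\deg P$ precisely when $P(-1) = 0$.

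The technical heart is multiplicity preservation: for each root $w_0$ of $P$ with $|w_0| < 1$, its unique preimage $z_0 = \phi^{-1}(w_0)$ should satisfy $\mu(z_0, P_\phi) = \mu(w_0, P)$. Factoring $P(w) = (w-w_0)^m g(w)$ with $g(w_0) \neq 0$ and $m = \mu(w_0,P)$, and using
\[
  \phi(z) - w_0 = \frac{i(1-w_0) - z(1+w_0)}{i+z},
\]
whose numerator is affine in $z$ with a simple root at $z_0$ (here $1+w_0 \neq 0$ as $|w_0| < 1$), one rewrites $P_\phi(z) = (i+z)^{\deg P}(\phi(z)-w_0)^m\, g(\phi(z))$ and tracks orders: the factors $(i+z)^{\deg P}$ and $g(\phi(z))$ are nonzero at $z_0$, so $\mu(z_0,P_\phi) = m$. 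Summing $\mu$ over the two root sets, which match under the bijection of the previous paragraph, yields
\[
  \Num_{\mathbb{C}}(P;\{w \mid |w| < 1\}) = \Num_{\mathbb{C}}(P_\phi;\{z \mid \Im z > 0\}),
\]
which is the claim.

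I expect this last step to be the main obstacle. In the formal development one cannot simply appeal to the fact that a biholomorphism preserves the order of vanishing; instead the argument must go through the behaviour of \isa{order} under polynomial multiplication and under the degree-one substitution $\phi(z) - w_0$, verifying in particular that this substitution contributes exactly $1$ to the order while the unit factors contribute nothing at $z_0$.
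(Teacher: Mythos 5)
Your proposal is correct and follows essentially the route the paper's development implies: the paper states this lemma without a prose proof, but the machinery it sets up (Lemma~\ref{thm:poly_fcompose} applied to the Cayley transform $(i-z)/(i+z)$, in direct analogy with the real-interval transfer of Lemma~\ref{thm:poly_fcompose_2}) is exactly what you use, with the half-plane/ball bijection and order preservation under the degree-one substitution supplying the transfer of root counts with multiplicity. I see no gaps: the identity $|i+z|^2-|i-z|^2=4\,\Im z$, the explicit inverse $i(1-w)/(1+w)$, the non-vanishing of $(i+z)$ on the upper half-plane, and the factorisation argument for multiplicities (regrouping the rational-function factors into the polynomial factors $\bigl(i(1-w_0)-z(1+w_0)\bigr)^m$ and the composed polynomial of $g$, which is nonzero at $z_0$) are all sound.
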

\noindent That is,
\begin{multline} \label{eq:proots_count_ball_1}
	\Num_{\mathbb{C}}(P; \{ z \mid \left | z  \right |  < 1 \}) \\ = 	\Num_{\mathbb{C}} \left( (i+x)^n P\left( \frac{i-x}{i+x} \right); \{ z \mid \Im(z) > 0 \} \right),
\end{multline}
where $n$ is the degree of $P$.

Moreover, we can relate roots between different balls using normal polynomial composition:
\begin{lemma}[\isa{proots{\isacharunderscore}uball{\isacharunderscore}eq}]
	\label{thm:proots_uball_eq}
	\begin{isabelle}
		\isanewline
		\ \ \isakeyword{fixes}\ p{\isacharcolon}{\isacharcolon}{\isachardoublequoteopen}complex\ poly{\isachardoublequoteclose}\ \isakeyword{and}\ z\isactrlsub {\isadigit{0}}{\isacharcolon}{\isacharcolon}complex\ \isakeyword{and}\ r{\isacharcolon}{\isacharcolon}real\isanewline
		\ \ \isakeyword{assumes}\ {\isachardoublequoteopen}p\ {\isasymnoteq}\ {\isadigit{0}}{\isachardoublequoteclose}\ \isakeyword{and}\ {\isachardoublequoteopen}r\ {\isachargreater}\ {\isadigit{0}}{\isachardoublequoteclose}\isanewline
		\ \ \isakeyword{shows}\ {\isachardoublequoteopen}proots{\isacharunderscore}count\ p\ {\isacharparenleft}ball\ z\isactrlsub {\isadigit{0}}\ r{\isacharparenright}\isanewline
		\ \ \ \ \ \ \ \ \ \ \ \ \ \ \ \ {\isacharequal}\ proots{\isacharunderscore}count\ {\isacharparenleft}p\ {\isasymcirc}\isactrlsub p\ {\isacharbrackleft}{\isacharcolon}z\isactrlsub {\isadigit{0}}{\isacharcomma}\ of{\isacharunderscore}real\ r{\isacharcolon}{\isacharbrackright}{\isacharparenright}\isanewline
		\ \ \ \ \ \ \ \ \ \ \ \ \ \ \ \ \ \ \ \ \ \ \ \ \ \ \ \ \ \ \ \ \ \ \ \ \ \ \ \ \ \ \ \ {\isacharparenleft}ball\ {\isadigit{0}}\ {\isadigit{1}}{\isacharparenright}{\isachardoublequoteclose}
	\end{isabelle}
\end{lemma}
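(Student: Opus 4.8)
The plan is to realise the right-hand side as an affine change of variable. Set $\phi(x) = z_0 + rx$ and write $\tilde P$ for the composed polynomial $p \circ_p [:z_0,\operatorname{of\_real} r:]$, whose evaluation law (the standard \emph{poly\_pcompose} lemma, $\operatorname{poly}(p\circ_p q)\,x = \operatorname{poly} p\,(\operatorname{poly} q\,x)$) gives $\tilde P(x) = P(\phi(x))$ for all $x$. Since $r>0$ is a nonzero real, $\phi$ is a bijection of $\mathbb{C}$ with inverse $w \mapsto (w-z_0)/r$, and it carries the open unit ball $B(0,1)$ onto the ball $B(z_0,r)$: indeed $|x|<1 \iff |rx|<r \iff |\phi(x)-z_0|<r$ using $r>0$. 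Thus the two sides count roots of $P$ over two balls identified by $\phi$.

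The crux is multiplicity preservation under this linear substitution, namely
\[
\mu(x,\tilde P) = \mu(\phi(x), P) .
\]
I would derive this from the factorisation of $P$ over $\mathbb{C}$: if $P(w) = c\prod_i (w-w_i)^{m_i}$, then
\[
\tilde P(x) = P(\phi(x)) = c\prod_i r^{m_i}\,(x - \phi^{-1}(w_i))^{m_i},
\]
so each root $w_i$ of $P$ corresponds to the root $\phi^{-1}(w_i)$ of $\tilde P$ with unchanged exponent $m_i$. In the formal development this is most cleanly packaged as an \emph{order}-under-composition lemma for a degree-one argument, which simultaneously yields $\tilde P \neq 0$ from $P \neq 0$ and $r \neq 0$ (the leading coefficient of the argument polynomial being nonzero).

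With these ingredients both sides unfold to multiplicity sums, and I would finish by reindexing along $\phi$. Concretely,
\[
\Num_{\mathbb{C}}(\tilde P; B(0,1)) = \!\!\sum_{x \in B(0,1),\ P(\phi(x))=0}\!\! \mu(x,\tilde P) = \!\!\sum_{x \in B(0,1),\ P(\phi(x))=0}\!\! \mu(\phi(x), P),
\]
and since $\phi$ restricts to a bijection between $\{x \in B(0,1) \mid P(\phi(x))=0\}$ and $\{w \in B(z_0,r) \mid P(w)=0\}$ with the summand $\mu(\phi(x),P)$ matching the target summand, this equals
\[
\sum_{w \in B(z_0,r),\ P(w)=0} \mu(w,P) = \Num_{\mathbb{C}}(P; B(z_0,r)),
\]
which is exactly the left-hand side.

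The main obstacle I expect is the multiplicity-preservation step together with its formal bookkeeping: one must show that composing with the degree-one polynomial $[:z_0,\operatorname{of\_real} r:]$ neither annihilates the polynomial nor alters the order of any root, and then carry out the finite-sum reindexing, which in Isabelle requires establishing finiteness of both root sets and that $\phi$ restricts to a bijection between them (monotone in multiplicity). By contrast, the ball-to-ball geometry and the composition evaluation law are routine.
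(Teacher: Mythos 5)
Your proposal is correct. The paper actually states Lemma~\ref{thm:proots_uball_eq} without any accompanying proof (the argument exists only in the formal sources), and the route you take---the evaluation law for polynomial composition, the affine bijection $\phi(x)=z_0+rx$ carrying $B(0,1)$ onto $B(z_0,r)$ for $r>0$, multiplicity preservation $\mu(x,\tilde P)=\mu(\phi(x),P)$ via the complex factorisation, and reindexing the finite multiplicity sum---is precisely the natural argument that the lemma's use in \S\ref{secsub:counting_ball} presupposes; note also that your reading $\tilde P(x)=P(z_0+rx)$ is the correct interpretation of \isa{p\ {\isasymcirc}\isactrlsub p\ {\isacharbrackleft}{\isacharcolon}z\isactrlsub {\isadigit{0}}{\isacharcomma}\ of{\isacharunderscore}real\ r{\isacharcolon}{\isacharbrackright}}, whereas the paper's display~(\ref{eq:proots_count_ball_2}) misprints it as $P(rx-z_0)$.
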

\noindent where \isa{{\isasymcirc}\isactrlsub p} encodes the composition operation between two polynomials. Overall, Lemma \ref{thm:proots_uball_eq} claims
\begin{multline} \label{eq:proots_count_ball_2}
\Num_{\mathbb{C}}(P; \{ z \mid \left | z - z_0 \right |  < r \}) \\ = 	\Num_{\mathbb{C}} \left( P(r x - z_0); \{ z \mid \left | z  \right |  < 1 \} \right).
\end{multline}

Finally, we can derive a code equation for \isa{proots{\isacharunderscore}ball}:
\begin{lemma}[\isa{proots{\isacharunderscore}ball{\isacharunderscore}code{\isadigit{1}}{\isacharbrackleft}code{\isacharbrackright}}]
	\label{thm:proots_ball_code}
	\begin{isabelle}
		\isanewline
		\ \ {\isachardoublequoteopen}proots{\isacharunderscore}ball\ p\ z\isactrlsub {\isadigit{0}}\ r\ {\isacharequal}\ \isanewline
		\ \ \ \ \ \ {\isacharparenleft}\ if\ r\ {\isasymle}\ {\isadigit{0}}\ then\ \isanewline
		\ \ \ \ \ \ \ \ \ \ {\isadigit{0}}\isanewline
		\ \ \ \ \ \ \ \ else\ if\ p\ {\isasymnoteq}\ {\isadigit{0}}\ then\isanewline
		\ \ \ \ \ \ \ \ \ \ proots{\isacharunderscore}upper\ {\isacharparenleft}fcompose\ \isanewline
		\ \ \ \ \ \ \ \ \ \ \ \ \ {\isacharparenleft}p\ {\isasymcirc}\isactrlsub p\ {\isacharbrackleft}{\isacharcolon}z\isactrlsub {\isadigit{0}}{\isacharcomma}\ of{\isacharunderscore}real\ r{\isacharcolon}{\isacharbrackright}{\isacharparenright}\ {\isacharbrackleft}{\isacharcolon}{\isasymi}{\isacharcomma}{\isacharminus}{\isadigit{1}}{\isacharcolon}{\isacharbrackright}\ {\isacharbrackleft}{\isacharcolon}{\isasymi}{\isacharcomma}{\isadigit{1}}{\isacharcolon}{\isacharbrackright}{\isacharparenright}\isanewline
		\ \ \ \ \ \ \ \ else\ \isanewline
		\ \ \ \ \ \ \ \ \ \ Code{\isachardot}abort\ {\isacharparenleft}STR\ {\isacharprime}{\isacharprime}proots{\isacharunderscore}ball\ fails\ \isanewline
		\ \ \ \ \ \ \ \ \ \ \ \ \ \ \ \ \ when\ p{\isacharequal}{\isadigit{0}}{\isachardot}{\isacharprime}{\isacharprime}{\isacharparenright}\ {\isacharparenleft}{\isasymlambda}{\isacharunderscore}{\isachardot}\ proots{\isacharunderscore}ball\ p\ z\isactrlsub {\isadigit{0}}\ r{\isacharparenright}\isanewline
		\ \ \ \ \ \ {\isacharparenright}{\isachardoublequoteclose}
	\end{isabelle}
\end{lemma}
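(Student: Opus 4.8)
The plan is to read this final statement as a \emph{code equation} that merely re-packages the two transformation lemmas established immediately above, Lemma~\ref{thm:proots_uball_eq} and Lemma~\ref{thm:proots_ball_plane_eq}, after unfolding the definitions of \isa{proots{\isacharunderscore}ball} and \isa{proots{\isacharunderscore}upper}. Since by definition $\Num_{\mathbb{C}}(P;\{z \mid |z-z_0|<r\})$ \emph{is} the left-hand side, the whole task is to check that, in each of the three branches of the conditional on the right, the displayed expression evaluates to this same count. I would therefore split on the guards $r \le 0$, $P \neq 0$, and $P = 0$ exactly as the conditional does.

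The two outer branches are pure bookkeeping. If $r \le 0$ the open ball $\{z \mid |z-z_0| < r\}$ is empty, so its root count is $0$; I expect this to be immediate from a library fact that \isa{ball} of non-positive radius is empty, together with the fact that \isa{proots{\isacharunderscore}count} of the empty set is $0$. If $P = 0$ the right-hand side is \isa{Code{\isachardot}abort} applied to the recovery thunk $\lambda\_.\,\Num_{\mathbb{C}}(P;\{z \mid |z-z_0|<r\})$; as \isa{Code{\isachardot}abort} is definitionally the identity on that thunk evaluated at the unit, this branch unfolds syntactically back to the left-hand side, so the equation holds with no further work.

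The real case is $r > 0$ and $P \neq 0$. Here I would chain three rewrites. First, Lemma~\ref{thm:proots_uball_eq}, now applicable because both of its hypotheses $P \neq 0$ and $r > 0$ are available, rewrites the count over the ball of centre $z_0$ and radius $r$ into the count of the composed polynomial $Q := P \circ_p (z_0 + r x)$ over the unit ball $\{z \mid |z| < 1\}$. Second, Lemma~\ref{thm:proots_ball_plane_eq} rewrites the count of $Q$ over the unit ball into the count of \isa{fcompose}$\,Q\,(\mathrm{i}-x)\,(\mathrm{i}+x)$ over the upper half-plane $\{z \mid \Im(z) > 0\}$. Third, unfolding the definition of \isa{proots{\isacharunderscore}upper} identifies this last quantity with the \isa{proots{\isacharunderscore}upper} term on the right-hand side of the code equation. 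Composing the three gives the claim in this branch.

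The one genuine obstacle, and the only place where I anticipate real (if modest) effort, is discharging the hypothesis of Lemma~\ref{thm:proots_ball_plane_eq} that its argument is non-zero, i.e.\ proving $Q = P \circ_p (z_0 + r x) \neq 0$. Because $r > 0$, the inner polynomial $z_0 + r x$ is genuinely of degree one, so composing it with a non-zero $P$ cannot collapse to the zero polynomial; concretely I would argue via degrees, using $\deg(Q) = \deg(P) \cdot \deg(z_0 + r x) = \deg(P)$ together with $P \neq 0$, invoking the standard Isabelle characterisation of when a polynomial composition vanishes. Everything else is a matter of orienting the two cited equalities correctly and unfolding \isa{proots{\isacharunderscore}ball} and \isa{proots{\isacharunderscore}upper}.
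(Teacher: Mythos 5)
Your proposal is correct and follows essentially the same route as the paper, which proves this code equation by chaining Lemma~\ref{thm:proots_uball_eq} (ball of centre $z_0$, radius $r$ $\to$ unit ball via composition with $[:z_0, r:]$) with Lemma~\ref{thm:proots_ball_plane_eq} (unit ball $\to$ upper half-plane via \isa{fcompose}) and then delegating to \isa{proots{\isacharunderscore}upper}. Your additional attention to the degenerate branches ($r \le 0$, \isa{Code.abort}) and to discharging the non-vanishing hypothesis of Lemma~\ref{thm:proots_ball_plane_eq} via the degree of the composed polynomial fills in exactly the bookkeeping the paper's one-sentence sketch leaves implicit.
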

\noindent The idea behind of Lemma \ref{thm:proots_ball_code} is to combine (\ref{eq:proots_count_ball_1}) and (\ref{eq:proots_count_ball_2}):
\begin{multline}
	\Num_{\mathbb{C}}(P; \{ z \mid \left | z - z_0 \right |  < r \}) \\ = \Num_{\mathbb{C}} \left( (r x + i - z_0)^n P\left( \frac{-r x + i + z_0}{r x + i - z_0} \right); \{ z \mid \Im(z) > 0 \} \right),
\end{multline}
so that we can apply Lemma \ref{thm:proots_upper_code} to count roots within the upper half-plane instead.

Because Lemma \ref{thm:proots_ball_code} is declared as a code equation, we can execute it. For example, we can now type the following command in Isabelle/HOL:
\begin{isabelle}
	\isacommand{value}\isamarkupfalse%
	\ {\isachardoublequoteopen}proots{\isacharunderscore}ball\ {\isacharbrackleft}{\isacharcolon}{\isasymi}{\isacharcomma}{\isacharminus}\ {\isadigit{1}}\ {\isacharminus}\ {\isasymi}{\isacharcomma}\ {\isadigit{1}}{\isacharcolon}{\isacharbrackright}\ {\isadigit{0}}\ {\isadigit{1}}{\isachardoublequoteclose}
\end{isabelle}
to check that the polynomial $i + (-1-i) x + x^2$ has no roots within the ball $\{ z \mid \left | z  \right |  < 1 \}$.

\subsection{Remarks} \label{secsub:complex_remark}

Generally, most complex-root-counting procedures boil down to applying the argument principle and approximating some winding number. The Cauchy index on the complex plane elegantly approximates the winding number, and it can be effectively computed by remainder sequences and sign variations as in the application of Sturm's theorem. As this approach is moderately efficient, in 1978 Wilf \cite{Wilf:1978fy} used this counting mechanism for his (complex) root isolation algorithm.

However, as we mentioned earlier in \S\ref{secsub:budan_fourier_remark}, remainder sequences are generally considered too slow for modern computer algebra systems. As a result, in 1992 Collins and Krandick \cite{Collins:1992kf} proposed an approach to directly approximate the winding number through efficient real root isolation based on Descartes' rule of signs, and this approach is actually the one that has been widely implemented in modern systems like Mathematica and SymPy.

In the future, we hope to implement Collins and Krandick's approach. Luckily, the Descartes roots test we mechanised in \S\ref{secsub:descartes_roots_test} should serve as the first step.

\section{Related Work} \label{sec:related_work}

Counting \emph{distinct} real roots with Sturm's theorem has been widely implemented among major proof assistants including PVS \cite{Narkawicz:2015do}, Coq \cite{Mahboubi:2012gg}, HOL Light \cite{mclaughlin-harrison} and Isabelle \cite{Eberl:2015kb,Sturm_Tarski-AFP,Li:2017cg}. In contrast, our previous complex-root-counting procedure \cite{li_cauchy_index} seems to be the only one that counts complex roots, since counting complex roots usually requires a formal proof of the argument principle in complex analysis, which (to the best of our knowledge) is only available in Isabelle/HOL \cite{Li_ITP2016}.

According to the website of \emph{Formalizing 100 Theorems}\footnote{\url{http://www.cs.ru.nl/~freek/100/index.html}}, Descartes' rule of signs has been independently formalised in Isabelle/HOL \cite{Descartes_Sign_Rule-AFP}, HOL Light, and ProofPower, and all three versions seem to follow an informal inductive proof by Arthan \cite{arthan2007descartes}. In Coq, Bertot et al. \cite{berstein_coq} have investigated real root isolation through Bernstein coefficients, and during the investigation they have proved a corollary of Descartes' rule of sign: the polynomial has exactly one positive root if there is only one sign change in its coefficient sequence. In comparison, we have formalised a more general result (i.e., the Budan-Fourier theorem), and derive Descartes' rule of signs as an almost trivial consequence. As a benefit of this more general result, we can additionally derive the corollary that the roots approximation through both Descartes' rule of signs and the Descartes roots test will be exact when all roots are real;  it is not clear how to deduce this without the Budan-Fourier theorem.

Since Thiemann and Yamada have formalised Yun's algorithm in Isabelle/HOL \cite{Polynomial_Factorization-AFP,Thiemann:2016hu}, there could be an alternative procedure to count multiple real roots exactly. Given $P \in \mathbb{R}[x]$, with Yun's algorithm we can have a square-free factorisation of $P$:
\[
	P = Q_1 Q^2_2 Q^3_3 \cdots Q^n_n,
\]
such that polynomials from $\{Q_i\}$ ($1 \leq i \leq n$) are pairwise coprime and square-free. We can then obtain a procedure to count multiple roots by applying Sturm's theorem to each $Q_i$, multiplying the result by $i$, and summing them together. We believe our extended Sturm's theorem will be more efficient than the sketch above, but that is never for sure until we perform a side-by-side comparison. 

Potential applications of our work include various formalisations of algebraic numbers in Coq  \cite{Cohen:2012ba} and Isabelle/HOL \cite{Li_CPP2016,Thiemann:2016wc}. A real algebraic number is usually encoded as a polynomial $P$ and an isolation interval, and this interval is frequently tested (or refined) to guarantee that exactly one root of $P$ lies within it. At present, the testing and refining process relies on Sturm's theorem, which can be replaced by our new Descartes roots test for better efficiency. Furthermore, when encoding complex algebraic numbers, we may need to deal with an isolation box or ball in the complex plane, where our complex-root-counting procedures should be of help.

\section{Experiments} \label{sec:experiments}

In this section, we briefly benchmark our root-counting procedures over some randomly generated polynomials, to convey an idea about their scalability. All the experiments are run on a Intel Core i7 CPU (quad core @ 2.66 GHz) and 16 gigabytes RAM. When benchmarking verified operations, the expression to evaluate is first defined in Isabelle/HOL, and then extracted and evaluated in Poly/ML\@. The reason for this is that when invoking \isa{\isacommand{value}} in Isabelle/HOL to evaluate an expression,  a significant and unpredictable amount of time is spent generating code, so we evaluate an extracted expression to obtain more precise results. 

First, we compare using the classical Sturm theorem, the extended Sturm theorem, the Budan-Fourier theorem, and the Descartes roots test to count/approximate the number of real roots of various polynomials over the interval $(0,1)$ or $(0,1]$. As illustrated in Table \ref{tab:real_roots} and Figure \ref{fig:selected_polynomials}, procedures based on remainder sequences (i.e., Sturm and Ex\_Sturm) are much slower than the others, and their performance degrades rapidly as the bit size of the coefficients grows. In the meantime, the difference in performance between the Budan-Fourier theorem and the Descartes roots test is, surprisingly, marginal. We believe this is due to our naive implementation of Taylor shift, which usually contributes most to the running time of the roots test.

In addition, we also apply our complex-root-counting procedures to count roots within the upper half-plane and the ball $\{ z \mid |z| < 1 \}$. The result is illustrated in Table \ref{tab:complex_roots}: both methods have shown moderate performance, but due to their method of computing remainder sequences the performance deteriorates quickly as the coefficient bit-size increases.

\begin{figure*} [ht]
	\begin{align*}
	\mathrm{P_1(x)=}\quad&-\frac{85}{68}+\frac{70}{5} x+\frac{88}{79} x^{2}+\frac{29}{75} x^{3}+\frac{80}{51} x^{4}-\frac{66}{52} x^{5}+\frac{9}{71} x^{6}-\frac{14}{61} x^{7}-\frac{27}{64} x^{8}-\frac{100}{83} x^{9}+\frac{1}{53} x^{10}-\frac{23}{85} x^{11}+\frac{83}{98} x^{12}+\frac{48}{16} x^{13} \\ & -\frac{89}{25} x^{14}-\frac{100}{5} x^{15}+\frac{36}{28} x^{16}+\frac{1}{1} x^{17}+\frac{43}{99} x^{18}-\frac{29}{32} x^{19}+\frac{74}{97} x^{20}+\frac{9}{5} x^{21}+\frac{20}{70} x^{22}-\frac{89}{27} x^{23}-\frac{33}{48} x^{24}+\frac{16}{33} x^{25}+\frac{84}{63} x^{26} \\ & +\frac{96}{89} x^{27}+\frac{22}{69} x^{28}+\frac{95}{97} x^{29}\\
	\\%
	\mathrm{P_2(x)=}\quad & -34-28 x+5 x^{2}-39 x^{3}+83 x^{4}-89 x^{5}-49 x^{6}+94 x^{7}-66 x^{8}+18 x^{9}+75 x^{10}+84 x^{11}-98 x^{12}-68 x^{13}+12 x^{14}+46 x^{15} \\ & -43 x^{16}+98 x^{17}+24 x^{18}-30 x^{19}+10 x^{20}-88 x^{21}+54 x^{22}+79 x^{23}-29 x^{24}+12 x^{25}-55 x^{26}-46 x^{27}-18 x^{28}+50 x^{29}\\
	%
	\\%
	\mathrm{P_3(x)=}\quad& \frac{9}{30}-\frac{65}{82} x-\frac{94}{68} x^{2}+\frac{9}{33} x^{3}-\frac{56}{83} x^{4}-\frac{22}{35} x^{5}+\frac{73}{31} x^{6}+\frac{69}{2} x^{7}-\frac{58}{43} x^{8}+\frac{71}{22} x^{9}-\frac{75}{44} x^{10}+\frac{2}{49} x^{11}+\frac{24}{40} x^{12}+\frac{33}{62} x^{13} -\frac{17}{2} x^{14} \\ & -\frac{39}{82} x^{15}-\frac{55}{43} x^{16}-\frac{26}{47} x^{17}+\frac{46}{4} x^{18}-\frac{48}{26} x^{19}+\frac{35}{83} x^{20}-\frac{50}{100} x^{21}-\frac{60}{65} x^{22}+\frac{66}{36} x^{23}-\frac{43}{76} x^{24}+\frac{30}{24} x^{25}+\frac{18}{28} x^{26}-\frac{96}{51} x^{27} \\&+\frac{49}{42} x^{28}-\frac{41}{89} x^{29}+\frac{81}{90} x^{30}-\frac{65}{57} x^{31}-\frac{70}{64} x^{32}-\frac{50}{26} x^{33}+\frac{91}{40} x^{34}+\frac{52}{68} x^{35}-\frac{91}{99} x^{36}-\frac{79}{59} x^{37}+\frac{15}{93} x^{38}-\frac{56}{42} x^{39}-\frac{20}{59} x^{40}\\&+\frac{50}{62} x^{41}-\frac{27}{77} x^{42}+\frac{28}{53} x^{43}-\frac{36}{75} x^{44}\\
	\\ %
	\mathrm{P_4(x)=}\quad& -20 -6 x-50 x^{2}-95 x^{3}+35 x^{4}-64 x^{5}+77 x^{6}-56 x^{7}+18 x^{8}-94 x^{9}-74 x^{10}-69 x^{11}-62 x^{12}-93 x^{13}-4 x^{14}-41 x^{15} \\& -47 x^{16}-48 x^{17}-95 x^{18}-41 x^{19}+29 x^{20}+76 x^{21}+70 x^{22}-67 x^{23}-91 x^{24}-93 x^{25}-55 x^{26}-34 x^{27}-67 x^{28}-61 x^{29}\\&-8 x^{30}+32 x^{31}+8 x^{32}-33 x^{33}-27 x^{34}-8 x^{35}+88 x^{36}+53 x^{37}-28 x^{38}-66 x^{39}-72 x^{40}-46 x^{41}+15 x^{42}-19 x^{43}+29 x^{44}	\\
	\\
	\mathrm{P_5(x)=}\quad& (- \frac{93}{47} - \frac{49}{8} i ) +
	(\frac{187}{47} + \frac{547}{88} i) x +
	(- \frac{203}{67}) + \frac{538}{11} i) x^2 +
	( \frac{2}{67} - \frac{1181}{24}) x^3 +
	( \frac{133}{81} - \frac{25}{24}) x^4 +
	 \\ &(\frac{2111}{5670} + \frac{71}{12}) x^5 +
	(\frac{5949}{70} - \frac{64}{15}) x^6 +
	(- \frac{305}{3} + \frac{18}{5}) x^7 +
	(\frac{4067}{255} - \frac{411}{89}) x^8 +
	(- \frac{9}{68} + \frac{2669}{4895}) x^9 +
	(- \frac{3}{20} + \frac{4}{55}) x^{10} \\
	\\
	\mathrm{P_6(x)=}\quad& (51 - 83 i) +
	(- 82 + 29 i ) x +
	(- 37 - 6 i) x^2
	(1 + 45 i ) x^3 +
	(145 - 57 i) x^4 +
	(- 10 + 17 i) x^5 +
	(- 39 + 22 i) x^6 \\ & +
	(40 - 35 i) x^7 +
	(- 112 - 27 i) x^8 +
	(106 - 2 i) x^9 +
	(- 63 +97 i ) x^{10} \\
	\end{align*}
	\caption{Some example polynomials.} \label{fig:selected_polynomials}
\end{figure*}

\begin{table}
	\caption{Applying various procedures to count the number of real roots over an interval.}
	\label{tab:real_roots}
	\begin{minipage}{\columnwidth}
		\begin{center}
			\begin{tabular}{@{}ccccc@{}} \toprule
				& \multicolumn{4}{c}{Time (s)} \\ \cmidrule(r){2-5}
				Polynomial & Sturm  & Ex\_Sturm & Fourier & Descartes\\ \midrule
				$P_1$ & 12.123  & 23.418 & .002 & .002  \\
				$P_2$ & 1.612 & 1.742  & .001 & 0  \\
				$P_3$ & 322.569  & 524.975  & .007 & .007  \\
				$P_4$ & 8.894 & 13.425 & .003 & .003  \\
				\bottomrule
			\end{tabular}
		\end{center}
	\end{minipage}
\end{table}

\begin{table}
	\caption{Counting the number of complex roots within the upper half-plane and a ball.}
	\label{tab:complex_roots}
	\begin{minipage}{\columnwidth}
		\begin{center}
			\begin{tabular}{@{}ccc@{}} \toprule
				& \multicolumn{2}{c}{Time (s)} \\ \cmidrule(r){2-3}
				Polynomial & proots\_upper  & proots\_ball \\ \midrule
				$P_5$ & 4.359  & 24.509  \\
				$P_6$ & 0.633  & 0.256   \\
				\bottomrule
			\end{tabular}
		\end{center}
	\end{minipage}
\end{table}

\section{Conclusion} \label{sec:conclusion}

In this paper, we have strengthened the existing root-counting tools in Isabelle. In particular, we have 
\begin{itemize}
	\item formalised a proof of the Budan-Fourier theorem, and thereby implemented the Descartes roots test,
	\item extended our previous formalisation of the classical Sturm theorem to count real roots with multiplicity,
	\item applied part of the results above to improve our previous complex-root-counting procedures by allowing roots on the border in the half-plane case and providing a procedure to count roots within a ball.
\end{itemize}
The proofs described in this paper are about 6000 LOC in total, and took around 6 person-months to complete.

As counting polynomial roots is a fundamental topic in computer algebra and numerical computing, we believe our verified routines will be of use when certifying continuous systems and for coding tactics.

\begin{acks}                          
	Both Li and Paulson were supported by the ERC Advanced Grant ALEXANDRIA (Project 742178), funded by the European Research Council. We thank Manuel Eberl and Ren\'e Thiemann for various fruitful discussions and for pointing us to the approach of counting multiple roots through square-free factorisation. We are also grateful for suggestions from anonymous reviewers.	 
\end{acks}

\bibliography{bibfile}



\end{document}